\theoremstyle{definition}
\newtheorem{example}{Example}[section]
\newtheorem{definition}{Definition}[section]
\newtheorem{theorem}{Theorem}[section]
\DeclareMathOperator{\EX}{\mathbb{E}}% expected value
\DeclareMathOperator{\Var}{\mathrm{Var}}
\newcommand{\bias}[1]{\mathrm{Bias}_R[#1]}
\newcommand{\var}[1]{\mathrm{Var}[#1]}
\newcommand{\cov}[1]{\mathrm{Cov}[#1]}
\newcommand{\corr}[1]{\mathrm{Corr}[#1]}
\newcommand{\maker}{{m}}
\newcommand{\taker}{{t}}
\newcommand{\real}{{r}} 
\newcommand{\makerVal}{{m_i}}
\newcommand{\takerVal}{{t_i}}
\newcommand{\realVal}{{r_i}}
\newcommand{\makerRV}{{M}}
\newcommand{\takerRV}{{T}}
\newcommand{\realRV}{{R}}   % alternatives : G,V,X,Y,Z
\newcommand{\realFunc}{{\sf r}}
\newcommand{\makerFunc}{{\sf m}}
\newcommand{\takerFunc}{{\sf t}}
\newcommand{\return}{{\rho}}
\newcommand{\un}{{\sf unif}}
\newcommand{\opt}{{\sf sharpe}}
\title{Beating the market with a bad predictive model}
\author{
Ond\v{r}ej Hub\'{a}\v{c}ek 
% \\ \normalsize hubacon2@fel.cvut.cz
% \email{hubacon2@fel.cvut.cz}
\and
Gustav \v{S}\'{i}r 
% \\  \normalsize souregus@fel.cvut.cz
% \email{souregus@fel.cvut.cz}
\footnote{
corresponding author: {gustav.sourek@fel.cvut.cz}           %  \\
%             \emph{Present address:} of F. Author  %  if needed
        %   \and
        %   Filip Zelezny \at
        %   \email{zelezny@fel.cvut.cz}
        %   \and
        %   Ondrej Kuzelka \at
        %   \email{kuzelon2@fel.cvut.cz}
        %   \at
        % \\
        % Dept. of Computer Science \\
        % Faculty of Electrical Engineering \\
        % Czech technical University in Prague
}
}
\date{%
    Department of Computer Science\\
    Faculty of Electrical Engineering\\
    Czech Technical University in Prague\\
}
\begin{document}

\maketitle

\begin{abstract}
% It is a common misconception that in order to make profit in a market, a trader needs to posses a superior predictive model to the market maker, reflected through better estimates of outcome probability than a bookmaker, or better price prediction than an ``average'' market trader.
It is a common misconception that in order to make consistent profits as a trader,
% as a market taker,
one needs to posses some extra information leading to an asset value prediction more accurate than that reflected by the current market price.
% a superior predictive model to the market maker, i.e. a model that provides predictions more accurate than those implied by the current market prices, at least for subparts of the market where one actually utilizes the predictions in trading.
While the idea makes intuitive sense and is also well substantiated by the widely popular Kelly criterion, we prove that it is generally possible to make systematic profits with a completely inferior price-predicting model. The key idea is to alter the training objective of the predictive models to explicitly decorrelate them from the market, enabling to exploit inconspicuous biases in market maker's pricing, and profit on the inherent advantage of the market taker.
% indent?
We introduce the problem setting throughout the diverse domains of stock trading and sports betting to provide insights into the common underlying properties of profitable predictive models, their connections to standard portfolio optimization strategies, and the, commonly overlooked, advantage of the market taker. Consequently, we prove desirability of the decorrelation objective across common market distributions, translate the concept into a practical machine learning setting, and demonstrate its viability with real world market data.

% \todo{pridat ze hrajeme jenom market takera - je to dulezite}
\end{abstract}

% https://papers.ssrn.com/sol3/papers.cfm?abstract_id=2438121

% \todo{zmenit vsude model 'm' na trader 't' a bookmaker 'b' na maker 'm'}

\section{Introduction}

The attempt to predict the future is at the core of any successful trading strategy. Traders continuously try to predict future prices of assets, such as stocks or commodities, to secure profits on their future positions. Similarly, bettors try to predict discrete future outcomes of random events, such as elections or sports matches, to secure profits on their realizations. Having a high-quality future prediction, one can confidently estimate the expected returns associated with the potential decisions to allocate assets and select securities for successful investments and speculations. The quality of each prediction then naturally reflects the amount of information used for the estimation. In an efficient market, it is assumed that all available information is already reflected in the current market price, and it is thus impossible to make any consistent profits through trading with such predictions~\cite{malkiel1989efficient}.
It intuitively follows that in a partially inefficient market, where the market price reflects some but not all of the information, one needs the extra information, leading to predictions better than those reflected by the current market price, in order to profit.

Indeed, much of the existing literature on profiting in an inefficient market is based on the idea of well-informed investors, whose superior knowledge of the given domain enables to predict the future prices, or estimate outcome probabilities, more closely than the market, commonly quoting some aggregated estimation of the rest of the trading participants. This in turn enables them to secure profits using common investment strategies such as those of Markowitz~\cite{markowitz1952portfolio} and Kelly~\cite{kelly1956new}, the performance of which is tightly connected to such an information advantage.
While possessing more information leading to superior price estimation indeed enables to profit in this straightforward manner, we argue that in order to make positive returns, it is generally not necessary to estimate the true price better than the market. Instead, there is another and generally easier way to exploit market inefficiencies.

The core principle we utilize in this paper is the, commonly overlooked, inherent advantage of the market taker over the market maker, where the two price estimators are eventually penalized for their estimation errors in a very different manner. Consequently, in order to make profits, the trader does not need to estimate the true price of a mispriced asset better than the market and, instead of predictive pricing accuracy, should generally strive to optimize different performance measures. Specifically, we show that trader's profits derived from a price estimator can be systematically increased by decreasing its partial correlation with the market, i.e. by decorrelating its residual estimation error w.r.t. the market.
While this might seem similar in spirit to the common idea of profiting from using ``extra information'', we demonstrate that the trader's estimates can be easily based on inferior data sources and their information-theoretic value w.r.t. true price can be considerably lower than that reflected by the market. Consequently, the respective trader's predictive model can thus be considered as ``bad'' by the means of regular quality measures.

% in order to detect and secure profit on a mispriced asset, one does not need to estimate its true price better than the market, but merely to estimate the signum of the market price error, which is a considerably easier task, providing the market taker with an inherent advantage.
% estimate it in a particular, sufficiently different manner.

% , and also boosts profits for almost any other reasonable investment strategy.

While the proposed concept might seem theoretical in nature, it has important implications on common investment practices based on predictive modelling and portfolio optimization. Particularly, the current systems typically assume the asset price prediction and the subsequent investment decision making as two independent tasks. Namely, at first a statistical model is optimized to predict the future price as closely as possible which, translated into an expected rate of return, forms an input to the subsequent investment strategy designed to maximize some profit-based utility. While this typical workflow has the advantage of decomposing the problem, enabling for a more explicit risk management, we argue that it is considerably suboptimal since the two optimization tasks are not aligned, i.e. the predictive model accuracy optimization is oblivious of the subsequent profit-focused optimization. Consequently, maximizing model price estimation quality will commonly lead to suboptimal final profits.
In contrast, with the proposed decorrelation objective, we target the profits more directly. This enables to generate consistent positive returns even with no information advantage and completely inferior price predicting models.

The paper is structured as follows.
We introduce the necessary background, ranging from market modelling to portfolio optimization, in Section~\ref{sec:back}. In Section~\ref{sec:insights}, we provide insights into essential conditions for model profitability, limitations of popular investment strategies, and the inherent market taker's advantage. Section~\ref{sec:decorrelating} then introduces the proposed concept of increasing profits through decorrelating from the market, its connection to the Kelly strategy, and translation into a machine learning objective. We then demonstrate the concept in a practical setting in Section~\ref{sec:experiments}. Finally, we review the related work in Section~\ref{sec:related} and conclude in Section~\ref{sec:conclusion}.

% We show that minimizing the partial correlation corresponds to maximal profits, and show that it generally improves profits over a wide range of common investment strategies, with an interesting exception of the plain Kelly investment. Finally, we demonstrate effectiveness of the concept on real data from a highly efficient sports betting market.

% \todo{struktura}

% it has the crucial limiting requirement for the estimator to provide highly accurate predictions, since inaccurate predictions will lead to biased estimates of profit and risk of the investments.

\section{Background}
\label{sec:back}

% Please add the following required packages to your document preamble:
% \usepackage{booktabs}
% Please add the following required packages to your document preamble:
% \usepackage{booktabs}
\begin{table}

\begin{tabular}{@{}llll@{}}
\toprule
\textbf{symbol}                                   & \textbf{description}                                                          & \textbf{stock trading}       & \textbf{two-way betting}        \\ \midrule
$a_j \in \mathcal{A}$                             & assets being traded on the market                                             & stocks, securities           & sport matches \\
$\{\alpha,\beta\};~\zeta \in \mathrm{Z}$                      & market sides, in two-way markets $|\mathrm{Z}| = 2$                                  & buy/sell (bid/ask)           & home/away (back/lay)          \\
$\omega_{(a_j,\tau)}^{\zeta};~ \omega_i \in {\Omega}$   & opportunities to trade assets at time $\tau$                                        & e.g. buy $a_j$ now            & e.g. bet $a_j$ home            \\

$\mathcal{D}_i \subset \mathcal{D}^*$             & relevant data available for opportunity $\omega_i$                       &                              &                               \\

$\realFunc : \mathcal{D}_i^* \to \mathbb{R}_+$     & fundamental value $r$ of an asset                                                 & true value           & true probability                  \\
$\makerFunc : \mathcal{D}_i^b \to \mathbb{R}_+$     & market maker $\maker$, or simply ``the market''                             & market maker                & bookmaker                    \\
$\takerFunc : \mathcal{D}_i^m \to \mathbb{R}_+$     & market taker $\taker$, or simply trader                                       & trader                &      bettor              \\

$\mathrm{P}_{\Omega}$                                      & distribution of market opportunities                                          &   all possible trades                           &      all possible bets                         \\

$\realRV: \omega_i \mapsto \real_i$                  & r.v. denoting fundamental asset value     & price $r_i \in \mathbb{R}_+$ & probability $r_i \in [0,1]$   \\
$\makerRV: \omega_i \mapsto \maker_i$                 & r.v. denoting market maker's pricing        & price $\maker_i \in \mathbb{R}_+$ & probability $\maker_i \in [0,1]$   \\
$\takerRV: \omega_i \mapsto \taker_i$                  & r.v. denoting market taker's pricing estimate  & price $\taker_i \in \mathbb{R}_+$ & probability $\taker_i \in [0,1]$   \\

$\mathrm{P}(\realRV,\makerRV,\takerRV)$             & distribution of the price estimates (values)                              &                              &                               \\

% $\sf{r} : \mathcal{D}_i^* \mapsto \mathbb{R}_+$     & fundamental value of an asset                                                 & fair price            & true odds                     \\
% $\sf{b} : \mathcal{D}_i^b \mapsto \mathbb{R}_+$     & market maker, or simply ``the market''                             & market maker                 & bookmaker                     \\
% $\sf{m} : \mathcal{D}_i^m \mapsto \mathbb{R}_+$     & market taker (trader)                                       & trader                 &      bettor                \\

$\theta \in \Theta$     & model parameters                                       &                  &                      \\
$\sf{\Tilde{e}}: \mathcal{D}_i \to \mathrm{{\Tilde{E}_i}}$    & posterior distribution $\mathrm{{\Tilde{E}}}$ estimator $\sf{\Tilde{e}}$ (e.g. $~\Tilde{\takerFunc}$)                                &                              &                               \\

$\tau$     & time associated with a trade                                      &                  &                      \\

$\epsilon \in \mathbb{R}_+$                                        & market maker's spread (margin)                             & spread                       & margin                        \\
$\return_i \in \mathbb{R}$                                            & trading returns (profit)                                                 & rate of return               & return on investment          \\
$\sf{W} \in \mathbb{R}_+$                                          & wealth of the market taker                                                    & capital                      & bankroll                      \\
$\bm{f} \in \mathbb{R}^n$                                          & vector of the wealth allocations                        & allocation                   & wager                         \\ \bottomrule
\end{tabular}
\caption{Overview of the used notation with the correspondence between the stock and betting market settings.}
\label{tab:notation}
\end{table}

% \todo{+outcomes+odds}

% \todo{srovnat vsude notaci, vylepsit statisticke znaceni - co je promenna, funkce, distribuce...- SLOVNICEK ZNACENI a pojmu}
% \todo{ze umoznujeme shortovani, tj. vzdy lze prodavat}
% \todo{Profitable trading is inherently based on the ability to predict the future. This applies to all markets ranging from stock}

% \todo{neco o futures and options? - zero sum}

\subsection{Problem Setup}
\label{sec:background}
\label{sec:back_setup}

In this paper we assume a standard market setting where two or more parties participate in an exchange of some forms of assets. For the sake of this paper it is not important to distinguish between securities or other financial instruments, and we further refer to all these jointly as \textit{assets}.
While the analysis in this paper covers a wide variety of generic market settings, ranging from commodities to prediction markets, we will mostly consider two diverse enough examples of (i) stock and (ii) sports betting markets to keep the explanation grounded. Similarly, we do not restrict to any particular asset class, which can vary widely from financial derivatives to fantasy sports, however for clarity of explanation, we will consider a standard ``bid-ask'' setting for the stock, and a corresponding two-way\footnote{While we limit the explanation to two-way markets for clarity, the main concept proposed in this paper is directly applicable to $n$-way markets, too.} sports betting market, such as predicting the winner of a basketball game. Table~\ref{tab:notation} provides an overview of the joint notation used throughout the paper and across the two domains.

% \todo{udelat z techto paragraphs definitions?}

\paragraph{Opportunity}
In a two-way market, each asset $a \in \mathcal{A}$ tradeable at time $\tau$ corresponds to an opportunity to buy ($\alpha$) and sell ($\beta$) the asset, respectively. Given a certain time $\tau$, this covers bids and asks for a certain stock, as well as bets on the home ($\alpha$) and away ($\beta$) team win of a certain sport match.
We will further refer to all existing market positions to trade a certain asset $a$ at a certain time $\tau$ jointly as \textit{opportunities} $\omega^{\zeta}_{(a,t)} \in {\Omega}$.
We further treat each opportunity $\omega^{\zeta}_{(a,t)}$ as an independent trading unit denoted $\omega_i$ since, for the analysis in this paper, it is not necessary to distinguish the asset or time that each opportunity is associated with\footnote{i.e. two opportunities $\omega_i$ and $\omega_j$ over the same asset at two different times will be treated the same as two opportunities $\omega_i$ and $\omega_j$ over two different assets.}. Nevertheless, where necessary, we will distinguish the side of the market $\zeta \in \{\alpha,\beta\}$ an opportunity $\omega_i^\zeta$ is to be traded at.

% \todo{opp je asset+time bez market side, aby mela stejnou value}

\paragraph{Price}
Each opportunity $\omega_i$ is then associated with a certain {price}. In a stock market, it is simply the current price set up by the market maker to trade the asset, as used in the usual sense.
% , which we consider through discrete steps. 
In the prediction markets, the price reflects the bookmaker's perceived probability of an outcome to occur. It is also commonly referred to as ``odds''\footnote{which we use more specifically later to refer to the app. inverse of the probability value (Section~\ref{sec:betting})} that determine the potential payouts received from a wager.
% Particularly, for the two possible outcomes of an event traded in a two-way sports betting market, we will refer to these as ``home'' (ask) and ``away'' (bid) odds, respectively.
While the particular calculations naturally differ between monetary prices and odds (Section~\ref{sec:exp_return}), where it is not necessary to distinguish between the two, we will refer to these jointly as \textit{prices}. We note that the market price of a buying opportunity $\omega_i^\alpha$ may be different from the price to sell $\omega_i^\beta$ (Section~\ref{sec:maker_adv}), and similarly for the odds.

% \todo{define odds as inverse prob. here}
% \todo{fundament = future}
\paragraph{Fundamental Value}
We further assume that each opportunity $\omega_i$, i.e. an asset being traded at a given time, has some fundamental \textit{true value} $\realVal \in \mathbb{R}_+$, which can be expressed as a positive real number and by those means {compared} to some price estimate (Section~\ref{sec:estimators}) in the same unit of measurement. For instance in a stock market, a fair market price should reflect the true expected value of the underlying corporation, given the current information context. Similarly in the betting market, fair odds would reflect the true (inverse) probability of the associated outcome to happen. Note that the true value of an asset $a$ at time $\tau$ is always the same for both sides ($\alpha,\beta$) of the market.
While the concept of a fundamental value $\real$ can be seen as somewhat speculative, since it cannot be directly measured, we note that we merely require its theoretical existence for defining market efficiency (Section~\ref{sec:market_efficiency}) and the related concepts (Section~\ref{sec:exp_return})\footnote{We also commonly use the term true fundamental value interchangeably with the mean future price since, assuming the same information context, the latter can be expected to converge to the former in the partially efficient markets assumed (Section~\ref{sec:market_efficiency}).}.

\paragraph{Makers and Takers}
We commonly distinguish two types of the exchange participants as (i) market \textit{makers} ${\maker}$, also referred to as bookmakers, and (ii) market takers ${\taker}$, also referred to as \textit{traders}.
The market makers continuously quote both sides ($\alpha,\beta$) of the market at certain {price} levels $\makerVal$ resulting into trading opportunities $\omega_i \in \Omega$. By continuously \textit{providing} such quotes of asset prices, the makers bring constant liquidity to the market.
% They \textit{provide} opportunities to both buy and sell, associated with prices of the given asset, which brings liquidity to the market. 
The market takers are then \textit{selecting} from the existing opportunities $\omega_i$ to issue specific buy and sell orders. We generally consider the problem analysed in this paper as a two-player game between a market maker ${\maker}$ and taker ${\taker}$, where each player possesses a certain pricing policy $\Omega \to \mathbb{R}_+$ over the market distribution of opportunities $\mathrm{P}_{\Omega}$ given by the game (world) environment. We further assume the two players to never switch their roles.

% \todo{srovnat i=t v case, vyresit buy sell opportunity jako jednu pro zarovnani s odds}

\paragraph{Beating the market}

We then aim to design a strategy for the role of the market taker, i.e. some generic trader $\taker$, to make positive profits against some particular market maker $\maker$. For example, such strategy would allow a bettor to make consistent long-term profits while wagering against a particular bookmaker. Note that this is a zero-sum setting, where the profits of the market taker are at the direct expense of the market maker, as the fundamental value is objectively the same for all the participants\footnote{We use this to remove trivial non-zero-sum settings, where all the participants can mutually profit from trading, from the scope. While we acknowledge that different traders may subjectively evaluate an opportunity $\omega_i$ differently, e.g. due to distinct preferences for risk tolerance (Section~\ref{sec:exp_return}), this would allow to trivially avoid the hard part of the introduced problem.}.
% We assume a common setting where all the traders merely speculate to make profits on the price movements within the same time-frame, where the fundamental value will be objectively the same for all the participants. 
This is also the typical setting for a closed system of traders operating within the same time-frame, such as in intra-day trading, futures, options, or prediction markets. 
% Given a closed system of traders, this then creates a zero-sum environment where 
The important aspect here is that the only way to make positive profits in this setting is to ``beat the average trader'', quoting the market price, by some margin (Section~\ref{sec:maker_adv}). In this paper, we then utilize the phrase “beating the market” to refer to profiting in this strictly competitive setting.

% \todo{ze chceme jen speculations, short-term, ne long term growth}
% We further assume that the fundamental value is  and thus equivalent for all the participants, whose only goal is to maximize their monetary profits.
% This creates a zero-sum environment, where each two participants profit at a direct expense of the counterpart of each trade.

\subsection{Market Efficiency}
\label{sec:market_efficiency}
% \todo{start with efficient market principles}
In a (strongly) efficient market, the current market price $\maker \in \mathbb{R_+}$ of an asset $a$ reflects all existing information, making it impossible for any trader to make consistent profit by outsmarting the market\footnote{Note that this does not imply that one needs comparably more information to beat an inefficient market.} \cite{fama1970efficient}. Particularly, the price $\makerVal$ of each opportunity $\omega_i$ being traded needs to be an \textit{unbiased} estimate (Section~\ref{sec:estimators}) of its fundamental true value $\realVal$~\cite{pilbeam2018finance}. Note that this does not imply the price $\makerVal$ to be equal to the true value $\realVal$, but merely that the error of the estimation $\makerVal=\hat{\realVal}$ is fully due to an irreducible {variance} which is \textit{completely random} (Section~\ref{sec:price_estimators})~\cite{samuelson2016proof}. The inherent randomness of the error then ensures that no trader can consistently beat the market to secure systematic profits.

In real world liquid markets, profit-maximizing traders continuously search for under-priced bids or over-priced asks to secure their profits, pushing the market price to quickly converge to the true value in the process. Thanks to this self-regulating mechanism, the market inefficiencies tend to vanish quickly over time~\cite{Stekler2010,fama1998market}. We note that the idea of an efficient market is a rather theoretical one, since in a market that would become completely efficient, the traders would loose the incentive to search for inefficiencies, in turn making the market inefficient again. Any real world market can thus be hardly considered as perfectly efficient, nevertheless some measurable degree of efficiency, such as statistical unbiasedness of the mean prices (Section~\ref{sec:est_dist}), is typically present in liquid markets \cite{franck2010prediction}. In this paper, we will further consider the most realistic setting of a partially (in-)efficient market where the market price $\maker$ is a very good estimate of the true value $\real$, but not a perfect one.

\subsection{Market Maker's Advantage}
\label{sec:maker_adv}
% \todo{vyrazne zkratit}
% That is an asset $a$ at time $\tau$ is being sold and bought at two slightly different prices $\makerRV(\omega^{\alpha}) \neq B(\omega^{\beta})$, respectively.

% By the definition of her role, the market maker then continues to evaluate each opportunity $i$.

Market makers are essential to trading by providing constant liquidity to both sides $(\alpha,\beta)$ of the market, for which they are typically favored by the exchange operator in terms of fees and commissions. However, the position of a market maker is a difficult one, since she needs to constantly price the assets as accurately as possible. An estimate too low can lead to exploitation on the buy side $\alpha$, and an estimate too high on the sell side $\beta$, respectively. While the true value $\real$ is an unknown real number, it would be in principle impossible to avoid exploitation in full.

To improve her position and secure profit, the market maker incorporates a so called \textit{spread} $\epsilon$ (margin) into her estimates $\makerRV$, causing the offered bid opportunity $\omega^\alpha_i$ price to differ from the ask $\omega^\beta_i$ price by some $2\epsilon$. The spread $\epsilon$ then works as a safety patch on the market maker's estimation error, preventing from one-sided exploitations by the market takers aiming at the discrepancy from the true value. A safe strategy for the market maker is then to set her estimate and spread such that $\forall i : \maker_i - \epsilon < \real_i < \maker_i + \epsilon$, making it impossible for any trader to make any profit, while securing herself an instant profit of $2\epsilon$ for each pair of orders traded at the opposing sides of the book.

% A straightforward implementation is to subtract a small offset $\epsilon$ to the estimates $\makerRV$ for the bids, and add it for asks, respectively,
% \begin{equation}
%     \makerRV = 
%     \begin{cases}
%             \makerRV - \epsilon, & \text{for asks $\alpha$}\\
%             \makerRV + \epsilon, & \text{for bids $\beta$}
%     \end{cases}
% \end{equation}
% securing an instant profit of $2\epsilon$ for each pair of orders traded at the opposing sides of the book.

% \begin{equation}

% \end{equation}

However, given that $\maker=\hat{\real}$ is merely an estimate, it is still possible for the true value $\real$ to fall outside the $(\maker - \epsilon,\maker + \epsilon)$ region\footnote{Naturally, this could be mitigated by increasing the $\epsilon$, however a margin set too large will discourage investors from trading, consequently removing the market maker's profit, too.}. To further mitigate possible exploitation by the traders, the market maker can continuously adapt her estimate to the traders' behavior. That is she can responsively move her estimate $\maker$ once the demand of one side of the market starts to prevail, indicating expected value perceived by the traders, stemming from a possibly erroneous price estimate $\maker$. In the ideal case where she is continuously able to maintain a perfectly balanced book, so that half of the orders fall on the buy side and the other half on the sell side respectively, she is again guaranteed a unit profit of $\epsilon$ per trade. Note that the market maker's profit in this case is independent of the true price $\real$.

On the other hand, the market maker can theoretically digress from this purely reactive position to actively speculate against the takers' opinions and aim at a profit even higher than $\epsilon$, at the cost of involving risk stemming from her, possibly erroneous, estimate $\maker$. This effectively allows the market maker to speculate on the true price, which can lead to higher expected profits in settings where she possesses a superior price prediction model\footnote{Since the market maker typically needs an in-depth knowledge of the market to operate, she can often reasonably expect to also possess price estimates superior to the average trader.}. This is very common, for instance, with bookmakers in the sport prediction markets~\cite{ottaviani2008favorite}. Naturally, the market makers can combine the aforementioned methods.

We note we mostly omit the spread $\epsilon$ in calculations and simulations further in this paper for simplicity\footnote{except for the actual experiments with real data (Section~\ref{sec:experiments}) where the spread is naturally present.}. The underlying assumption is that the spread constitutes an independent offset on the market prices (and the resulting profits), and thus does not interfere with the main concepts introduced in this paper.

% \todo{revise everything to include spread? Or excuse it is omitted...}

\subsection{Market Modelling}
% \todo{vyrazne zkratit}
\label{sec:market_modelling}

% \todo{ze jde o informace na vstupu a vlastnosti funkce ktera je mapuje}
% \todo{ }

% \todo{define opportunity}
% The analysis in this paper applies to most of the common market types and asset classes being traded such as stocks, commodities or securities, but also to some of the less common settings such as prediction and betting markets (Section~\ref{sec:sports}).

% We assume a market that is not fully efficient (Section~\ref{sec:market_efficiency}), i.e. there must be some mispriced opportunities $\realVal \notin (\makerVal - \epsilon,\makerVal + \epsilon)$ which further need to be identifiable by some systematic means.
% This can be done either with expert knowledge of the domain, or with statistical analysis attempting to extract the relevant knowledge directly from market-related data. In this paper we take the latter approach, which also begins to prevail in trading practice~\cite{}.

Market modelling generally refers to the approach of fitting a statistical estimator $\sf{e}$ to the available market-related data $\mathcal{D}$ in order to capture its true distribution $\mathrm{P}$. Possession of such a model then enables to answer all sorts of statistical queries, including the essential estimation of the real value $\realRV$ of market opportunities ${\Omega}$.

% The quality of the estimation of the fundamental value then follows directly from two factors, (i) quality of the data $\mathcal{D}^{\taker}$ available to the trader $\taker$, which can possibly provide superior information which is not reflected in the market price, and (ii) quality of the model $\takerFunc$ used to fit the data, since different information w.r.t. $\real$ can possibly be derived with different methods, despite the same data source $\mathcal{D}^{\taker}$ given.

% \todo{move to market modelling?}
% Alternatively, we might want to estimate the whole posterior distribution $e_{R|\mathcal{D}}$
% \begin{equation}
%     e_{R|D=\mathcal{D}}(R)
% \end{equation}
% \todo{explicitly define value expectation here...aby na to pak mohlo navazovat expected return...a aby sli pak pouzit probabilities v KL advantage}

\paragraph{Data}
The market-related data $\mathcal{D}$ may come from various external sources such as news and economic signals and indicators, as well as from the market itself (e.g. the traders' behavior).
% Commonly, for an initial estimate during opening of a new market, such as initial public offerings (IPO) at the stock or opening odds at the betting markets, respectively, the (book)maker can use a mixture of past and present relevant data $\mathcal{D}^\maker$ for the asset pricing $\makerFunc$. However, as the market evolves, she will also incorporate data from the market itself, such as demand of the investors on each side. 
Recall that in an efficient market, all relevant data $\mathcal{D}^*$ must be used by the market maker $\maker$ for pricing of each opportunity $\forall i: \makerFunc(\mathcal{D}_i^*) \mapsto \makerVal$, however, in practice it is more likely that $\mathcal{D}_i^\maker \subset \mathcal{D}_i^*$.
The market takers can theoretically use the same data as the market maker $\mathcal{D}^\taker = \mathcal{D}^\maker$, although their sources are typically more limited (e.g. Section~\ref{sec:data}). However, they commonly strive to gain at least some information advantage by obtaining data which are not available to $\maker$, i.e. $\mathcal{D}^\taker \setminus \mathcal{D}^\maker \neq \varnothing$, and thus not reflected in the market price. Generally if $\mathcal{D}^\taker \subset \mathcal{D}^\maker$, such information advantage is completely missing (e.g. Section~\ref{sec:results}), making it impossible to beat the market through superior price estimations\footnote{However, we note that it is still possible to make profits in such a scenario (Section~\ref{sec:essence}).}, unless using a superior model.

\paragraph{Modelling}

The models used to fit the data are generally some $\theta$-parameterized functions, the properties of which also influence the quality of the estimation. Ultimately, one would strive to model the whole joint distribution over the data $\mathcal{D}$, enabling to answer all possible probabilistic queries about the domain, consequently leading to truly optimal investment decision making. Nevertheless, the common investment strategies (Section~\ref{sec:strategies}) are typically based merely on estimates of returns from the opportunities $\omega_i$ at hand, which restricts the task to modelling of a conditional of the true value $\mathrm{P}_\Omega(\realRV|\mathcal{D}_i)$.

For instance at the stock market, an estimate of the true value $\realVal$ at a given time can be modelled from a retrospective observation of local evolution of the price time series, i.e. one can use the model to repeatedly predict the (mean) future market price within some time interval (information context) used for trading. Similarly, an estimate of the true probability $\realVal$ in the betting market can be derived from repeated observations of the outcomes of the associated stochastic events aggregated over a large enough sample from the market.

% Nevertheless, this is commonly too (data) demanding, for which we typically strive to model some essential conditional of the distribution instead. This is done in a supervised learning setting where we try to predict one target variable from the data given the rest. In our case, this is to predict the true value of an opportunity, i.e. modelling $\mathrm{P}(R|\mathcal{D})$ where $\realRV$ is the random variable representing the true value and $\mathcal{D}$ corresponds to the data associated with each opportunity. Note that for optimal investment decision making we would ultimately want the full distribution of 

% \begin{equation}
%     \mathrm{P}(\realRV,\mathcal{D}) = \mathrm{P}(\realRV|\mathcal{D}) \cdot \mathrm{P}(\mathcal{D})
% \end{equation}
% for which we however typically lack the knowledge of the prior distribution of the market data $\mathrm{P}(\mathcal{D})$.

% \todo{data vs opportunities}

% \todo{zadefinovat m,b,r, pozor na pojmenovani vs. funkce vs. values vs. Random var.}

\paragraph{Real Value}
Generally, the true value $\realVal$ of an asset $a$ at time $\tau$ can be a distribution itself, such as in the prediction markets. For instance, in an $n$-way betting market where there are $n$ possible outcomes, the true value of each asset is the probability distribution over the $n$ discrete outcomes. However, in the case of the two-way market with binary exclusive outcomes we use for demonstration, it is a simple Bernoulli distribution with a single parameter $\realVal$. Therefore, the true value $\realVal$ of an opportunity $\omega_i$ can still be considered as a single number $\realVal \in [0,1]$, since the value of the complementary opportunity is simply determined by $1-\realVal$. Consequently, there is a direct correspondence with the bid-ask stock market setting which we exploit for comprehensibility in this paper\footnote{However the proposed decorrelation concept (Definition~\ref{def:decorrelation}) can be directly extrapolated into n-way markets, too.}. It follows that we also consider both sides of the stock market to be available, i.e. with a short selling option, for consistency\footnote{although this is also not necessary for the approach to work.}. As a result, we can consider every opportunity $\omega_i$ to be associated with a single real value $\realVal \in \mathbb{R}_+$ in this paper.

% ranging from future value of an asset and its expected market price, to outcomes of a sports matches.

% Also, we mostly demonstrate the problem in a two-way market where participants buy and sell a single asset, 

% For the theoretical considerations, the only important assumption being made is the existence of mispriced assets in the market.

% Naturally, these can be further combined, and one can also exploit known or learned relationships between evolution of price of certain assets and securities from the market.

% \todo{zadefinovat a rozlisit multiple (2) exclusive outcomes a multiple independent opportunities}

% \todo{odlisit znaceni posterior distribuce odhadu vs. market distribution}

% \todo{znaceni oportunit stock vs prediction markets. Na betting markets jde vsadit na kazdy bin/outcome posteriorni distribuce, vysledky jsou exkluzivni. Na stock markets jde jen o mean value, posteriorni distribuce nas jako celek nezajima}

% \paragraph{Uncertainty}
\paragraph{Mispricing}

We assume a market that is not fully efficient (Section~\ref{sec:market_efficiency}), i.e. there must be some mispriced opportunities which further need to be identifiable by some systematic means. The goal of the traders is then to identify such opportunities where $\realVal \notin (\makerVal - \epsilon,\makerVal + \epsilon)$ by comparing the market price $\makerVal$ to their own estimate $\takerVal$. Should the market price $\makerVal$ for an opportunity $\omega_i$ actually deviate from the true value $\realVal$ in a non-random manner, such mispricing can be turned into positive returns.
% Commonly, the prediction of the true value of an opportunity is considered essential to the subsequent profit making in each domain (Section~\ref{sec:exp_return}). 
% For instance in the stock market, people use time-series models trained on past and present data to forecast the price after a certain time period in the future. Should the predicted price differ from the current price or, more generally, should the expected value of the future price distribution systematically deviate from the current price, one can turn such mispricing, if actually present, into future profit. % \footnote{In an efficient market, this should not be possible for any time period, at least in theory.}.
% Likewise in the prediction markets, one tries to predict discrete outcomes of future stochastic events, each of which is associated with certain return on investment by the bookmaker. Should the probability (odds) distribution outlined by the bookmaker deviate from the true distribution, which can be again estimated with statistical models based on past data, the trader can turn such mispricing, if recognized correctly, into future profit.
While the concrete calculation of the expected returns differs across the two market settings (Section~\ref{sec:exp_return}), for the sake of this work it does not matter whether we try to predict future price distribution or unknown outcome probability distribution, as in both cases their correct estimation by $\takerFunc$ leads to profits of the trader $\taker$ in the (partially) inefficient market setting assumed. Naturally, this depends on the qualities of the underlying price estimators.

% \todo{je potreba nejdriv vysvetlit naprimo ruzne typy markets a jak se v nich dela estimation/prediction na zaklade information}

\subsection{Price Estimators}
% \todo{priat odkazy na obecne zdroje o statistice}
\label{sec:estimators}
\label{sec:price_estimators}

% \todo{parameterization, quality, data-sets - nadmnoziny}

A price estimator $\sf{e}$ is generally a $\theta$-parameterized function mapping some input data $\mathcal{D}_{i}$ associated with an opportunity $\omega_i$ onto a point price estimate $\hat{\realVal} \in \mathbb{R}_+$

\begin{equation}
    \sf{e} : \mathcal{D}_{i} \mapsto \hat{\realVal}
\end{equation}

However, every prediction is associated with a certain level of uncertainty, which is either inherent\footnote{Note that much of the information is often not just missing but principally unavailable, rendering the prediction problem inherently stochastic.} or stemming from the missing information at the time of making ($\mathcal{D}_i^{\taker} \subset \mathcal{D}_i^*$). One might thus want to quantify the uncertainty by associating each possible estimate $\hat{\realVal}$ for an opportunity $\omega_i$ with a probability, resulting into a posterior distribution estimation

\begin{equation}
    \Tilde{\sf{e}}: \mathcal{D}_{i} \mapsto \mathrm{\Tilde{R}}_i
\end{equation}
where $\mathrm{\Tilde{R}}_i$ is the estimated price distribution $\mathrm{\Tilde{P}}({\hat{\realRV}_i}|\Omega=\omega_i)$ for a single opportunity $\omega_i$. Such distribution $\mathrm{\Tilde{R}}_i$ can be estimated, e.g., from histograms of past prices or event outcomes, marginalized over the same or ``similar'' conditions ($\mathcal{D}_{i}$)~\cite{altman1992introduction}.
When a point price estimate is needed, such as when we need to actually trade an asset at a particular price, it is common to take an expected value from $\mathrm{\Tilde{R}}_i$, calculated by multiplying each point estimate $\hat{\realRV_i}={\hat{\real_j}}$ with its associated probability estimation $\mathrm{e}_j$ (or probability density $\mathrm{e}({\hat{\real}})$):

\begin{equation}
    \EX_{\mathrm{\Tilde{R}}_i}[\sf{\Tilde{e}}(\mathcal{D}_{i})] = \sum_j \mathrm{e}_j \cdot {\hat{\real_j}} \text{~~~~~~~~~or alternatively~~~~~~~~} \EX_{\mathrm{\Tilde{R}}_i}[\sf{\Tilde{e}}(\mathcal{D}_i)] = \int {\hat{\real}} \cdot \mathrm{e}({\hat{\real}})~\mathrm{d}{\hat{\real}}
\end{equation}
We note that most of the machine learning models provide directly the point estimates, realizing some functional mapping $\Omega \to \mathbb{R}$.

% \todo{spojit nejak e a M}

\paragraph{Point Estimates}
The aforementioned (middle) market price $\makerVal \in \mathbb{R}_+$ of an opportunity $\omega_i$ can then be thought of as the market maker's point estimate $\makerVal = \makerFunc(\mathcal{D}^\maker_{i})$ of the true price $\realVal$. Similarly, the market taker will also try to predict the actual true value, which we represent with her own point estimate $\takerVal = \EX_{\mathrm{\Tilde{\takerRV}}_i}[\takerFunc(\mathcal{D}^{\taker}_{i})]$, or simply $\takerVal = \takerFunc(\mathcal{D}_i^\taker)$.
By the definition of her role, the market maker $\maker$ continuously evaluates each opportunity $\forall i: \omega_i \to \makerVal$. We generally assume that the trader $\taker$ also has the ability to estimate (predict) the true price $\forall i: \omega_i \to \takerVal$ of each opportunity in the market\footnote{Alternatively, a systematically selected subset can be considered instead.}.
The trader $\taker$ then must posses her own estimate, i.e. be generally different from $\maker$\footnote{since e.g. copying the estimate from the market maker would not lead to any trading incentive.}, and the true value $\real$ must be unknown at the time of trading, otherwise there would be no incentive to trade. Note that the true value is often unknown in principle, i.e. even retrospectively, such as the probability in the betting markets, indicating the need for statistical treatment of the problem.
% Apart from the inability to observe the full data $\mathcal{D}^*$, we
% Note however

% \todo{vysvetlit proc je dulezite prejit od uvazovani nad jednotlivymi konkretnimi hodnotami na statisticke usuzovani...(protoze hodnoty nejsou primo meritelne) }

\begin{definition}
\label{def:market_dist}
We can now generalize the reasoning about individual opportunities and estimates to reasoning over the whole joint \textit{market distribution} $\mathrm{P}_\Omega(\realRV,\makerRV,\takerRV)$ capturing the relationships between the estimates across a whole set of opportunities $\omega_1,\dots,\omega_n \in {\Omega}$ generated by the market environment.
For each such opportunity $\omega_i$, we will thus operate with 3 distinct values we will refer to as (i) the true value $\realRV=\realVal$, (ii) the market maker's estimate $\makerRV=\makerVal$ and (iii) the market taker's estimate $\takerRV=\takerVal$. 
\end{definition}

Given the uncertainty, it follows that both the $\makerVal$ and $\takerVal$ estimates are always going to be to some extent erroneous, where the former guaranties existence of mispriced assets. Note that this is a necessary but insufficient condition for market inefficiencies to exist (Section~\ref{sec:market_efficiency}).

% defined as

% \begin{equation}
% \label{eq:compare}
% b \neq r
%     \begin{cases}
%         b > r \implies \text{overpriced asset}\\
%         b < r \implies \text{underpriced asset}
%     \end{cases}
% \end{equation}

% We can then 

% The error $err(b)$ in the $\maker$ estimate, reflecting some form of discrepancy between $r$ and $\maker$, guaranties existence of profitable opportunities, given that liquidity is provided on both sides to buy and sell the asset\footnote{and ignoring the spread, further clarified in Section~\ref{}}.

\paragraph{Optimization}

Typically, one optimizes the estimator by tuning its parameters $\theta \in \Theta$ to fit some historical market data $\mathcal{D}^\taker$ relevant for the prediction of the underlying true value $\realRV$.
Estimation of unknown parameters $\theta$ from empirical data $\mathcal{D}$ is then one of the key problems in statistics~\cite{friedman2001elements}. There are several views on the problem. One class of approaches is to maximize probability of the observed data with methods such as maximum likelihood $\mathrm{p}(\mathcal{D}|\theta)$ or maximum a-posteriori $\mathrm{p}(\theta|\mathcal{D})$ estimation. One can also directly search for an estimator with some desired target properties, such as minimum-variance unbiased estimator~\cite{voinov2012unbiased} or best linear unbiased estimator~\cite{henderson1975best}. A very common methodology is Bayesian estimation, both with or without an informative prior, where one tries to minimize (posterior) expectation of some error function~\cite{berger2013statistical}. We take the latter approach, while noting that there are close connections between all the approaches.

\paragraph{Estimation Error}

The quality of an estimator $\sf{e}$ can then be expressed through its empirical error $err(\sf{e})$ over some set of opportunities ${\Omega}$. Since we assume the role of the trader $\taker$, we further present error measurements between the true values $\realRV$ and the trader's estimates $\takerRV$. Some of the most popular error measures then include the mean square error ($MSE$):

\begin{equation}
\label{eq:mse}
    MSE_{{\Omega}}(\realRV,\takerRV) = \EX [(\takerRV - \realRV)^2] = \frac{1}{|{\Omega}|} \sum_{\omega_i \in {\Omega}} (\takerVal - \realVal)^2
\end{equation}
which is commonly used for regression tasks, such as the prediction of the true price $\realVal \in \mathbb{R}_+$, and the mean crossentropy ($XENT$):

\begin{equation}
    XENT_{{\Omega}}(\realRV,\takerRV) = \EX_{\realRV}[ -log(\takerRV)] = \frac{1}{|{\Omega}|} \sum_{\omega_i \in {\Omega}} \sum_{j \in outcomes} \real_j\cdot log(\taker_j)
\end{equation}
which is commonly used for classification tasks, such as predicting one of the discrete outcomes (e.g. $\{home,away\}$) of a sports match. Note that these correspond to the market sides $\zeta$. In the case of the two-sided markets corresponding to binary event outcomes, we can rewrite $XENT$ as

\begin{equation}
    XENT_{{\Omega}}(\realRV,\takerRV) = \frac{1}{|{\Omega}|} \sum_{\omega_i \in {\Omega}} r_i\cdot log(\taker_i) + (1-\real_i) \cdot log(1-\taker_i)
\end{equation}
which can then also be also understood as a regression of the underlying value $\real_i$ of each opportunity $\omega_i$.

% \todo{mean xent}

% \todo{mse lepsi na regresi ceny/psti a xent na klasifikaci, tj. neprimou regresi psti z vyslednych pozorovani}

These particular error functions are of special interest as minimizing $XENT$ generally corresponds to maximizing the data (log-)likelihood, while $MSE$ corresponds to maximizing the data likelihood with a linear gaussian model~\cite{lehmann2006theory}.
The crossentropy error is then also closely linked to a common measure of ``distance'' between probability distributions known as Kullback-Leibler divergence~\cite{kullback1997information}, which is defined as
\begin{equation}
\label{eq:KL}
    D_{KL}(\realRV||\takerRV) = - \sum_i \realVal \cdot log \frac{\takerVal}{\realVal}
\end{equation}
since
\begin{equation}
    XENT(\realRV,\takerRV) = H(\realRV) + D_{KL}(\takerRV||\realRV)
\end{equation}
where $H(R)$ is the entropy of the true value distribution $\realRV$. Considering that true distribution being estimated does not change, its entropy $H(\realRV)$ can be considered a constant, rendering the cross-entropy error $XENT(\realRV,\takerRV)$ minimization equivalent to minimizing the the KL-divergence $D_{KL}(\realRV||\takerRV)$, sometimes also referred to as the relative entropy~\cite{berger2013statistical} (see Section~\ref{sec:acc_profit} for further connections).

% \todo{now - errors and likelihoods}

%  The error of the estimators can now be considered by statistical means, such as by the mean square error $err(B) = \frac{1}{n} \sum_i (\makerVal - \realVal)^2$, and thought of in the terms of its bias and variance~\cite{}.
\paragraph{Bias and Variance}

The estimation error can be commonly though of in terms of bias and variance, which are key concepts in the analysis of estimators' quality~\cite{lehmann2006theory}. Bias quantifies the expected difference between the price estimate and the true value:
\begin{equation}
    \bias{\takerRV} = \EX [\takerRV] - \realRV
\end{equation}
while variance is the expected squared distance from the mean estimate:
\begin{equation}
    \var{\takerRV} = \EX[(\takerRV - \EX [\takerRV])^2]
\end{equation}

While we naturally want to minimize both the quantities, this is typically hard to do as they commonly reflect opposing criteria, giving rise to the so called ``bias-variance dilemma'', which is a central problem in statistical learning~\cite{friedman2001elements}. 
A common approach in statistics is to strive for a minimum-variance unbiased estimator, giving good results in most practical settings~\cite{voinov2012unbiased}. The constraint for unbiasedness reflects the necessity for the estimator not to differ systematically from the true price, i.e. the estimation should be correct on average. This constraint is very natural in the proposed market setting, since a systematically biased market price would be easy to directly detect and exploit (Section~\ref{sec:est_dist}).
Note also that any minimum-variance mean-unbiased estimator minimizes the $MSE$, which can be alternatively expressed as:
\begin{equation}
    MSE(\takerRV,\realRV) = \EX [(\takerRV - \realRV)^2] = (\EX [\takerRV] - \realRV )^2 + \EX[(\takerRV - \EX [\takerRV])^2] = \bias{\takerRV}^2 + \var{\takerRV}
\end{equation}

% , and fundamental irreducible noise.
% \todo{crossentropy}

% The error $err(m)$ in $\taker$ is then a natural limitation of the investors knowledge and is generally different from $err(b)$.This can be done by maximizing the data likelihood or by minimizing the given error\footnote{this is sometimes equivalent, such as with the mean square error minimization.}. 

% \todo{estimators - bias and variance, MLE and MSE}

\subsection{Expected Returns}
\label{sec:exp_return}

% \todo{move to Problem Setting?}

% \todo{zde zavest rate of return a ROI}

Being able to predict the future price in a stock market, or estimate the true probability in a prediction market, can be directly turned into positive returns in trading.
Note we do not explicitly distinguish between the mean future price and true value, as in the fairly efficient market setting assumed (Section~\ref{sec:market_efficiency}), the market price $\makerRV$ cannot systematically deviate from the true value $\realRV$ for too long. Similarly in the prediction markets, the relative frequency of the observed outcomes will approach the true probability distribution in the long run. Being able to correctly estimate either of these thus guaranties systematic profits, even though actual profits might deviate from the expectation in short term.

Since the total profit is dependent on the actual amount invested, which is yet to be determined by the investment strategy (Section~\ref{sec:strategies}), the traders commonly assess \textit{relative} profitability of individual opportunities through measures such as rate of return or, without assuming any time period, return on investment (ROI), which we further denote as $\return_i$. In general, $\return_i$ is simply the relative return made from a unit investment w.r.t. the market price $\makerVal$ and the true value $\realVal$.
% \begin{equation}
%     \return_i = \frac{\realVal - \makerVal}{\makerVal}
% \end{equation}
% An excess return $\return$ is then sometimes used to express the net relative return of $\realRVOI - 1$.

% In practice, the expected return estimation is typically performed through some form of backtesting against historical market data, where the investors retrospectively simulate various investment scenarios, leading to possibly different returns, from which the expectation can be estimated.

The uncertain, stochastic nature of the prediction problem renders the value estimates $\hat{r}_i$ for a particular $\omega_i$ as random variables $\hat{r}_i \sim \mathrm{\Tilde{R}}_i$ (Section~\ref{sec:estimators}). Consequently, a return $\return_i$ derived from such an estimate $\hat{r}_i$ will be a random variable, too. Instead of the actual $\return_i$ we can thus again calculate with its expectation $\EX[{\return_i}]$ w.r.t. the used estimator $\sf{e}$, further denoted as $\EX_{\sf{e}}[\return_i]$. For the role of a trader $\taker$, we can then define an expected $\return_i$ of an opportunity $\omega_i$ based on estimator $\takerFunc$ of the true price $\realVal$ of some underlying asset as

\begin{align}
\label{eq:profit_stock}
    \EX_{\takerFunc} [\return^{\alpha}_i] = \EX_{\Tilde{\takerRV}_i} \Bigg[ \frac{\Tilde{\takerRV}_i - \makerRV(\omega_i^\alpha)}{\makerRV(\omega_i^\alpha)} \Bigg] = \frac{\takerVal - \makerVal^\alpha}{\makerVal^\alpha} &~~~~~~~~~~~~~~~& \EX[ \return^{\beta}_i] = \EX_{\Tilde{\takerRV}_i} \Bigg[ \frac{\makerRV(\omega_i^\beta) - \Tilde{\takerRV}_i}{\makerRV(\omega_i^\beta)} \Bigg] = \frac{\makerVal^\beta-\takerVal}{\makerVal^\beta}
\end{align}
where $\maker_i^\zeta$ is the market price at which the trader executes the buy ($\zeta=\alpha$) or (short-)sell ($\zeta=\beta$) orders, respectively. For example when we buy a stock for $\$100$ with the expectation to sell it later for $\takerVal=\$150$ on average, our expected ROI is $\return_i=0.5$ ($50\%$).

In a prediction market, the market price (odds) directly reflect the relative returns, and so the expected $\return_i$ of an opportunity $\omega_i$ based on a probability estimated by $\takerFunc$ can be calculated as
\begin{align}
\label{eq:profit_bet}
    \EX_{\takerFunc} [\return^{\alpha}_i] = \EX_{\Tilde{\takerRV}_i} \Bigg[ \frac{\Tilde{\takerRV}_i}{\makerRV(\omega_i^\alpha)} -1 \Bigg] = \frac{\takerVal}{\makerVal^\alpha} -1 &~~~~~~~~~~~~~~~& \EX[ \return^{\beta}_i] = \EX_{\Tilde{\takerRV}_i} \Bigg[ \frac{\Tilde{\takerRV}_i}{\makerRV(\omega_i^\beta)} -1 \Bigg] = \frac{1-\takerVal}{\makerVal^\beta} - 1
\end{align}
where $\maker_i^\zeta$ is the probability estimated by the bookmaker for the binary outcomes of home ($\zeta=\alpha$) and away ($\zeta=\beta$) team wins, respectively. For example, a bet of $\$100$ on an outcome with estimated probability $\takerVal=\frac{3}{4}$, and associated bookmaker's (decimal) odds of $\frac{1}{\makerRV(\omega_i)}=2.0$, i.e. yielding a net return of $\$100$ if realized, and a loss of $\$-100$ if not, will also result into a ROI of $\return_i=0.5$ ($50\%$).

% \todo{udelat z toho definici?}

% The mere existence of the correctly identified opportunities does not guarantee an overall profitability, since one can be loosing even more money on some other opportunities. We thus have to consider the whole distribution $\mathrm{P}(\realRV,\makerRV,\takerRV)$, particularly, we will be interested in the relative frequency and returns of the occurring situations.

Although the average returns should converge to the true expected returns in the long run, these can still be very different from the predicted expected returns since generally $\EX_{\takerFunc}(\return_i) \neq \EX_{\realFunc}(\return_i)$. The discrepancy is of course conditioned by the quality of the predictor $\takerFunc$ w.r.t. the true $\realFunc$. The approach to minimize the prediction error (Section~\ref{sec:estimators}) then seems very natural, and there are also some theoretical guaranties like, for instance, in the case where $XENT(\realRV,\takerRV) < XENT(\realRV,\makerRV)$, i.e. the investor possesses a price prediction model superior to the market maker in terms of cross-entropy, we are guaranteed to make long-term profits with optimal investment routines such as the Kelly strategy (Section~\ref{sec:strategies}).
% \todo{vic o stochasticie a expectation}
% \todo{popsat backtesting}

\paragraph{Risk and Utility}
To explicitly account for the discussed uncertainty involved in trading, the concept of \textit{risk} assessment has been proposed. This means that instead of direct maximization of the expected returns, one should strive for a balance between the expected profit and risk, stemming from the uncertainty. The risk can then be quantified by statistical means such as the variance of the expected profit~\cite{markowitz1952portfolio} or probability of a drawdown~\cite{busseti2016risk}.
Note that apart from the quantifiable risk, there is also a structural risk stemming from the fact that, similarly to the expected returns, the assessment of risk is based on merely estimated parameters.

Not all investors then share the same preferences to balance the expected profit and risk in the same manner. To incorporate individual preferences into the decision making, the concept of a \textit{utility function} $\sf{u}$ has been proposed to steer the investment optimization process. A utility function generally maps each alternative onto a real number, defining a total ordering over some set of alternative investments. In our case it is some monotonically growing function $\sf{u}$ transforming the net returns $\return$ into a new real quantity $\sf{u}(\return)$ to be optimized\footnote{Given the stochastic setting, we will again consider expected utility $\EX[\sf{u}(\return)]$ instead.}.
The concept of risk is then closely connected to utility, as maximizing any concave utility function directly reflects a preference for risk aversion~\cite{arrow1965aspects}. 

% \todo{rovnice na Arrow–Pratt measure of absolute risk aversion?}

% The utility function can then reflect custom risk-aversion preferences as well as some practical limitations of the investors present in real-world trading.

% The risk can be stemming from the discussed error in the model estimates or from the stochastic nature of the asset or market itself. 

\subsection{Investment Strategies}
\label{sec:strategies}

An investment strategy can be seen as the final step of the traders's workflow.
Given the expected returns from individual trading opportunities present at a time, the trader needs to decide on how to \textit{allocate} her wealth across the available opportunities $\Omega$ in order to optimize her utility ${u}$, i.e. some requested trade-off between expected returns and risk. Formally, the investment strategy is a function $\sf{s}$ mapping a vector of opportunities $\bm{\omega}$ associated with the estimated returns $\bm{\return}$ onto a wealth allocation vector $\bm{f}$:

\begin{equation}
% \sf{s} \bm{f} \bm{\return}
     \text{\sf{s}} :  \bm{\return} \mapsto \bm{f} ~~~\text{  ~~~~~i.e.~~~~ } \sf{s}: \mathbb{R}^n \to \mathbb{R}^n
\end{equation}
The vector of the wealth allocations $\bm{f}$, corresponding to portions of some current wealth $\sf{W}$, is then often referred to as a \textit{portfolio} over the opportunities $\bm{\omega}$. There are several approaches to this problem and we will briefly review some of the most popular ones~\cite{li2014online}.

\subsubsection{Uniform Portfolio}
\label{sec:strat:unit}
The most trivial investment strategy is a uniform, unit-staking strategy where one independently allocates the same absolute amount $d$ on every opportunity with an assumed positive expected return:

\begin{equation}
    \sf{s} : \return_i \mapsto 
    \begin{cases}
        f_i = d, & \text{if }~ \return_i > 0\\
        f_i = 0, & \text{otherwise}
    \end{cases}
\end{equation}

Despite being very naive, this strategy is also most robust against estimation errors~\cite{pflug20121}, since the allocation simply remains constant no matter the circumstances. Note that the individual investments $d$ are not considered as fractions relative to the current wealth $\sf{W}$ here, and so a small enough unit $d \ll \sf{W}$ has to be chosen so that it is possible to invest into all profitable opportunities. Given that the allocated unit $d$ is small enough, this strategy will typically also have a very conservative risk profile, nevertheless the expected portfolio profits can be way below optimal. The size of $d$ then remains a hyperparameter the choice of which is left to the user.

\subsubsection{Modern Portfolio Theory}
\label{sec:MPT}
% \todo{Gaussian assumption -> risk = variance, ale risk muze byt slozitejsi}
% \todo{asi zkratit}

% The expected profit can be possibly transformed by a utility function (Section~\ref{}) reflecting some particular preferences.

A more principled approach is that of the Modern Portfolio Theory (MPT) \cite{markowitz1952portfolio} which strives to balance optimally between the expected return and risk. The general idea behind MPT is that a portfolio $\bm{f^1}$, i.e. a vector of asset capital allocations $\bm{f} = f_1, \dots, f_n$ over some opportunities $\omega_1,\dots,\omega_n$, is superior to $\bm{f^2}$, if its corresponding expected return $\return$ (Section~\ref{sec:exp_return}) is at least as great
\begin{equation}
    \EX_{\takerFunc}[\bm{\return} \cdot \bm{f^1}] \geq \EX_{\takerFunc}[\bm{\return} \cdot \bm{f^2}]
\end{equation}
and a given risk measure $risk : \mathbb{R}^n \to \mathbb{R}$ of the portfolio w.r.t. the returns is no greater
\begin{equation}
    risk_{\EX_{\takerFunc}[{\return}]}(\bm{f^1}) \leq risk_{\EX_{\takerFunc}[{\return}]}(\bm{f^2})
\end{equation}
This creates a partial ordering on the set of all possible portfolios. When combined into a joint utility, we can trade-off the expected profit vs. risk by maximizing the following
\begin{equation}
    \underset{\bm{f} \in \mathbb{R}^n}{\text{maximize}} ~(\EX_{\takerFunc}[\bm{\return} \cdot \bm{f}] - \gamma \cdot risk_{\EX_{\takerFunc}[{\return}]}(\bm{f}))
\end{equation}
where $\gamma$ is a hyperparameter reflecting the user's preference for risk.

In the most common setup, the $risk$ of a portfolio $\bm{f}$ is measured through the expected total variance of its profit $\var{\bm{\return} \cdot \bm{f}} = \bm{f}^T\Sigma \bm{f}$, based on a given covariance matrix $\bm{\Sigma}_n^n$ of returns of the individual opportunities, which can be again estimated from historical data (Section~\ref{sec:market_modelling}). MPT can then be expressed as the following constrained maximization problem:
\begin{equation}
\label{eq:MPT}
\begin{aligned}
& \underset{\bm{f} \in \mathbb{R}^n}{\text{maximize}}~
& & \EX_{\takerFunc}[\bm{\return}\cdot\bm{f}]  - \gamma \cdot \bm{f}^T\Sigma \bm{f}\\
& \text{subject to}
& & \sum_{i=1}^{n} f_i = 1
\end{aligned}
\end{equation}

Note that the capital allocations sum up to one as they simply reflect fractions of the current bankroll $\sf{W}$, and they can possibly be negative if short selling is enabled.

The main weakness of MPT is that the variance of profit is hardly a good measure of risk for profit distributions other than Gaussian~\cite{rom1994post}. Apart from the variance $\var{\bm{w}}$ of the potential net returns $\bm{w} = \bm{\return} \cdot \bm{f}$, different risk measures have been proposed~\cite{markowitz1952portfolio}, such as standard deviation $\sigma(\bm{w}) = \sqrt{\var{\bm{w}}}$ and coefficient of variation $CV(\bm{w}) = \frac{\sigma(\bm{w})}{\EX[\bm{w}]}$. Nevertheless these all share the same weakness. Generally, there is no {agreed-upon} measure of risk, rendering the whole concept a bit dubious. Moreover, the strategy only works with the opportunities $\bm{\omega}$ currently at hand, and thus ignores any knowledge about the actual market distribution $\mathrm{P}_\Omega$.

\paragraph{Sharpe Ratio}
Apart from the choice of the risk measure, the inherent degree of freedom in MPT is how to select a particular portfolio from the efficient frontier (based on the choice of $\gamma$). Perhaps the most popular way to avoid the dilemma is to select a spot in the pareto-front with the highest expected profits w.r.t. the risk. For the risk measure of $\sigma(\bm{w})$, this is known as the ``Sharpe ratio'' \cite{sharpe1994sharpe}, generally defined as
\begin{equation}
\frac{\EX_{\takerFunc}[\bm{w}] - r_f}{\sigma(\bm{w})}
\end{equation}
where $\EX[\bm{w}]$ is the expected return of the portfolio, $\sigma(\bm{w})$ is the standard deviation of the return, and $r_f$ is a ``risk-free rate''. We do not consider any risk free investment in our setting, and so we can reformulate the optimization problem as
\begin{equation}
\begin{aligned}
& \underset{\bm{f} \in \mathbb{R}^n}{\text{maximize}}
& & \frac{\EX_{\takerFunc}[\bm{\return} \cdot \bm{f}]} {\sqrt{\bm{f}^{T}\bm{\Sigma}\bm{f}}} \\
& \text{subject to}
& & \sum_{i=1}^{n} f_i = 1
\end{aligned}
\end{equation}

\subsubsection{Kelly Criterion}
\label{sec:back_Kelly}

% \todo{Kelly needs the whole distribution, while MPT needs only the mean and variance! }
% http://prac.im.pwr.wroc.pl/~hugo/HSC/AMF/1)%20Financial%20Application%20-%20The%20Kelly%20Criterion

% \todo{ze Kelly je multi-period a MPT single-period, coz je simplification}

% \todo{prevod expected returns na probability of "winning"}

% \todo{define growth rate}

The Kelly criterion{~\cite{kelly1956new, thorp2008kelly}} assumes the investment problem in time\footnote{Note this is in contrast to MPT which assumes the problem in an ensemble of traders at the same time, i.e. through expectation.}, i.e. it optimizes multi-period investments in contrast to MPT which is concerned only with single-period portfolio returns. It is based on the idea of expected multiplicative growth $W_{\sf{G}}$ of a continuously reinvested bankroll $\sf{W}_{\tau}$. The goal is to a find a portfolio $\bm{f}$ such that the long-term expected value of the resulting profit $\sf{W}_{\tau\to\infty}$ is maximal, which is equivalent to maximizing the geometric growth rate of wealth defined as

\begin{equation}
    W_{\sf{G}} = \underset{t\to\infty}{lim}~log \Bigg(\frac{W_t}{W_o}\Bigg)^{\frac{1}{t}}
\end{equation}

For its multiplicative nature, it is also known as the geometric mean policy, emphasizing the contrast to the arithmetic mean approaches (e.g. MPT) based directly on the expected value of wealth.
The two can, however, be looked at similarly with the use of a logarithmic utility function, transforming the geometric into the arithmetic mean, and the expected geometric growth rate into the expected value of wealth, respectively. The problem can then be again expressed by the standard means of maximizing the (estimated) expected utility value as

\begin{equation}
\begin{aligned}
& \underset{\bm{f} \in \mathbb{R}^n}{\text{maximize}}
& & \EX_{\takerFunc}[\log(1 + \bm{f}^T \cdot \bm{\return})]\\
& \text{subject to}
& & \sum_{i=1}^{n} f_i = 1
\end{aligned}
\end{equation}
Note that, in contrast to MPT, there is no explicit term for risk here, as the notion of risk is inherently encompassed in the growth-based view of the wealth progression, i.e. the long-term value of a portfolio that is too risky will be smaller than that of a portfolio with the right risk balance (and similarly for portfolios that are too conservative). The risk is thus captured by the logarithmic (concave) utility transformation itself.

The calculated portfolio is then provably optimal, i.e. it accumulates more wealth than any other portfolio chosen by any other strategy in the limit of $\tau\to\infty$. However, this strong result only holds given, considerably unrealistic, assumptions~\cite{kelly1956new, thorp2008kelly, peters2016evaluating}. Similarly to MPT, we assume to know the true returns while calculating merely with estimates and additionally, given the underlying growth perspective, that we are repeatedly presented with the same opportunities from $P_\Omega$ ad infinitum,
making the optimality of the growth-based risk treatment in Kelly likewise a bit dubious. Despite the fact that the given conditions are impossible to meet in practice, the Kelly strategy is very popular, particularly its various modifications to mitigate the aforementioned issues.

\paragraph{Fractional Kelly}
The result of the Kelly optimization problem is, for each opportunity, the ideal fraction $\omega \mapsto f^*$ one is ought to invest to achieve the maximal long-term profits. The fraction $f^*$ thus dictates an upper-bound on the possible profit, meaning that increasing the invested fraction further will actually decrease the long-term profit\footnote{this is due to the assumed multiplicative, growth-based view of Kelly, which is in contrast to the additive MPT, where overbetting would merely increase the risk.}. This is commonly known as ``overbetting''. Since the true expected return $\rho$ is unknown, however, such a situation might occur even while betting with a fraction assumed to be optimal. Intentionally decreasing the calculated fraction $f^*$ by some ratio $\frac{1}{d'}$ then decreases the risk of overbetting stemming from a possibly overvalued estimate. Such an approach is commonly referred to as ``fractional Kelly''~\cite{maclean1992growth}. Ideally, one should estimate the optimal shrinkage ${d'}$ as another hyperparameter~\cite{baker2013optimal,uhrin2019sports} based on backtesting performance, however, it is very common to simply choose a fixed ratio such as $\frac{1}{2}$ of the estimated optimal Kelly fraction $f^*$, commonly referred to as ``half Kelly'' by practitioners. While there are other remedies to mitigate the risk with the Kelly criterion~\cite{busseti2016risk, sun2018distributional}, fractional Kelly is a very effective method which is widely adopted in practice due to its simplicity. In addition to mitigating the overbetting risk, it generally decreases volatility, which also tends to be considerably high with the plain Kelly criterion.

\paragraph{Correspondence to MPT}
% \todo{correspondence to MPT -quad kelly}
\label{sec:kelly-MPT}

While Kelly is clearly based on different principles than MPT, there is an interesting close connection between the two strategies. Following~\cite{busseti2016risk}, let us make an assumption for a Taylor series approximation that our net profits are not too far from zero $\bm{\return}^T\cdot{\bm{f}} \approx \bm{0}$,
allowing us to proceed with the Taylor expansion of the optimized growth as
\begin{equation}
    \log(1 + \bm{\return}^T \cdot \bm{f}) = \bm{\return}^T \cdot \bm{f} - \frac{(\bm{\return}^T \cdot \bm{f})^{2}}{2} + ...
\end{equation}
Now taking only the first two terms from the series we transform the expectation of logarithm into a new problem objective as follows
\begin{equation}
\begin{aligned}
& \underset{\bm{f \in \mathbb{R}^n}}{\text{maximize}}
& & \EX\big[\bm{\return}^T \cdot \bm{f} - \frac{(\bm{\return}^T \cdot \bm{f})^{2}}{2}\big] \\
\end{aligned}
\end{equation}
Note that, interestingly, the problem can now be rewritten to
\begin{equation}
\begin{aligned}
& \underset{\bm{f} \in \mathbb{R}^n}{\text{maximize}}
& & \EX[\bm{\return}^T \cdot \bm{f}] - \frac{1}{2}\EX[\bm{f}^T (\bm{\return} \cdot \bm{\return}^T) \bm{f}] \\
& \text{subject to}
& & \sum_{i=1}^{n} f_i = 1.0, ~f_i \geq 0
\end{aligned}
\end{equation}
corresponding to the original MPT formulation from Equation~\ref{eq:MPT} for the particular user choice of $\gamma=\frac{1}{2}$.
It follows from the fact that the geometric mean is approximately the arithmetic mean minus $\frac{1}{2}$ of variance~\cite{markowitz1952portfolio}, providing further insight into {the} connection of the two popular strategies of Kelly and Markowitz, respectively. While the solution is merely an approximation, it also tends to be more robust to estimation errors than the original Kelly, similarly to the fractional approach.

% \todo{quadratic Kelly is more robust to errors (wiki)}

% \todo{zaklady...zatim bez vazby na dekorelaci}

% This leads to the idea of expected utility, where instead

% \todo{vysvetlit jak se z price estimate prejde na obchod/investici a profit}

% \subsubsection{Prediction Markets}
% Prediction markets allow participants to bet on outcomes of future events. Examples of such markets include predicting winners of elections, competitions and cultural events...
% The common part is that
% % Prediction or betting markets is where two or more parties
% zminit binary nature - buy/sell, overpriced/underpriced

\section{Problem Insights}
\label{sec:insights}

We have introduced the main parts of a common workflow of a trader $\taker$, who relies on a statistical estimator $\takerFunc$ to predict true value $\realRV$ of market opportunities $\Omega$ based on some available relevant data $\mathcal{D}^\taker$, with the resulting estimates of the expected returns $\EX_{\takerFunc}[\return]$ being fed into some subsequent portfolio optimization strategy $\sf{s}$ to produce final wealth allocations $\bm{f}$. In this Section, we provide some key insights into the problem of profitability from the perspective of the predictive model $\takerFunc$.

Let us briefly recall the problem setup. We generally consider the problem of profiting from the trader's $\taker$ perspective as a stochastic game against the market maker $\maker$.
% Both players are attempting to estimate the true value $\realRV$. 
The market maker $\maker$ uses an estimator $\makerFunc$ to continuously price the incoming opportunities $\Omega$. Following some investment strategy, the trader $\taker$ then takes particular $\alpha$ and $\beta$ actions (allocations) upon these opportunities $\Omega$, based on her own estimates produced by $\takerFunc$.
Given some distribution of market opportunities $\mathrm{P}_{\Omega}$, we can then set up the game in terms of three random variables $\realRV,\makerRV,\takerRV$ corresponding to the true value, market maker's, and market taker's estimates, respectively. 
The goal of the trader (as well as the market maker) is then to maximize her expected (long-term) profits $\sf{W}$ as measured by some utility $\sf{u}$ underlying the chosen strategy $\sf{s}$.

\subsection{From Accuracy to Profit}
\label{sec:acc_profit}

The key issue with the optimal investment strategies based on portfolio optimization is that they are inherently relying on accurate estimates of the asset returns. Their performance is then directly stemming from the quality of these estimates - the better the estimates, the higher the utility of the portfolio can be achieved in general.
While this holds to an extent for most of the common portfolio optimization strategies\footnote{the correspondence between Kelly and MPT is shown in Section~\ref{sec:back_Kelly}}, it is best demonstrated on the optimal investment approach of Kelly.
% ,
% \subsubsection{From Advantage to Profit}

% \todo{bez marginu: bet size = probability}

For simplicity of demonstration, let us consider an idealized case of an $n$-way betting market with no margin (Section~\ref{sec:maker_adv}) on the market maker's odds. Recall that the Kelly strategy is to find wealth fractions $\bm{f}$ so as to
\begin{equation}
    \underset{\bm{f}}{\text{maximize}~~} \EX_{\realRV}[log(\bm{f}^T \cdot (1+\bm{\return}))] = \sum_{i=1}^n \realVal \log(f_i \cdot \frac{1}{\makerVal}), \text{~~~~s.t.~~} \sum_{i=1}^n{f_i} = 1
\end{equation}

Note that in this idealized case, we calculate the expectation of returns w.r.t. the true distribution $P_{\Omega}(\realRV)$. It can then proved~\cite{cover2012elements} that the solution to this constrained optimization problem yields

\begin{equation}
    \bm{f}^* = \bm{\real}
\end{equation}
i.e. the optimal fraction of wealth $f_i$ to invest in each outcome (opportunity) $\omega_i$ is directly equal to the underlying true value (probability) $\realVal$. Interestingly, we can see that in this case, the optimal strategy for the investor is to completely ignore the market pricing $\makerVal$ and focus solely on having the true values predicted correctly, in which case she is guaranteed the maximal possible long term profits. This is commonly known amongst Kelly practitioners as ``betting your beliefs''. Note that this strong result was derived from $\EX_{\realRV}$ and thus it only holds if the true distribution $P_{\Omega}(\realRV)$ is known or, more precisely, if the error in its estimate via $\takerRV=\hat{\realRV}$ as measured through the Kullback-Leibler divergence (Section~\ref{sec:estimators}) is zero~\cite{cover2012elements}:

\begin{equation}
    D_{KL}(\realRV||\takerRV) = - \sum_{i=1}^n \realVal\cdot log \frac{\takerVal}{\realVal} = 0
\end{equation}

% The original concept of the Kelly criterion relies heavily on the knowledge of the expected rate of returns of the individual opportunities. This is however never the case in investment practice and neither in our problem setting. Instead, we rely on estimates of the true rate of returns stemming from a statistical model, the quality of which can vary. Now in the case of the Kelly investment, there is a close connection between the statistical quality of an estimator and the resulting profits growth rate. Particularly, we will be interested in the Kullback-Leibler divergence between the true and estimated distributions. 

Naturally, it is close to impossible to estimate the true distribution $P_{\Omega}(\realRV)$ perfectly in practice\footnote{We exclude scenarios with known artificial distributions such as in casino games.}. Let us thus extrapolate into more practical settings by relaxing the condition into a non-zero $D_{KL}(\realRV||\takerRV)$.
Given that the optimal fractions $\bm{f}$ should be equal to the true outcome probabilities $\bm{\real}$, let us substitute back into the long term growth rate of wealth which Kelly seeks to maximize as

\begin{equation}
   W_{\sf{G}} = \sum_{i=1}^n \realVal \log(\takerVal \cdot \frac{1}{\makerVal})
\end{equation}
Now, following the proof from~\cite{cover2012elements}, this can be rewritten into
\begin{equation}
    W_{\sf{G}} = \sum_{i=1}^n \realVal \log\frac{\takerVal}{\realVal} + \sum_{i=1}^{n} \realVal \log\frac{\realVal}{\makerVal}
\end{equation}
and consequently, using the formula for KL-divergence (Equation~\ref{eq:KL}), back into
\begin{equation}
\label{eq:KL_diff}
    W_{\sf{G}} = D_{KL}(\realRV||\makerRV) - D_{KL}(\realRV||\takerRV) 
\end{equation}
showing the important insight that, for Kelly, the growth of wealth of the trader is directly equal to the difference in quality of her estimates $\takerRV$ over the market prices $\makerRV$ in terms of KL-divergence from the true values $\realRV$. Consequently, positive returns can only be achieved iff the model of the trader achieves a lower cross-entropy error than the market $XENT_{\Omega}(R,M) < XENT_{\Omega}(R,B)$ (Section~\ref{sec:estimators}). Given the information theoretic interpretation of the relative entropy~\cite{kullback1997information}, this is sometimes referred to as the aforementioned ``information advantage'' of the trader over the market maker.

\paragraph{Implications}
Note that this result was derived with the assumption of seeking growth-optimal investments, and its extrapolation beyond that setting may lead to wrong conclusions. Particularly, it is true that one does need a better\footnote{We note we do not distinguish between $XENT$ and other measures of model accuracy here for simplicity.} model to make positive profits if committed to invest optimally with Kelly, however, this does not imply that one needs a better model to make positive profits if one does not care about the growth optimality.

The constraint for better model accuracy in classic portfolio optimization techniques is then inherently connected to the notion of risk (Section~\ref{sec:exp_return}), which is embedded together with expected returns into the same quantity being optimized. For instance, in the Markowitz's model, it is easy to show that even negative return portfolio may be preferred to positive returns should the latter be associated with higher variance. From the Kelly's perspective, overvalued positive return estimates may actually lead to negative growth due to overbetting (Table~\ref{tab:ordering}), and it is also commonly necessary to allocate certain amount of wealth onto opportunities (outcomes) with negative returns to achieve optimal portfolio performance in the long run~\cite{uhrin2018thesis}.

Consequently, wrong assessment of the true prices and probabilities associated with either of such opportunities can lead to inappropriate (over-)investments, resulting into a negative overall profit, even in situations where positive returns could be generally achieved otherwise.
% For instance, with simple risk-ignoring strategies, such as the unit investment policy (Section~\ref{}) investing solely into opportunities with positive expectations, the problem with inappropriate investment sizing can be effectively mitigated (Section~\ref{}).

\subsection{The Essence of Profit}
\label{sec:essence}

% While different utility functions based on different assumption have been proposed, leading to different investment optimization strategies (Section~\ref{sec:strategies}), the ability to detect mispriced assets, and by those means positive expected profit trades, is the necessary condition at the core of any profitable investment strategy.
While the performance of common portfolio optimization strategies is tightly bound to the accuracy of price predictions (Section~\ref{sec:acc_profit}), we argue that accuracy is not essential for profitability in general. This is best demonstrated by taking the, sophisticated but questionable, notions of risk out of the optimization scope, resorting back to simple strategies such as the uniform investments (Section~\ref{sec:strat:unit}). Consequently, one can simply base profitability directly on the ability to correctly detect opportunities with positive expected returns (Section~\ref{sec:exp_return}). Note now that whether the expectation from an opportunity $\omega_i$ is deemed positive depends purely on the comparison between $\takerVal$ and $\makerVal$. This boils down to the renown ``buy low, sell high'' policy to trade mispriced\footnote{Note we again exclude spread (Section~\ref{sec:maker_adv}) from calculations in this Section for simplicity.} assets simply as:

\begin{equation}
\makerVal \neq \takerVal
    \begin{cases}
        \makerVal < \takerVal \implies \alpha = \text{buy \textit{assumed} underpriced asset (bet home)}\\
        \makerVal > \takerVal \implies \beta = \text{sell \textit{assumed} overpriced asset (bet away)}\\
    \end{cases}
\end{equation}

Naturally, to asses the actual return from a trade, the true asset value $\realVal$ needs to be accounted for. In a stock market setting, the actual return from a supposedly profitable opportunity can then be defined as

\begin{equation}
    \EX_{\realRV}[\return_i] =
    \begin{cases}
        \frac{\realVal-\makerVal}{\makerVal}, & \text{if }~ \makerVal < \takerVal~~~(\text{buying asset})\\
       \frac{\makerVal-\realVal}{\makerVal}, & \text{if }~ \makerVal > \takerVal~~~(\text{selling asset})\\
    \end{cases}
\end{equation}
and similarly in a prediction market as
\begin{equation}
    \EX_{\realRV}[\return_i] =
    \begin{cases}
        \frac{\realVal}{\makerVal}-1, & \text{if }~ \makerVal < \takerVal~~~(\text{betting home})\\
        \frac{1-\realVal}{1-\makerVal}-1, & \text{if }~ \makerVal > \takerVal~~~(\text{betting away})\\
    \end{cases}
\end{equation}

% \todo{zminit ze tohle corresponds to unit investment returns}

Note that the true expected return $\EX_{\realRV}[\return_i]$ can clearly be negative and that its absolute value is not dependent of the model's estimate $\takerVal$. However, the ability to correctly recognize the profitable opportunities through the $\makerVal \lessgtr \takerVal$ comparison is naturally dependent on the ordering of the $\takerVal$ estimates w.r.t. $\makerVal$ and $\realVal$.
Note nevertheless that this comparison-based quality is very different from the accuracy-based reasoning. Consequently, even if $err(\takerFunc) > err(\makerFunc)$, a consistent profit can still be made, as we demonstrate through the following simple examples.

% In the prediction markets, this is straightforward as the predicted price can be directly compared to the given odds. In the stock market, one must assume that the market will sooner or later converge to the true price.

\begin{example}
Assume that a fundamental value of an asset is $\$10$, with the marker maker estimating the price at $\$9$, and the market taker's estimate being $\takerVal=\$15$. Clearly, the market taker's estimate is more erroneous here (e.g. $MSE(\takerVal) > MSE(\makerVal)$). Since the true value is not observable, she compares her $\$15$ evaluation of the asset to the currently offered $\$9$, which she \textit{correctly} evaluates as a profitable opportunity to \textit{buy} the asset (since $9<15$). By buying one unit, her actual expected return from this trade will be positive at $\$1$ (i.e. $10-9$).
\end{example}

\begin{example}
Similarly, assume true probability of an outcome to be $0.6$, with the bookmaker $\maker$ estimating it at $\makerVal=0.5$, with the corresponding fair odds set up to $2.0$, and the bettor $\takerFunc$ estimate being at $\takerVal=0.9$. Clearly, the bettor's estimate is more erroneous here (e.g. $XENT(\takerVal) > XENT(\makerVal)$). Nevertheless she has no choice but to use her estimate to asses the return on investment, which she {correctly} estimates as being {positive} ($\EX_{\takerFunc}[\return]=\frac{0.9}{0.5}-1 > 0$). Despite being very wrong numerically with her expectation of a $80\%$ ROI, by betting a unit of wealth, she can still expect to obtain the actual positive ROI of $20\%$.
\end{example}

Note that the trader's $\taker$ estimates $\takerVal$ in these examples could have been set arbitrarily larger (within the respective domain), making the corresponding model $\takerFunc$ arbitrarily bad by the standard error measures.

\begin{definition}
\label{def:essence}

Followingly, let us define a more relaxed, necessary condition of \textit{essential profitability} of a model $\takerFunc$ simply as the consequent existence of one of the following market opportunities $\omega_i$ in $P_{\Omega}(\realRV,\makerRV,\takerRV)$:

\begin{enumerate}
    \item the market undervalues the true price, and the model estimates a higher value than the market, \\i.e. $\makerVal<\realVal \wedge \takerVal>\makerVal$
    % , yielding a unit profit of $r-b$ from an issued buy order or $\frac{\real}{b}$
    \item the market overvalues the true price, and the model estimates a lower value than the market, \\i.e. $\makerVal>\realVal \wedge \takerVal<\makerVal$
    % , yielding a unit profit of $\maker-r$ from an issued sell order.
\end{enumerate}
Using sufficiently conservative (small) wealth allocations, investments into either of these cases will lead to systematic profits of the market taker in the long run. On the contrary, no investment strategy can lead to positive profits without such opportunities in the portfolio. 

\end{definition}

% The concept of \textit{expected return} is designed for that purpose by weighting the return from each opportunity by its probability. 
Nevertheless the overall profitability of a model $\takerFunc$ will naturally depend on the relative occurrence of such $\omega_i$'s in the actual market distribution $P_{\Omega}$ (Definition~\ref{def:market_dist}).
Let us now generalize the essence of profitability, from reasoning about the necessary relationships between individual $\realVal,\makerVal,\takerVal$ estimates, to the properties of the whole market distribution $\mathrm{P}_\Omega(\realRV,\makerRV,\takerRV)$. Following on the aforementioned ``buy low, sell high'' strategy with uniform investments, the expected profitability from a market distribution  $\mathrm{P}_{\Omega}$ is clearly

\begin{equation}
    \EX_{P_{\Omega}} [\return] = \sum_i \EX_{\realRV}[\return_i] \cdot \mathrm{P}_{\Omega}(\realVal,\makerVal,\takerVal)
\end{equation}
We already know that the fundamental value $\realVal$ is a principally unknown random variable and one can thus never perfectly assess the true return $\return_i$ from any trade in advance, for which we resort to an estimate $\hat{\rho}_i$. However, the market distribution $\mathrm{P}_{\Omega}(\realRV,\makerRV,\takerRV)$ here is also principally unknown, for which one again needs to rely on statistics while estimating it from historical data as $\hat{\mathrm{P}}$. Consequently, one can estimate the essential profitability of a model $\takerFunc$ w.r.t. market pricing $\makerFunc$ as

\begin{equation}
    \EX_{\hat{P}}[\widehat{\return}] =  \sum_i \EX_{\takerRV}[\return_i] \cdot \mathrm{\hat{P}}(\takerVal,\makerVal,\takerVal)
\end{equation}

As with any investment strategy, the calculated expected returns can be very different from the actual return distribution $\EX_{\hat{P}}[\widehat{\return}] \neq \EX_{P_{\Omega}} [\return]$, depending on the properties of the estimates (Section~\ref{sec:exp_return}). However, profitability of the simple unit investment strategy leads to a much more relaxed and robust condition on the model quality, which is what we exploit to yield positive profits even with estimators of inferior predictive performance.

\paragraph{Drawbacks}
We acknowledge that by deflecting from the accuracy-based view and focusing merely on the essential profitability with the unit investments, we downplay the role of explicit optimization of growth and risk in the formal strategies of Kelly and Markowitz, respectively. Nevertheless, as discussed in the respective Sections~\ref{sec:back_Kelly} and~\ref{sec:MPT}, these formal notions are based on rather unrealistic (wrong) assumptions, which is why additional risk management practices, such as the fractioning (Section~\ref{sec:back_Kelly}), need to be commonly employed with the strategies anyway~\cite{maclean1992growth,maclean2011kelly,uhrin2019sports}. Consequently, sacrificing formal optimality w.r.t. unrealistic objectives in order to transition from negative to positive profits does no seem that big of a sacrifice.

% \subsection{Maker vs. Taker}
% \subsection{Utility}

% \todo{maker advantage and taker advantage subsections}

% ze je based on expected value

% adapting 
% spread - ze funguje jako polstar na jeho estimate error

% \todo{long-shot bias? - asi spis na samostatny paper....}

\subsection{Market Taker's Advantage}
\label{sec:taker_adv}

The market maker's advantage (Section~\ref{sec:maker_adv}) is a well-worn concept. However, there is also an advantage of the market taker which is rarely discussed explicitly, but is essential to the traders profitability. While the market maker has the obligation to continuously \textit{quote} {price} of both sides of the market (Section~\ref{sec:estimators}), the taker has the crucial liberty to \textit{select} only those of the resulting opportunities deemed profitable. That is she is to decide whether and which side of the market to trade the assets {once} the market maker's prices have been laid out. As the second player, the difficulty of her task is reduced from the correct price estimation to the estimation of the market price error direction. While this might seem as a similarly difficult problem, the latter is a considerably easier task.

\begin{table}

\caption{All possible orderings of the true value ($\realVal$), market maker's ($\makerVal$), and trader's estimates ($\takerVal$), with the implied trading decisions and resulting profitability. Additionally, the relative size of the implied Kelly fraction is indicated.}
\centering
\begin{tabular}{|ccccc|}
\hline
values ordering                  & decision ($\takerVal \lessgtr \makerVal$) & profit    & proportion (in $P_\mathcal{U}$)       &   Kelly         \\ \hline
$\realVal < \takerVal < \makerVal$                     & \textcolor{Green}{sell}                      & \textcolor{Green}{$\makerVal>\realVal$}          &  1/6 & overbet \\
$\realVal < \makerVal <\takerVal  $                  & \textcolor{Red}{buy}                      & \textcolor{red}{$\realVal<\makerVal $ }          & 1/6 & overbet     \\         
$\takerVal < \realVal < \makerVal  $                  & \textcolor{Green}{sell}                      &\textcolor{Green}{ $\makerVal>\realVal$}         &  1/6 & underbet\\
$\takerVal < \makerVal < \realVal  $                  & \textcolor{red}{sell}                      & \textcolor{red}{ $ \makerVal<\realVal$ }        &  1/6 & underbet \\
$\makerVal <\takerVal < \realVal   $               & \textcolor{Green}{buy}                      & \textcolor{Green}{$\realVal>\makerVal$ }          & 1/6   & underbet   \\        
$\makerVal < \realVal <\takerVal   $                 & \textcolor{Green}{buy}                      & \textcolor{Green}{$\realVal>\makerVal$ }         & 1/6  & overbet             \\ \hline
\end{tabular}
\label{tab:ordering}
\end{table}

For demonstration, consider the three values $\realVal,\makerVal,\takerVal$ of the true value, market maker's, and trader's estimates, respectively, to be laid out completely at random, yielding a uniform market distribution $\mathrm{P}_{\mathcal{U}}$ where $\realRV,\makerRV,\takerRV \sim \mathcal{U}^3$. The possible situations that emerge from the ordering of $\realVal,\makerVal,\takerVal$ in such a setting are displayed in Table~\ref{tab:ordering}. 
% Note that the 6 particular orderings are distributed evenly in a uniform distribution. 
Since neither of the estimators possesses any information w.r.t. $\realRV$, both $\makerRV$ and $\takerRV$ clearly perform equally by the means of arbitrary statistical estimation measures (Section~\ref{sec:estimators}). While one might thus expect this to be a neutral trading setting for both the sides, interestingly, the market taker $\taker$ would already be able to make a substantial profit with uniform investments by correctly identifying $2/3$ of the profitable opportunities.

Intuitively, this demonstrates a simple fact that it is generally more likely to overestimate an undervalued estimate than to further underestimate it, i.e. 
\begin{equation}
    \makerVal<\realVal \implies \mathrm{P}_{\mathcal{U}}(\makerVal<\takerVal) > \mathrm{P}_{\mathcal{U}}(\takerVal<\makerVal)
\end{equation}
and vice versa for an overvalued estimate:
\begin{equation}
    \makerVal>\realVal \implies \mathrm{P}_{\mathcal{U}}(\makerVal<\takerVal) < \mathrm{P}_{\mathcal{U}}(\takerVal<\makerVal)
\end{equation}
Note, importantly, how this property of the completely uninformative $\mathrm{P}_{\mathcal{U}}$ is conveniently aligned with the essential profitability of the trader's model (Definition~\ref{def:essence}). The concrete proportions of the individual situations will naturally depend on the particular distribution $\mathrm{P}_{\Omega}$, nevertheless the property holds very generally for unskewed distributions with unbiased estimators (Section~\ref{sec:est_dist}). Note the difference from the standard model accuracy measures which would all evaluate both models equally in $\mathrm{P}_{\mathcal{U}}$. Nevertheless from the perspective of profitability, the situation is very different since, as opposed to the market maker, the market taker is not penalized for estimation errors in these two situations that emerge more often than not.
Consequently in $\mathrm{P}_{\Omega} = \mathrm{P}_{\mathcal{U}}$, the trader is in an inherent advantage of $2:1$, which can be directly turned into the corresponding profits.

% \todo{zakomponovat obrazek...in Figure~\ref{fig:projection}}

It is perhaps more instructive to demonstrate the concept on a particular level of true value $\real$. Without loss of generality, let us consider all opportunities with a real value $\real$ being traded in an ideal two-way market (without spread). We can then plot a 2D projection of the $\mathrm{P}(\realRV,\makerRV,\takerRV)$ market distribution by conditioning it as $\mathrm{P}(\makerRV,\takerRV|\realRV=\real)$, and visualize the essential profitability (Definition~\ref{def:essence}) of the corresponding sub-regions of the distribution. The result is displayed in Figure~\ref{fig:projection}. We can observe that the distribution of the profitable regions (green) is clearly in favor of the trader, and that the potential returns progressively increase with the error of the market maker, i.e. the distance of $\makerRV=\maker$ from $\real$.

\begin{figure}[t]

\centering
\resizebox{0.6\textwidth}{!}{
\begin{tikzpicture}
% \tikzset{
%     mygauss/.pic={

\node [dotted, draw=gray, shape=rectangle, minimum width=10cm, minimum height=10cm, anchor=south west] at (-1,1) {};
\node [dotted, draw=gray, shape=rectangle, minimum width=10cm, minimum height=10cm, anchor=south west] at (-0.5,0.5) {};
\node [draw, shape=rectangle, minimum width=10cm, minimum height=10cm, anchor=south west] at (0,0) {};

\draw[help lines, color=gray!30, dashed] (0,0) grid (10,10);

\foreach \x in {0.5, ..., 10} {
    \foreach \y in {0.5, ..., 9.5} {
        \pgfmathsetmacro{\vy}{0.08*(\y-4)}
            
        \ifthenelse{\lengthtest{\y pt = \x pt}}{}{
        \ifthenelse{\lengthtest{\y pt > \x pt}}{
        
        \ifthenelse{\lengthtest{\y pt < 4 pt}}{
        \node at (\x, \y)[circle, fill=Red, scale=0.15] {};
        \draw[->,color=Red!90] (\x,\y) -- (\x, \y+\vy);
        }{
        \node at (\x, \y)[circle, fill=Green, scale=0.15] {};
        \draw[->,color=Green!90] (\x,\y) -- (\x, \y+\vy);
        }

        }{
        
        \ifthenelse{\lengthtest{\y pt > 4 pt}}{
        \node at (\x, \y)[circle, fill=Red, scale=0.15] {};
        \draw[->,color=Red!90\x] (\x,\y) -- (\x, \y+\vy);
        }{
        \node at (\x, \y)[circle, fill=Green, scale=0.15] {};
        \draw[->,color=Green!90\x] (\x,\y) -- (\x, \y+\vy);
        }
        };
        }
    }
}

\draw [dotted] (0,0) edge node[dotted,near end,sloped,below] {$\makerRV=\takerRV$} (10,10);

\draw [dotted] (4,0) edge node[dotted,sloped,near end,below] {$\takerRV=\real$} (4,10);
\draw [dotted] (0,4) edge node[dotted,sloped,near end,below] {$\makerRV=\real$} (10,4);

\draw [dotted,<->] (-1,1) -- node[dotted,near start,above] {\textbf{$\realRV$}} (1,-1);
\draw [dotted,<->] (1,9) -- node[dotted,near end,above] {$\realRV$} (-1,11);
\draw [dotted,<->] (11,9) -- node[dotted,near end,above] {$\realRV$} (9,11);
\draw [dotted,->] (10,0) -- node[dotted,near end,above] {$\realRV$} (11,-1);

\node [draw=none,anchor=west,color=black] at (4.1,3.8) {$(\real,\real,\real)$};
\node at (4,4)[circle, fill=black, scale=0.35] {};
\node at (0,0)[circle, fill=black, scale=0.35] {};

\node [draw=none] at (2,7) {$\takerRV<\real<\makerRV$};
\node [draw=none] at (6,8) {$\real<\takerRV<\makerRV$};
\node [draw=none] at (8,6) {$\real<\makerRV<\takerRV$};
\node [draw=none] at (7,2) {$\makerRV<\real<\takerRV$};
\node [draw=none] at (2.5,1) {$\makerRV<\takerRV<\real$};
\node [draw=none] at (1.5,3) {$\makerRV<\takerRV<\real$};

\node [draw=none] at (4,-0.3) {$\real$};
\node [draw=none,anchor=east] at (0,4) {$\real$};

\node [draw=none,anchor=east] at (0,5) {\textbf{$\makerRV$}};
\node [draw=none,anchor=north] at (5,0) {\textbf{$\takerRV$}};

\node [draw=none] at (-0.2,-0.5) {$(\real,0,0)$};

% \draw [thick,->] (4.5,4.5) edge node[] {$\EX(profit)$} (4.5,2.5);
% \draw [thick,->] (3.5,3.5) edge node[] {$\EX(profit)$} (3.5,5.5);

% }
\end{tikzpicture} 
}
\caption{A 2D projection of the $\mathrm{P}(\realRV,\makerRV,\takerRV)$ distribution onto $\mathrm{P}(\makerRV,\takerRV|\realRV=\real)$ with visualization of essential profitability (Section~\ref{sec:essence}) of the individual point-estimates as a vector field. Green color denotes positive returns and red negative returns, respectively, while the length of each vector corresponds to the ROI in an idealized two-way stock market (Section~\ref{sec:exp_return}).
% \todo{stejny obrazek s vlozenymi elipsami, barvy podle cases}
}
\label{fig:projection}
\end{figure}
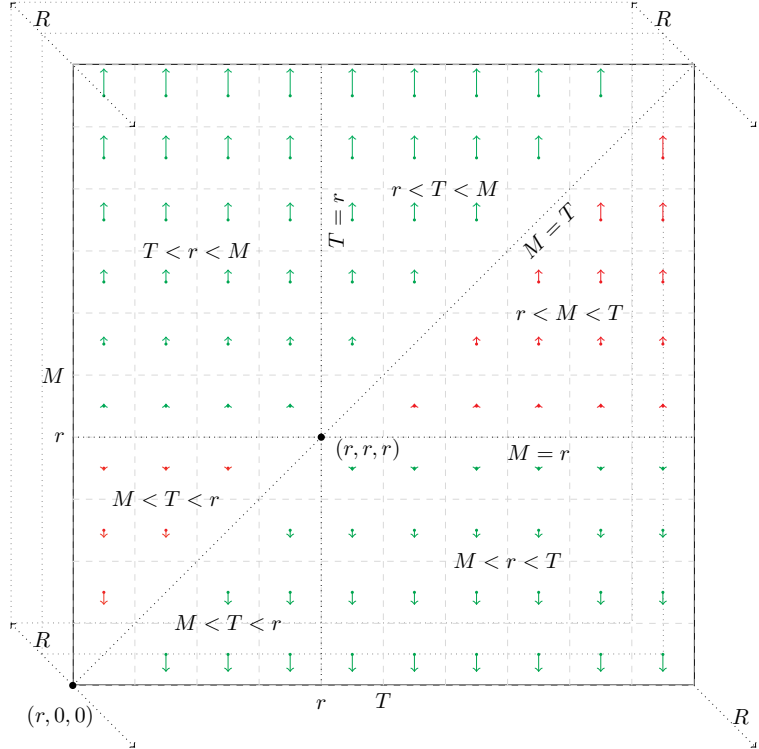

% \todo{tabulka s procenty jednotlivych cases}

Note that here we did not yet assume any particular (non-uniform) distribution of estimates, and this inherent advantage of the trader is thus completely oblivious of any information advantage as well as any other property of the price estimators. Rather, it stems purely from the unequal roles of the market maker $\maker$ and taker $\taker$. Consequently, should they e.g. switch roles with the same estimation models ($\maker \gets \takerFunc, \taker \gets \makerFunc$), the advantage would stay exactly the same on the side of the market taker $\taker$.
% In other words, to start the game with equal chances for profits, the market maker cannot just possess an equally good model, but a model that is significantly better\footnote{On the other hand note that with a significantly better model, which is not an uncommon scenario, the market maker is set to make profit even without any spread.}, or incorporate the spread (Section~\ref{sec:maker_adv}) into the estimated prices $\makerRV$.

% \todo{spread porad nikde neni}

\subsection{Distribution of Estimates}
\label{sec:est_dist}

% http://mason.gmu.edu/~jgentle/csi9723/11s/l12_11s.pdf
% https://efmaefm.org/0EFMAMEETINGS/EFMA%20ANNUAL%20MEETINGS/2015-Amsterdam/papers/EFMA2015_0502_fullpaper.pdf

While demonstrating the concept of the trader's advantage, the uniform $\mathrm{P}_\mathcal{U}(\realRV,\makerRV,\takerRV)$ distribution of estimates with completely uninformed players from Section~\ref{sec:taker_adv} seems rather unlikely in practice. In real world markets, the values $\realRV,\makerRV,\takerRV$ are not going to be independent, but rather correlated with each other, since $\makerRV$ and $\takerRV$ are typically based on similar information sources and both try to model $\realRV$ with similar techniques (Section~\ref{sec:market_modelling}). Let us now review common properties of the more realistic market distributions of these estimates.

\paragraph{Bias}
The market makers are typically very good at being close to the real price, and we can assume their estimates $\makerRV$ to be \textit{unbiased} w.r.t. true value $\realRV$, or, more formally:
\begin{equation}
    \EX_\mathrm{P}(B)-R = 0
\end{equation}
which means that they are not systematically deviating when measured against the true value alone. Should a market maker be biased in this manner\footnote{biased by more than $\epsilon$ in practice (Section~\ref{sec:maker_adv}).}, it would be extremely easy to exploit her merely by trading all opportunities on the corresponding side of the market. Moreover we can reasonably assume her to be point-wise unbiased at each particular price level $\real$, i.e.
\begin{equation}
    \EX_\mathrm{P}(B|R=\real) = r
\end{equation}
If that was not the case, the market maker would again be easily exploitable by correspondingly trading all possible assets within a certain price range $\real \pm \delta$, i.e. by buying all assets in price ranges where the maker $\maker$ systematically undervalues the assets, and vice versa for selling in overvalued price ranges. One can typically check from historical data that the market makers are not biased in this simple manner in any reasonably efficient market\footnote{as we do in experiments in Section~\ref{sec:search_bias}}. Note that the unbiasedness is only one of the conditions for a fully efficient market (Section~\ref{sec:market_efficiency}). There is generally no reason for the market taker $\taker$ to be biased in this trivial way either, unless a systematic error is present in her model $\takerFunc$, or she reversely reflects the market maker's bias to exploit it.

\paragraph{Variance}
% \begin{table}[]

% \centering
% \begin{tabular}{@{}cc@{}}
% \toprule
% $\cov{\realRV,\makerRV}$   & market maker's accuracy        \\
% $\cov{\realRV,\takerRV}$   & market taker's accuracy        \\
% $\cov{\makerRV,\takerRV}$   & covariance of the estimators  \\
% \bottomrule
% \end{tabular}
% \caption{Correspondence of the market distribution covariances to the price estimators' qualities.}
% \label{tab:correlations}
% \end{table}

Given the assumption that the models behind $\makerRV$ and $\takerRV$ are both unbiased estimators of $\realRV$, we can now focus merely on their (co-)variances. It is a common practice in statistical estimation of functions for one to look for an estimator with the smallest variance among the class of unbiased estimators~\cite{} (Section~\ref{sec:estimators}). Since we are working with function estimators, we are not interested in the total variance of the estimations $\takerRV$, which includes variance due to variations in the true price $\realRV$ itself as
\begin{equation}
    \var{\takerRV} = \EX_\realRV[\var{\takerRV|\realRV}] + \mathrm{Var}_{\realRV}{\EX[\takerRV|\realRV]}
\end{equation}
but merely in the first term capturing the expected variance left w.r.t. predicting $\realRV$. Given the assumption of the point-wise unbiased estimates, the conditional variances w.r.t. $\realRV$ are then equal to the covariances of the models, i.e.
\begin{equation}
    \cov{\makerRV,\realRV} = \var{\makerRV|\realRV} \text{~~~~and similarly~~~~} \cov{\takerRV,\realRV} = \var{\takerRV|\realRV}
\end{equation}
Given the unbiasedness, these co-variances then directly reflect the quality (accuracy) of the underlying models of the market maker ($\cov{\makerRV,\realRV}$) and market taker ($\cov{\takerRV,\realRV}$), respectively.
% , and remain as the only source of estimation error for both $\maker$ and $\taker$ (Section~\ref{sec:estimators}).
% The higher the co-variance with $\realRV$, the lower the estimation error $err(\sf{e})$ and vice versa.
% An overview of these co-variances is displayed in Table~\ref{tab:correlations}.

The last degree of freedom in terms of covariances in $\mathrm{P}$ is the relationship between $\makerRV$ and $\takerRV$, i.e. $\cov{\makerRV,\takerRV}$ ($\cov{\makerRV,\takerRV|\realRV}$). While the other two covariances have a clear common interpretation, the $\cov{\makerRV,\takerRV}$ is more intriguing, but is also essential to the proposed profitability of inferior predictive models (Section~\ref{def:essence}).
As we have seen in the case of the uniform market distribution $\mathrm{P}_\mathcal{U}$, where both the players possess the same amount of information w.r.t. the true price, corresponding to the same accuracies of $\makerFunc$ and $\takerFunc$, the trader $\taker$ is always in advantage. However not all distributions with equally informed players are as such, as demonstrated by the following example.

\begin{example}
Assume a scenario where both the trader $\taker$ and market maker $\maker$ possess the exact same model. Clearly, their information value, accuracy and all statistical measures will be exactly the same, just as in the case of the uniform distribution $\mathrm{P}_\mathcal{U}$. Nevertheless, the trader will not be in an advantageous position anymore. Since all her estimates coincide with the market price $\forall i: \takerVal = \makerVal$, it is not possible to detect any profitable opportunities where $r \not\in (b-\epsilon,b+\epsilon)$, even if they exist in the market distribution $\mathrm{P}_{\Omega}$. Hence, the profitability of the trader in this case is clearly zero.
\end{example}

From the statistical viewpoint, one can note an underlying difference between the two example distributions in the third covariance term $\cov{\makerRV,\takerRV}$. Whereas in the uniform distribution, the two variables were completely independent, i.e. $\cov{\makerRV,\takerRV}=0$, here they are equal and thus display maximal possible covariance ($\cov{\makerRV,\takerRV}=\sigma^2$). While this anecdotal reference to the connection between profitability and $\cov{\makerRV,\takerRV}$ is rather informal, we analyse it in detail in the next Section~\ref{sec:proof}.

%

% While in this case of equal models such correlation leads to zero profit, it will generally lead to negative profit in the common cases, where the investor possesses an inferior model.

\section{Increasing Profit through Decorrelation}
\label{sec:proof}
\label{sec:decorrelating}

% \todo{tabulka s cases trech korelaci?}

% \section{Connection to Investment Strategies}
Recall that we have a two-way market with opportunities $\Omega$ of some fundamental value $\realRV$, being priced by the market maker $\maker$ as $\makerRV$ and taker $\taker$ as $\takerRV$, resulting into some market distribution of estimates $\mathrm{P}_{\Omega}(\realRV,\makerRV,\takerRV)$ (Definition~\ref{def:market_dist}). Let us now consider the context of the common properties (Section~\ref{sec:est_dist}) of such market distributions $\mathrm{P}_{\Omega}$, allowing assessments of model performance in terms of expected returns $\EX_{\mathrm{P}_{\Omega}}(\return)$ w.r.t. the distribution $\mathrm{P}_{\Omega}(\realRV,\makerRV,\takerRV)$. Particularly, we will explore the aforementioned statistical relationship between the market maker $\maker$ and taker $\taker$. To further standardize the relationship study, i.e. to take the individual variances of $\makerRV$ and $\takerRV$ out of scope, we now switch from covariance $\cov{\takerRV,\makerRV|\realRV}$ to correlation $\corr{\takerRV,\makerRV|\realRV}$.

% \todo{ospravedlnit prechod cov na corr}

% . The estimates are further fed into an investment strategy, allocating certain amounts of wealth onto the opportunities,
% \subsection{Flat stake}

% So far we were silent about the impact of the properties of the estimates on the investment optimization strategies. 
% \todo{vsude zkontrolovat correlation vs covariance}

\begin{definition}
\label{def:decorrelation}
We use the term \textit{decorrelation} to refer to the concept of decreasing the partial correlation $\corr{\takerRV,\makerRV|\realRV}$ between a price estimator $\takerFunc$ of the trader and the market maker $\makerFunc$ w.r.t. the real value $\realFunc$ across opportunities $\omega_i \in \Omega$ endowed with some market distribution $\mathrm{P_\Omega}$ (Definition~\ref{def:market_dist}).
\end{definition}

The main goal of this Section~\ref{sec:proof} is to show that enforcing smaller\footnote{Note that utilize the term ``decorrelation'' to refer to decreasing the correlation even below zero.} partial correlation $\corr{\takerRV,\makerRV|\realRV}$ of a price estimator $\takerFunc$ generally increases its essential profitability (Definition~\ref{def:essence}) within common market distributions.

% \todo{zavest zde termin ``decorrelation''}

\subsection{Unbiased Estimators}

Let us first consider the common setting of unbiased price estimators, the reasoning behind which was introduced in Section~\ref{sec:est_dist}.

\begin{theorem}

The essential profitability (Definition~\ref{def:essence}) of an unbiased estimator $\takerFunc$ with the lowest partial correlation $\corr{\takerRV,\makerRV|\realRV=\real}=-1$ with the market $\makerFunc$ is maximal. Consequently, no deviation from such a model $\takerFunc$ can thus increase the profitability further.

\end{theorem}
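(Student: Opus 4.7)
The plan is to show that when $\takerFunc$ is unbiased and has conditional correlation $-1$ with $\makerFunc$ given $\realRV$, every mispriced opportunity is automatically traded in the profitable direction, and this saturates the upper bound on essential profitability (Definition~\ref{def:essence}).

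First, I would condition on a fixed level $\realRV = \real$. Unbiasedness gives $\EX[\makerRV \mid \realRV=\real] = \EX[\takerRV \mid \realRV=\real] = \real$. A conditional Pearson correlation of $-1$ is the equality case of Cauchy--Schwarz and therefore forces an almost-sure affine relation $\takerRV - a(\real) = -k(\real)(\makerRV - b(\real))$ with $k(\real) > 0$. Substituting the two matching conditional means collapses the intercepts, yielding the pointwise identity
\begin{equation}
\takerRV - \real \;=\; -k(\real)\,(\makerRV - \real)
\end{equation}
almost surely on $\{\realRV = \real\}$. This identity is the workhorse of the whole argument: it couples the sign of the trader's deviation to the opposite sign of the market maker's deviation.

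Next I would read off the sign implications. If $\makerRV > \real$ then $\takerRV < \real < \makerRV$, so the decision rule $\takerRV < \makerRV$ triggers a sell into an overpriced asset, realising case~2 of Definition~\ref{def:essence}. Symmetrically, $\makerRV < \real$ forces $\takerRV > \real > \makerRV$, triggering a buy into an underpriced asset, i.e.\ case~1. The only $\omega_i$ not captured is the (measure-zero) event $\makerRV = \real$, where no model can profit. Integrating over $\real$ against the conditional distribution of $(\makerRV,\takerRV)$ and then over $\mathrm{P}_\Omega(\realRV)$, the trader's expected return under the uniform strategy equals $\EX_{\mathrm{P}_\Omega}\!\left[\,|\makerRV-\realRV|/\makerRV\,\right]$ (with the analogous odds-based scaling from Section~\ref{sec:exp_return} in the betting setting). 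This is a hard ceiling on the expected return of \emph{any} model, since mispricing magnitude $|\makerRV-\realRV|$ is fixed by the market and every other model can at best match the trader's direction on each $\omega_i$. Hence no perturbation of $\takerFunc$ -- whether it changes the bias, increases the conditional correlation above $-1$, or alters the conditional shape -- can exceed this value; it can only loosen the sign coupling for a positive-measure set of $\omega_i$ and strictly reduce the captured return.

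The main obstacle, and the step I would write out most carefully, is the move from $\corr{\takerRV,\makerRV\mid\realRV=\real}=-1$ to the almost-sure affine identity with \emph{matching} zero-mean centres; this is where the conditional Cauchy--Schwarz equality case combined with pointwise unbiasedness does all the real work. A secondary subtlety is the maximality claim: I would phrase it as an opportunity-wise domination statement -- for every $\omega_i$ with $\makerVal \neq \realVal$, the $-1$-correlated $\takerFunc$ already selects the profitable side, so any alternative $\takerFunc'$ either agrees (same profit) or disagrees (zero or negative profit on that $\omega_i$), establishing that the expected return integrated under $\mathrm{P}_\Omega$ is dominated termwise and thus globally.
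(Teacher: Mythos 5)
Your proof is correct and follows essentially the same route as the paper's: you collapse the conditional distribution $\mathrm{P}(\takerRV,\makerRV\mid\realRV=\real)$ onto a negatively sloped line through $(\real,\real)$ via unbiasedness and the $\corr{\takerRV,\makerRV\mid\realRV}=-1$ condition, observe that only the two profitable orderings $\takerRV<\real<\makerRV$ and $\makerRV<\real<\takerRV$ survive, and establish maximality by noting that any deviation can only leave each opportunity's return unchanged or reduce it. Your explicit appeal to the Cauchy--Schwarz equality case and your statement of the ceiling $\EX\!\left[\,|\makerRV-\realRV|/\makerRV\,\right]$ with termwise domination are slightly more careful formulations of what the paper does by enumerating the possible ordering transitions, but they are the same argument.
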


% We prove that a perfect negative partial correlation of the two marginals $\takerRV$ and $\makerRV$ w.r.t. $\realRV$

% \begin{equation}
%     \cov{\makerRV|R=\real,\takerRV|R=\real}=-1
% \end{equation}

% corresponds to the situation with maximal expected returns .

\begin{proof}

Now for $\corr{\takerRV,\makerRV|\realRV}=-1$ and an arbitrary $\real$, the probability distribution $\mathrm{P}(\takerRV,\makerRV|\realRV=\real)$ collapses into a linear function $(\taker;a,b) \mapsto \maker$ of the form

\begin{equation}
    \maker = a \cdot \taker + b~~\text{where}~~a=\frac{\cov{\makerRV,\takerRV|\real}}{\var{\takerRV|\real}}~~\text{and}~~b=(1-a) \cdot \real
\end{equation}

where the $b$ is set so that the mean values $\EX_{\mathrm{P}(\takerRV,\makerRV|\realRV=\real)}[\makerRV]=\EX_{\mathrm{P}(\takerRV,\makerRV|\realRV=\real)}[\takerRV]$ of both the marginals $\mathrm{P}(\makerRV|\real)$ and $\mathrm{P}(\takerRV|\real)$ are at $\real$ (unbiased). The mean of the distribution $\maker=\taker=\real$ thus lies on the line:
\begin{equation}
    \real = a \cdot \real + (1-a)\cdot \real
\end{equation}

Clearly, since $\corr{\takerRV,\makerRV|\real}<0$, we have also $\cov{\makerRV,\takerRV|\real}<0$ implying a negative slope $a<0$ of the function line.
% Now increasing $\maker$ from $\maker=\real$ to $\maker>\real$ leads to decreasing $\taker$ from $\taker=\real$ to $\taker<\real$. 
There are consequently only 2 possible price estimate orderings (regions in Figure~\ref{fig:projection}) for all $\Omega$, both of which are profitable, as follows:

\begin{enumerate}
    \item $\takerRV<r<\makerRV$ implying stock returns $\frac{\maker-\real}{\maker} > 0$, or likewise for betting $\frac{1-\real}{1-\maker}-1 > 0$
    \item $\makerRV<r<\takerRV$ implying stock returns $\frac{\real-\maker}{\maker} > 0$, or likewise for betting $\frac{\real}{\maker}-1 > 0$
\end{enumerate}
Note again that the individual values of returns do not depend on the value of $\takerRV$ but merely on the value order. From the assumed role of the trader $\taker$, the only possible changes to the distribution (and profit) can be made via changes in her model $\takerFunc$ estimates $\takerRV$. However, any potential deviation from this distribution (line) with $\corr{\makerRV,\takerRV|\realRV}=-1$ can only result into one of the following ordering (region) transitions:

\begin{enumerate}
    \item $\takerRV<r<\makerRV$ can change to either:
    \begin{enumerate}
        \item $r<\takerRV<\makerRV$ implying \textit{no change} in the returns $\frac{\maker-\real}{\maker} > 0$, or likewise $\frac{1-\real}{1-\maker}-1 > 0$
        \item $r<\makerRV<\takerRV$ implying \textit{decrease} in the returns to $\frac{\real-\maker}{\maker} < 0$, or likewise $\frac{\real}{\maker}-1 < 0$
    \end{enumerate}
    \item $\makerRV<r<\takerRV$ can change to either:
    \begin{enumerate}
        \item $\makerRV<\takerRV<r$ implying \textit{no change} in the returns $\frac{\real-\maker}{\maker} > 0$, or likewise $\frac{\real}{\maker}-1 > 0$
        \item $\takerRV<\makerRV<r$ implying \textit{decrease} in the returns to $\frac{\maker-\real}{\maker} < 0$, or likewise $\frac{1-\real}{1-\maker}-1 < 0$
    \end{enumerate}
\end{enumerate}
ergo no deviation from $\corr{\makerRV,\takerRV|\realRV}=-1$ can increase the profitability any further. 
% $\qedsymbol$
\end{proof}

Note that this also means that we cannot increase the returns even via transition into a perfect estimator with both $\bias{\takerRV}=\var{\takerRV|\realRV}=0$, i.e. a perfect model $\mathrm{P}(\takerRV=\real|\realRV=\real)=1$ which always returns the correct answer in a deterministic fashion (and for which the partial correlation would be undefined). This means that for the purpose of the essential profit generation (Definition~\ref{def:essence}), the variance of the model no longer acts as an error since it is completely turned into our advantage.
Note this is in direct contrast to a correlated investor who, given a variance higher than the market maker, is doomed to obtain completely negative returns\footnote{We also note that the transition between the two corner cases is somewhat smooth w.r.t. the returns for common market distributions, such as the elliptical distributions used for visualization.}. A visualization of the correlation effect on the returns, with an example elliptical market distribution for the given setting, is depicted in Figure~\ref{fig:unbiased}.

% \todo{obrazek s ruznymi variancemi a dekorelaci}
% \todo{correct figure and table numbering}

\begin{figure}

\centering
\resizebox{0.32\textwidth}{!}{
\begin{tikzpicture}
\node[scale=1] (grid) at (0,0){\begin{tikzpicture}
% \tikzset{
%     mygauss/.pic={

\node [dotted, draw=gray, shape=rectangle, minimum width=10cm, minimum height=10cm, anchor=south west] at (-1,1) {};
\node [dotted, draw=gray, shape=rectangle, minimum width=10cm, minimum height=10cm, anchor=south west] at (-0.5,0.5) {};
\node [draw, shape=rectangle, minimum width=10cm, minimum height=10cm, anchor=south west] at (0,0) {};

\draw[help lines, color=gray!30, dashed] (0,0) grid (10,10);

\foreach \x in {0.5, ..., 10} {
    \foreach \y in {0.5, ..., 9.5} {
        \pgfmathsetmacro{\vy}{0.08*(\y-4)}
            
        \ifthenelse{\lengthtest{\y pt = \x pt}}{}{
        \ifthenelse{\lengthtest{\y pt > \x pt}}{
        
        \ifthenelse{\lengthtest{\y pt < 4 pt}}{
        \node at (\x, \y)[circle, fill=Red, scale=0.15] {};
        \draw[->,color=Red!90] (\x,\y) -- (\x, \y+\vy);
        }{
        \node at (\x, \y)[circle, fill=Green, scale=0.15] {};
        \draw[->,color=Green!90] (\x,\y) -- (\x, \y+\vy);
        }

        }{
        
        \ifthenelse{\lengthtest{\y pt > 4 pt}}{
        \node at (\x, \y)[circle, fill=Red, scale=0.15] {};
        \draw[->,color=Red!90\x] (\x,\y) -- (\x, \y+\vy);
        }{
        \node at (\x, \y)[circle, fill=Green, scale=0.15] {};
        \draw[->,color=Green!90\x] (\x,\y) -- (\x, \y+\vy);
        }
        };
        }
    }
}

\draw [dotted] (0,0) edge node[dotted,near end,sloped,below] {$\makerRV=\takerRV$} (10,10);

\draw [dotted] (4,0) edge node[dotted,sloped,near end,below] {$\takerRV=\real$} (4,10);
\draw [dotted] (0,4) edge node[dotted,sloped,near end,below] {$\makerRV=\real$} (10,4);

\draw [dotted,<->] (-1,1) -- node[dotted,near start,above] {\textbf{$\realRV$}} (1,-1);
\draw [dotted,<->] (1,9) -- node[dotted,near end,above] {$\realRV$} (-1,11);
\draw [dotted,<->] (11,9) -- node[dotted,near end,above] {$\realRV$} (9,11);
\draw [dotted,->] (10,0) -- node[dotted,near end,above] {$\realRV$} (11,-1);

\node [draw=none,anchor=west,color=black] at (4.1,3.8) {$(\real,\real,\real)$};
\node at (4,4)[circle, fill=black, scale=0.35] {};
\node at (0,0)[circle, fill=black, scale=0.35] {};

\node [draw=none] at (2,7) {$\takerRV<\real<\makerRV$};
\node [draw=none] at (6,8) {$\real<\takerRV<\makerRV$};
\node [draw=none] at (8,6) {$\real<\makerRV<\takerRV$};
\node [draw=none] at (7,2) {$\makerRV<\real<\takerRV$};
\node [draw=none] at (2.5,1) {$\makerRV<\takerRV<\real$};
\node [draw=none] at (1.5,3) {$\makerRV<\takerRV<\real$};

\node [draw=none] at (4,-0.3) {$\real$};
\node [draw=none,anchor=east] at (0,4) {$\real$};

\node [draw=none,anchor=east] at (0,5) {\textbf{$\makerRV$}};
\node [draw=none,anchor=north] at (5,0) {\textbf{$\takerRV$}};

\node [draw=none] at (-0.2,-0.5) {$(\real,0,0)$};

% \draw [thick,->] (4.5,4.5) edge node[] {$\EX(profit)$} (4.5,2.5);
% \draw [thick,->] (3.5,3.5) edge node[] {$\EX(profit)$} (3.5,5.5);

% }
\end{tikzpicture} };
\draw[rotate around={30:(-1,-1)}, red, dashed] (-1,-1) ellipse (2cm and 0.5cm);
\draw[rotate around={30:(-1,-1)}, red, dashed] (-1,-1) ellipse (1cm and 0.25cm);
\end{tikzpicture}
}
\resizebox{0.32\textwidth}{!}{
\begin{tikzpicture}
\node[scale=1] (grid) at (0,0){\begin{tikzpicture}
% \tikzset{
%     mygauss/.pic={

\node [dotted, draw=gray, shape=rectangle, minimum width=10cm, minimum height=10cm, anchor=south west] at (-1,1) {};
\node [dotted, draw=gray, shape=rectangle, minimum width=10cm, minimum height=10cm, anchor=south west] at (-0.5,0.5) {};
\node [draw, shape=rectangle, minimum width=10cm, minimum height=10cm, anchor=south west] at (0,0) {};

\draw[help lines, color=gray!30, dashed] (0,0) grid (10,10);

\foreach \x in {0.5, ..., 10} {
    \foreach \y in {0.5, ..., 9.5} {
        \pgfmathsetmacro{\vy}{0.08*(\y-4)}
            
        \ifthenelse{\lengthtest{\y pt = \x pt}}{}{
        \ifthenelse{\lengthtest{\y pt > \x pt}}{
        
        \ifthenelse{\lengthtest{\y pt < 4 pt}}{
        \node at (\x, \y)[circle, fill=Red, scale=0.15] {};
        \draw[->,color=Red!90] (\x,\y) -- (\x, \y+\vy);
        }{
        \node at (\x, \y)[circle, fill=Green, scale=0.15] {};
        \draw[->,color=Green!90] (\x,\y) -- (\x, \y+\vy);
        }

        }{
        
        \ifthenelse{\lengthtest{\y pt > 4 pt}}{
        \node at (\x, \y)[circle, fill=Red, scale=0.15] {};
        \draw[->,color=Red!90\x] (\x,\y) -- (\x, \y+\vy);
        }{
        \node at (\x, \y)[circle, fill=Green, scale=0.15] {};
        \draw[->,color=Green!90\x] (\x,\y) -- (\x, \y+\vy);
        }
        };
        }
    }
}

\draw [dotted] (0,0) edge node[dotted,near end,sloped,below] {$\makerRV=\takerRV$} (10,10);

\draw [dotted] (4,0) edge node[dotted,sloped,near end,below] {$\takerRV=\real$} (4,10);
\draw [dotted] (0,4) edge node[dotted,sloped,near end,below] {$\makerRV=\real$} (10,4);

\draw [dotted,<->] (-1,1) -- node[dotted,near start,above] {\textbf{$\realRV$}} (1,-1);
\draw [dotted,<->] (1,9) -- node[dotted,near end,above] {$\realRV$} (-1,11);
\draw [dotted,<->] (11,9) -- node[dotted,near end,above] {$\realRV$} (9,11);
\draw [dotted,->] (10,0) -- node[dotted,near end,above] {$\realRV$} (11,-1);

\node [draw=none,anchor=west,color=black] at (4.1,3.8) {$(\real,\real,\real)$};
\node at (4,4)[circle, fill=black, scale=0.35] {};
\node at (0,0)[circle, fill=black, scale=0.35] {};

\node [draw=none] at (2,7) {$\takerRV<\real<\makerRV$};
\node [draw=none] at (6,8) {$\real<\takerRV<\makerRV$};
\node [draw=none] at (8,6) {$\real<\makerRV<\takerRV$};
\node [draw=none] at (7,2) {$\makerRV<\real<\takerRV$};
\node [draw=none] at (2.5,1) {$\makerRV<\takerRV<\real$};
\node [draw=none] at (1.5,3) {$\makerRV<\takerRV<\real$};

\node [draw=none] at (4,-0.3) {$\real$};
\node [draw=none,anchor=east] at (0,4) {$\real$};

\node [draw=none,anchor=east] at (0,5) {\textbf{$\makerRV$}};
\node [draw=none,anchor=north] at (5,0) {\textbf{$\takerRV$}};

\node [draw=none] at (-0.2,-0.5) {$(\real,0,0)$};

% \draw [thick,->] (4.5,4.5) edge node[] {$\EX(profit)$} (4.5,2.5);
% \draw [thick,->] (3.5,3.5) edge node[] {$\EX(profit)$} (3.5,5.5);

% }
\end{tikzpicture} };
\draw[rotate=0, blue, dashed] (-1.,-1) ellipse (2cm and 1.1cm);
\draw[rotate=0, blue, dashed] (-1.,-1) ellipse (1cm and .5cm);
\end{tikzpicture}
}
\resizebox{0.32\textwidth}{!}{
\begin{tikzpicture}
\node[scale=1] (grid) at (0,0){\begin{tikzpicture}
% \tikzset{
%     mygauss/.pic={

\node [dotted, draw=gray, shape=rectangle, minimum width=10cm, minimum height=10cm, anchor=south west] at (-1,1) {};
\node [dotted, draw=gray, shape=rectangle, minimum width=10cm, minimum height=10cm, anchor=south west] at (-0.5,0.5) {};
\node [draw, shape=rectangle, minimum width=10cm, minimum height=10cm, anchor=south west] at (0,0) {};

\draw[help lines, color=gray!30, dashed] (0,0) grid (10,10);

\foreach \x in {0.5, ..., 10} {
    \foreach \y in {0.5, ..., 9.5} {
        \pgfmathsetmacro{\vy}{0.08*(\y-4)}
            
        \ifthenelse{\lengthtest{\y pt = \x pt}}{}{
        \ifthenelse{\lengthtest{\y pt > \x pt}}{
        
        \ifthenelse{\lengthtest{\y pt < 4 pt}}{
        \node at (\x, \y)[circle, fill=Red, scale=0.15] {};
        \draw[->,color=Red!90] (\x,\y) -- (\x, \y+\vy);
        }{
        \node at (\x, \y)[circle, fill=Green, scale=0.15] {};
        \draw[->,color=Green!90] (\x,\y) -- (\x, \y+\vy);
        }

        }{
        
        \ifthenelse{\lengthtest{\y pt > 4 pt}}{
        \node at (\x, \y)[circle, fill=Red, scale=0.15] {};
        \draw[->,color=Red!90\x] (\x,\y) -- (\x, \y+\vy);
        }{
        \node at (\x, \y)[circle, fill=Green, scale=0.15] {};
        \draw[->,color=Green!90\x] (\x,\y) -- (\x, \y+\vy);
        }
        };
        }
    }
}

\draw [dotted] (0,0) edge node[dotted,near end,sloped,below] {$\makerRV=\takerRV$} (10,10);

\draw [dotted] (4,0) edge node[dotted,sloped,near end,below] {$\takerRV=\real$} (4,10);
\draw [dotted] (0,4) edge node[dotted,sloped,near end,below] {$\makerRV=\real$} (10,4);

\draw [dotted,<->] (-1,1) -- node[dotted,near start,above] {\textbf{$\realRV$}} (1,-1);
\draw [dotted,<->] (1,9) -- node[dotted,near end,above] {$\realRV$} (-1,11);
\draw [dotted,<->] (11,9) -- node[dotted,near end,above] {$\realRV$} (9,11);
\draw [dotted,->] (10,0) -- node[dotted,near end,above] {$\realRV$} (11,-1);

\node [draw=none,anchor=west,color=black] at (4.1,3.8) {$(\real,\real,\real)$};
\node at (4,4)[circle, fill=black, scale=0.35] {};
\node at (0,0)[circle, fill=black, scale=0.35] {};

\node [draw=none] at (2,7) {$\takerRV<\real<\makerRV$};
\node [draw=none] at (6,8) {$\real<\takerRV<\makerRV$};
\node [draw=none] at (8,6) {$\real<\makerRV<\takerRV$};
\node [draw=none] at (7,2) {$\makerRV<\real<\takerRV$};
\node [draw=none] at (2.5,1) {$\makerRV<\takerRV<\real$};
\node [draw=none] at (1.5,3) {$\makerRV<\takerRV<\real$};

\node [draw=none] at (4,-0.3) {$\real$};
\node [draw=none,anchor=east] at (0,4) {$\real$};

\node [draw=none,anchor=east] at (0,5) {\textbf{$\makerRV$}};
\node [draw=none,anchor=north] at (5,0) {\textbf{$\takerRV$}};

\node [draw=none] at (-0.2,-0.5) {$(\real,0,0)$};

% \draw [thick,->] (4.5,4.5) edge node[] {$\EX(profit)$} (4.5,2.5);
% \draw [thick,->] (3.5,3.5) edge node[] {$\EX(profit)$} (3.5,5.5);

% }
\end{tikzpicture} };
\draw[rotate around={55:(-1,-1)}, Green, dashed] (-1,-1) ellipse (0.5cm and 2cm);
\draw[rotate around={55:(-1,-1)}, Green, dashed] (-1,-1) ellipse (0.25cm and 1cm);
\end{tikzpicture}
}
\caption{A sketch of the profitability regions from Figure~\ref{fig:projection} with an example distribution of estimates density for the case of unbiased estimations of $\makerRV$ and $\takerRV$ w.r.t $\realRV=\real$, where $\var{\takerRV|\realRV} > \var{\makerRV|\realRV}$, corresponding to a common situation in practice (Section~\ref{sec:est_dist},~\ref{sec:experiments}). The effect of $\corr{\takerRV,\makerRV|\realRV}$ on profitability is demonstrated on a highly correlated model (left,red), an independent model (middle,blue), and a highly negatively correlated model (right,green). By decreasing the correlation, progressively larger parts of the profitable regions of the distributions (green) are being covered, reflecting the corresponding increase in returns.}
\label{fig:unbiased}
\end{figure}

\subsection{Biased Estimators}
% Let us now briefly review the remaining cases of systematic bias present in either the $\makerRV$ or $\takerRV$ estimators. We divide the possible distributions into 2 principal cases comparing the biases of $\makerRV$ and $\takerRV$.

% \paragraph{Market Taker Biased More}

While common, the assumption of point-wise unbiased estimators might be seen as too strict in practice. The market maker is very unlikely to be overly biased, due to its constant exposure to the traders, who would likely exploit such easy opportunities (Section~\ref{sec:est_dist}). Nevertheless the market takers are generally free to come up with all sorts of models. Let us briefly review the situation where the trader's model $\takerFunc$, which we seek to optimize, is more biased than $\makerFunc$ w.r.t. $\realRV$, i.e. $\bias{\takerRV} > \bias{\makerRV}$. 

In this setting, one can craft counterexamples showing that $\takerRV$ with $\corr{\takerRV,\makerRV|\realRV}=-1$ is not universally the most profitable model $\takerFunc$ anymore\footnote{For instance, consider a distribution where the model $\takerFunc$ is biased w.r.t. $\realRV$ by some $\delta$ as

\begin{equation}
    \takerRV =
    \begin{cases}
            \makerRV - 1.1 \cdot \delta, & \text{ if~~}  \takerRV > r\\
            -\makerRV+ 1.1 \cdot \delta, & \text{ if~~}  \takerRV < r
    \end{cases}
\end{equation}
This $\delta$-biased model $\takerFunc$ is set to make maximal possible profit, and decreasing its correlation with $\makerFunc$ will actually hurt its performance. However, achieving such a distribution of estimates $\mathrm{P}_{\delta}(\realRV,\makerRV,\takerRV)$ is close to impossible in practice, as it is carefully crafted w.r.t. the unknown value of $r$. Consequently, this scenario is highly unstable w.r.t. $r$ as well as variance of $\takerRV$, a decreasing of which will paradoxically lead to complete loss of all returns.}.
While not the universally best possible model, a decorrelated $\takerRV$ will still perform very well in practice here. Particularly, it will be consistently better than a highly correlated model, even if the latter has a lower variance $\var{\takerRV|\realRV}$. Interestingly, it is also better than a model with zero variance (vertical line), i.e. given some bias, we are able to turn the variance into an advantage by decreasing correlation with $\makerRV$. Lastly, the minimal correlation will be also typically better than no correlation for common elliptical or uniform conditional distributions. 
% \todo{ukazat protipriklad}
A visualization of this setting, where $\bias{\takerRV} > \bias{\makerRV}$, is displayed in Figure~\ref{fig:taker_biased} for an example elliptical distribution. Note that the same reasoning is also applicable to cases where the $\bias{\takerRV} = \bias{\makerRV}$.

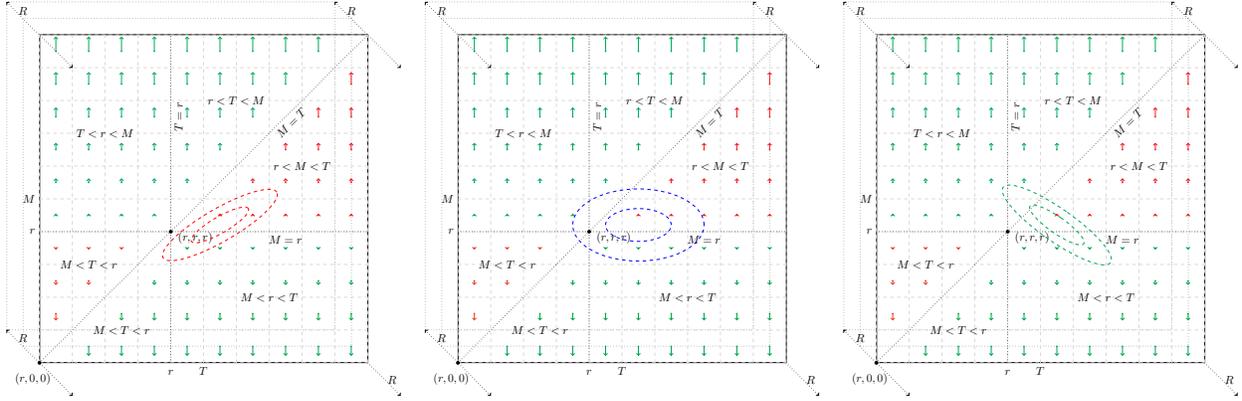
\begin{figure}

\centering
\resizebox{0.32\textwidth}{!}{
\begin{tikzpicture}
\node[scale=1] (grid) at (0,0){\begin{tikzpicture}
% \tikzset{
%     mygauss/.pic={

\node [dotted, draw=gray, shape=rectangle, minimum width=10cm, minimum height=10cm, anchor=south west] at (-1,1) {};
\node [dotted, draw=gray, shape=rectangle, minimum width=10cm, minimum height=10cm, anchor=south west] at (-0.5,0.5) {};
\node [draw, shape=rectangle, minimum width=10cm, minimum height=10cm, anchor=south west] at (0,0) {};

\draw[help lines, color=gray!30, dashed] (0,0) grid (10,10);

\foreach \x in {0.5, ..., 10} {
    \foreach \y in {0.5, ..., 9.5} {
        \pgfmathsetmacro{\vy}{0.08*(\y-4)}
            
        \ifthenelse{\lengthtest{\y pt = \x pt}}{}{
        \ifthenelse{\lengthtest{\y pt > \x pt}}{
        
        \ifthenelse{\lengthtest{\y pt < 4 pt}}{
        \node at (\x, \y)[circle, fill=Red, scale=0.15] {};
        \draw[->,color=Red!90] (\x,\y) -- (\x, \y+\vy);
        }{
        \node at (\x, \y)[circle, fill=Green, scale=0.15] {};
        \draw[->,color=Green!90] (\x,\y) -- (\x, \y+\vy);
        }

        }{
        
        \ifthenelse{\lengthtest{\y pt > 4 pt}}{
        \node at (\x, \y)[circle, fill=Red, scale=0.15] {};
        \draw[->,color=Red!90\x] (\x,\y) -- (\x, \y+\vy);
        }{
        \node at (\x, \y)[circle, fill=Green, scale=0.15] {};
        \draw[->,color=Green!90\x] (\x,\y) -- (\x, \y+\vy);
        }
        };
        }
    }
}

\draw [dotted] (0,0) edge node[dotted,near end,sloped,below] {$\makerRV=\takerRV$} (10,10);

\draw [dotted] (4,0) edge node[dotted,sloped,near end,below] {$\takerRV=\real$} (4,10);
\draw [dotted] (0,4) edge node[dotted,sloped,near end,below] {$\makerRV=\real$} (10,4);

\draw [dotted,<->] (-1,1) -- node[dotted,near start,above] {\textbf{$\realRV$}} (1,-1);
\draw [dotted,<->] (1,9) -- node[dotted,near end,above] {$\realRV$} (-1,11);
\draw [dotted,<->] (11,9) -- node[dotted,near end,above] {$\realRV$} (9,11);
\draw [dotted,->] (10,0) -- node[dotted,near end,above] {$\realRV$} (11,-1);

\node [draw=none,anchor=west,color=black] at (4.1,3.8) {$(\real,\real,\real)$};
\node at (4,4)[circle, fill=black, scale=0.35] {};
\node at (0,0)[circle, fill=black, scale=0.35] {};

\node [draw=none] at (2,7) {$\takerRV<\real<\makerRV$};
\node [draw=none] at (6,8) {$\real<\takerRV<\makerRV$};
\node [draw=none] at (8,6) {$\real<\makerRV<\takerRV$};
\node [draw=none] at (7,2) {$\makerRV<\real<\takerRV$};
\node [draw=none] at (2.5,1) {$\makerRV<\takerRV<\real$};
\node [draw=none] at (1.5,3) {$\makerRV<\takerRV<\real$};

\node [draw=none] at (4,-0.3) {$\real$};
\node [draw=none,anchor=east] at (0,4) {$\real$};

\node [draw=none,anchor=east] at (0,5) {\textbf{$\makerRV$}};
\node [draw=none,anchor=north] at (5,0) {\textbf{$\takerRV$}};

\node [draw=none] at (-0.2,-0.5) {$(\real,0,0)$};

% \draw [thick,->] (4.5,4.5) edge node[] {$\EX(profit)$} (4.5,2.5);
% \draw [thick,->] (3.5,3.5) edge node[] {$\EX(profit)$} (3.5,5.5);

% }
\end{tikzpicture} };
\draw[rotate around={30:(.5,-.8)}, red, dashed] (.5,-.8) ellipse (2cm and 0.5cm);
\draw[rotate around={30:(.5,-.8)}, red, dashed] (.5,-.8) ellipse (1cm and 0.25cm);
\end{tikzpicture}
}
\resizebox{0.32\textwidth}{!}{
\begin{tikzpicture}
\node[scale=1] (grid) at (0,0){\begin{tikzpicture}
% \tikzset{
%     mygauss/.pic={

\node [dotted, draw=gray, shape=rectangle, minimum width=10cm, minimum height=10cm, anchor=south west] at (-1,1) {};
\node [dotted, draw=gray, shape=rectangle, minimum width=10cm, minimum height=10cm, anchor=south west] at (-0.5,0.5) {};
\node [draw, shape=rectangle, minimum width=10cm, minimum height=10cm, anchor=south west] at (0,0) {};

\draw[help lines, color=gray!30, dashed] (0,0) grid (10,10);

\foreach \x in {0.5, ..., 10} {
    \foreach \y in {0.5, ..., 9.5} {
        \pgfmathsetmacro{\vy}{0.08*(\y-4)}
            
        \ifthenelse{\lengthtest{\y pt = \x pt}}{}{
        \ifthenelse{\lengthtest{\y pt > \x pt}}{
        
        \ifthenelse{\lengthtest{\y pt < 4 pt}}{
        \node at (\x, \y)[circle, fill=Red, scale=0.15] {};
        \draw[->,color=Red!90] (\x,\y) -- (\x, \y+\vy);
        }{
        \node at (\x, \y)[circle, fill=Green, scale=0.15] {};
        \draw[->,color=Green!90] (\x,\y) -- (\x, \y+\vy);
        }

        }{
        
        \ifthenelse{\lengthtest{\y pt > 4 pt}}{
        \node at (\x, \y)[circle, fill=Red, scale=0.15] {};
        \draw[->,color=Red!90\x] (\x,\y) -- (\x, \y+\vy);
        }{
        \node at (\x, \y)[circle, fill=Green, scale=0.15] {};
        \draw[->,color=Green!90\x] (\x,\y) -- (\x, \y+\vy);
        }
        };
        }
    }
}

\draw [dotted] (0,0) edge node[dotted,near end,sloped,below] {$\makerRV=\takerRV$} (10,10);

\draw [dotted] (4,0) edge node[dotted,sloped,near end,below] {$\takerRV=\real$} (4,10);
\draw [dotted] (0,4) edge node[dotted,sloped,near end,below] {$\makerRV=\real$} (10,4);

\draw [dotted,<->] (-1,1) -- node[dotted,near start,above] {\textbf{$\realRV$}} (1,-1);
\draw [dotted,<->] (1,9) -- node[dotted,near end,above] {$\realRV$} (-1,11);
\draw [dotted,<->] (11,9) -- node[dotted,near end,above] {$\realRV$} (9,11);
\draw [dotted,->] (10,0) -- node[dotted,near end,above] {$\realRV$} (11,-1);

\node [draw=none,anchor=west,color=black] at (4.1,3.8) {$(\real,\real,\real)$};
\node at (4,4)[circle, fill=black, scale=0.35] {};
\node at (0,0)[circle, fill=black, scale=0.35] {};

\node [draw=none] at (2,7) {$\takerRV<\real<\makerRV$};
\node [draw=none] at (6,8) {$\real<\takerRV<\makerRV$};
\node [draw=none] at (8,6) {$\real<\makerRV<\takerRV$};
\node [draw=none] at (7,2) {$\makerRV<\real<\takerRV$};
\node [draw=none] at (2.5,1) {$\makerRV<\takerRV<\real$};
\node [draw=none] at (1.5,3) {$\makerRV<\takerRV<\real$};

\node [draw=none] at (4,-0.3) {$\real$};
\node [draw=none,anchor=east] at (0,4) {$\real$};

\node [draw=none,anchor=east] at (0,5) {\textbf{$\makerRV$}};
\node [draw=none,anchor=north] at (5,0) {\textbf{$\takerRV$}};

\node [draw=none] at (-0.2,-0.5) {$(\real,0,0)$};

% \draw [thick,->] (4.5,4.5) edge node[] {$\EX(profit)$} (4.5,2.5);
% \draw [thick,->] (3.5,3.5) edge node[] {$\EX(profit)$} (3.5,5.5);

% }
\end{tikzpicture} };
\draw[rotate=0, blue, dashed] (.5,-.8) ellipse (2cm and 1.1cm);
\draw[rotate=0, blue, dashed] (.5,-.8) ellipse (1cm and .5cm);
\end{tikzpicture}
}
\resizebox{0.32\textwidth}{!}{
\begin{tikzpicture}
\node[scale=1] (grid) at (0,0){\begin{tikzpicture}
% \tikzset{
%     mygauss/.pic={

\node [dotted, draw=gray, shape=rectangle, minimum width=10cm, minimum height=10cm, anchor=south west] at (-1,1) {};
\node [dotted, draw=gray, shape=rectangle, minimum width=10cm, minimum height=10cm, anchor=south west] at (-0.5,0.5) {};
\node [draw, shape=rectangle, minimum width=10cm, minimum height=10cm, anchor=south west] at (0,0) {};

\draw[help lines, color=gray!30, dashed] (0,0) grid (10,10);

\foreach \x in {0.5, ..., 10} {
    \foreach \y in {0.5, ..., 9.5} {
        \pgfmathsetmacro{\vy}{0.08*(\y-4)}
            
        \ifthenelse{\lengthtest{\y pt = \x pt}}{}{
        \ifthenelse{\lengthtest{\y pt > \x pt}}{
        
        \ifthenelse{\lengthtest{\y pt < 4 pt}}{
        \node at (\x, \y)[circle, fill=Red, scale=0.15] {};
        \draw[->,color=Red!90] (\x,\y) -- (\x, \y+\vy);
        }{
        \node at (\x, \y)[circle, fill=Green, scale=0.15] {};
        \draw[->,color=Green!90] (\x,\y) -- (\x, \y+\vy);
        }

        }{
        
        \ifthenelse{\lengthtest{\y pt > 4 pt}}{
        \node at (\x, \y)[circle, fill=Red, scale=0.15] {};
        \draw[->,color=Red!90\x] (\x,\y) -- (\x, \y+\vy);
        }{
        \node at (\x, \y)[circle, fill=Green, scale=0.15] {};
        \draw[->,color=Green!90\x] (\x,\y) -- (\x, \y+\vy);
        }
        };
        }
    }
}

\draw [dotted] (0,0) edge node[dotted,near end,sloped,below] {$\makerRV=\takerRV$} (10,10);

\draw [dotted] (4,0) edge node[dotted,sloped,near end,below] {$\takerRV=\real$} (4,10);
\draw [dotted] (0,4) edge node[dotted,sloped,near end,below] {$\makerRV=\real$} (10,4);

\draw [dotted,<->] (-1,1) -- node[dotted,near start,above] {\textbf{$\realRV$}} (1,-1);
\draw [dotted,<->] (1,9) -- node[dotted,near end,above] {$\realRV$} (-1,11);
\draw [dotted,<->] (11,9) -- node[dotted,near end,above] {$\realRV$} (9,11);
\draw [dotted,->] (10,0) -- node[dotted,near end,above] {$\realRV$} (11,-1);

\node [draw=none,anchor=west,color=black] at (4.1,3.8) {$(\real,\real,\real)$};
\node at (4,4)[circle, fill=black, scale=0.35] {};
\node at (0,0)[circle, fill=black, scale=0.35] {};

\node [draw=none] at (2,7) {$\takerRV<\real<\makerRV$};
\node [draw=none] at (6,8) {$\real<\takerRV<\makerRV$};
\node [draw=none] at (8,6) {$\real<\makerRV<\takerRV$};
\node [draw=none] at (7,2) {$\makerRV<\real<\takerRV$};
\node [draw=none] at (2.5,1) {$\makerRV<\takerRV<\real$};
\node [draw=none] at (1.5,3) {$\makerRV<\takerRV<\real$};

\node [draw=none] at (4,-0.3) {$\real$};
\node [draw=none,anchor=east] at (0,4) {$\real$};

\node [draw=none,anchor=east] at (0,5) {\textbf{$\makerRV$}};
\node [draw=none,anchor=north] at (5,0) {\textbf{$\takerRV$}};

\node [draw=none] at (-0.2,-0.5) {$(\real,0,0)$};

% \draw [thick,->] (4.5,4.5) edge node[] {$\EX(profit)$} (4.5,2.5);
% \draw [thick,->] (3.5,3.5) edge node[] {$\EX(profit)$} (3.5,5.5);

% }
\end{tikzpicture} };
\draw[rotate around={55:(.5,-.8)}, Green, dashed] (.5,-.8) ellipse (0.5cm and 2cm);
\draw[rotate around={55:(.5,-.8)}, Green, dashed] (.5,-.8) ellipse (0.25cm and 1cm);
\end{tikzpicture}
}
\caption{A sketch of the distribution of estimates density from Figure~\ref{fig:projection} for the case where the trader $\takerRV$ is biased more than the market maker $\makerRV$ w.r.t $\realRV=\real$. The decorrelation technique still commonly helps as larger parts of the profitable regions of the distribution are being covered by decreasing $\corr{\takerRV,\makerRV|\realRV}$.}
\label{fig:taker_biased}
\end{figure}

\subsection{Having a Superior Model}
\label{sec:superior_model}

The primary motivation behind the concept of lowering the correlation with the market is to make profits with models of inferior quality, which would commonly yield negative profits otherwise. We argued that such a situation is common, since the market (maker) price tends to be a very good estimate of the true value in any fairly efficient market (Section~\ref{sec:est_dist}). Nevertheless, for completeness, let us consider the opposite case of having a superior model, i.e. a situation where $err(\takerFunc) < err(\makerFunc)$. Following the statistical decomposition of estimation errors in terms of bias and variance (Section~\ref{sec:estimators}), let us separately consider two cases of such superiority through a model with (i) lower variance and (ii) lower bias.

\paragraph{Superior Variance}

For a superior model $\takerRV$ with a lower conditional variance $\var{\takerRV|\realRV} < \var{\makerRV|\realRV}$, the concept of decorrelation (Definition~\ref{def:decorrelation}) no longer works as a consistent profit enhancement, even for common, realistic market distributions. Particularly, decreasing the correlation $\corr{\takerRV,\makerRV|\realRV}$ can actually decrease the returns in many scenarios. We again depict the concept on an example elliptical distribution in Figure~\ref{fig:switched_variances}. The variances of the estimators are exactly opposite to those from Figure~\ref{fig:unbiased}. We can see that the returns with an independent model $\corr{\takerRV,\makerRV}=0$ are lower than those of a highly correlated model $\corr{\takerRV,\makerRV|\realRV}=1$. Note however that the decrease of returns in this case is smaller than the increase in the opposite case (i.e. shift from $\corr{\takerRV,\makerRV|\realRV}=1$ to $\corr{\takerRV,\takerRV|\realRV}=0$ in Figure~\ref{fig:unbiased}). Finally, the profits of $\corr{\takerRV,\makerRV|\realRV}=-1$ are still maximal.

\begin{figure}

\centering
\resizebox{0.32\textwidth}{!}{
\begin{tikzpicture}
\node[scale=1] (grid) at (0,0){\begin{tikzpicture}
% \tikzset{
%     mygauss/.pic={

\node [dotted, draw=gray, shape=rectangle, minimum width=10cm, minimum height=10cm, anchor=south west] at (-1,1) {};
\node [dotted, draw=gray, shape=rectangle, minimum width=10cm, minimum height=10cm, anchor=south west] at (-0.5,0.5) {};
\node [draw, shape=rectangle, minimum width=10cm, minimum height=10cm, anchor=south west] at (0,0) {};

\draw[help lines, color=gray!30, dashed] (0,0) grid (10,10);

\foreach \x in {0.5, ..., 10} {
    \foreach \y in {0.5, ..., 9.5} {
        \pgfmathsetmacro{\vy}{0.08*(\y-4)}
            
        \ifthenelse{\lengthtest{\y pt = \x pt}}{}{
        \ifthenelse{\lengthtest{\y pt > \x pt}}{
        
        \ifthenelse{\lengthtest{\y pt < 4 pt}}{
        \node at (\x, \y)[circle, fill=Red, scale=0.15] {};
        \draw[->,color=Red!90] (\x,\y) -- (\x, \y+\vy);
        }{
        \node at (\x, \y)[circle, fill=Green, scale=0.15] {};
        \draw[->,color=Green!90] (\x,\y) -- (\x, \y+\vy);
        }

        }{
        
        \ifthenelse{\lengthtest{\y pt > 4 pt}}{
        \node at (\x, \y)[circle, fill=Red, scale=0.15] {};
        \draw[->,color=Red!90\x] (\x,\y) -- (\x, \y+\vy);
        }{
        \node at (\x, \y)[circle, fill=Green, scale=0.15] {};
        \draw[->,color=Green!90\x] (\x,\y) -- (\x, \y+\vy);
        }
        };
        }
    }
}

\draw [dotted] (0,0) edge node[dotted,near end,sloped,below] {$\makerRV=\takerRV$} (10,10);

\draw [dotted] (4,0) edge node[dotted,sloped,near end,below] {$\takerRV=\real$} (4,10);
\draw [dotted] (0,4) edge node[dotted,sloped,near end,below] {$\makerRV=\real$} (10,4);

\draw [dotted,<->] (-1,1) -- node[dotted,near start,above] {\textbf{$\realRV$}} (1,-1);
\draw [dotted,<->] (1,9) -- node[dotted,near end,above] {$\realRV$} (-1,11);
\draw [dotted,<->] (11,9) -- node[dotted,near end,above] {$\realRV$} (9,11);
\draw [dotted,->] (10,0) -- node[dotted,near end,above] {$\realRV$} (11,-1);

\node [draw=none,anchor=west,color=black] at (4.1,3.8) {$(\real,\real,\real)$};
\node at (4,4)[circle, fill=black, scale=0.35] {};
\node at (0,0)[circle, fill=black, scale=0.35] {};

\node [draw=none] at (2,7) {$\takerRV<\real<\makerRV$};
\node [draw=none] at (6,8) {$\real<\takerRV<\makerRV$};
\node [draw=none] at (8,6) {$\real<\makerRV<\takerRV$};
\node [draw=none] at (7,2) {$\makerRV<\real<\takerRV$};
\node [draw=none] at (2.5,1) {$\makerRV<\takerRV<\real$};
\node [draw=none] at (1.5,3) {$\makerRV<\takerRV<\real$};

\node [draw=none] at (4,-0.3) {$\real$};
\node [draw=none,anchor=east] at (0,4) {$\real$};

\node [draw=none,anchor=east] at (0,5) {\textbf{$\makerRV$}};
\node [draw=none,anchor=north] at (5,0) {\textbf{$\takerRV$}};

\node [draw=none] at (-0.2,-0.5) {$(\real,0,0)$};

% \draw [thick,->] (4.5,4.5) edge node[] {$\EX(profit)$} (4.5,2.5);
% \draw [thick,->] (3.5,3.5) edge node[] {$\EX(profit)$} (3.5,5.5);

% }
\end{tikzpicture} };
\draw[rotate=55, red, dashed] (-1.45,0.25) ellipse (2cm and 0.5cm);
\draw[rotate=55, red, dashed] (-1.45,0.25) ellipse (1cm and 0.25cm);
\end{tikzpicture}
}
\resizebox{0.32\textwidth}{!}{
\begin{tikzpicture}
\node[scale=1] (grid) at (0,0){\begin{tikzpicture}
% \tikzset{
%     mygauss/.pic={

\node [dotted, draw=gray, shape=rectangle, minimum width=10cm, minimum height=10cm, anchor=south west] at (-1,1) {};
\node [dotted, draw=gray, shape=rectangle, minimum width=10cm, minimum height=10cm, anchor=south west] at (-0.5,0.5) {};
\node [draw, shape=rectangle, minimum width=10cm, minimum height=10cm, anchor=south west] at (0,0) {};

\draw[help lines, color=gray!30, dashed] (0,0) grid (10,10);

\foreach \x in {0.5, ..., 10} {
    \foreach \y in {0.5, ..., 9.5} {
        \pgfmathsetmacro{\vy}{0.08*(\y-4)}
            
        \ifthenelse{\lengthtest{\y pt = \x pt}}{}{
        \ifthenelse{\lengthtest{\y pt > \x pt}}{
        
        \ifthenelse{\lengthtest{\y pt < 4 pt}}{
        \node at (\x, \y)[circle, fill=Red, scale=0.15] {};
        \draw[->,color=Red!90] (\x,\y) -- (\x, \y+\vy);
        }{
        \node at (\x, \y)[circle, fill=Green, scale=0.15] {};
        \draw[->,color=Green!90] (\x,\y) -- (\x, \y+\vy);
        }

        }{
        
        \ifthenelse{\lengthtest{\y pt > 4 pt}}{
        \node at (\x, \y)[circle, fill=Red, scale=0.15] {};
        \draw[->,color=Red!90\x] (\x,\y) -- (\x, \y+\vy);
        }{
        \node at (\x, \y)[circle, fill=Green, scale=0.15] {};
        \draw[->,color=Green!90\x] (\x,\y) -- (\x, \y+\vy);
        }
        };
        }
    }
}

\draw [dotted] (0,0) edge node[dotted,near end,sloped,below] {$\makerRV=\takerRV$} (10,10);

\draw [dotted] (4,0) edge node[dotted,sloped,near end,below] {$\takerRV=\real$} (4,10);
\draw [dotted] (0,4) edge node[dotted,sloped,near end,below] {$\makerRV=\real$} (10,4);

\draw [dotted,<->] (-1,1) -- node[dotted,near start,above] {\textbf{$\realRV$}} (1,-1);
\draw [dotted,<->] (1,9) -- node[dotted,near end,above] {$\realRV$} (-1,11);
\draw [dotted,<->] (11,9) -- node[dotted,near end,above] {$\realRV$} (9,11);
\draw [dotted,->] (10,0) -- node[dotted,near end,above] {$\realRV$} (11,-1);

\node [draw=none,anchor=west,color=black] at (4.1,3.8) {$(\real,\real,\real)$};
\node at (4,4)[circle, fill=black, scale=0.35] {};
\node at (0,0)[circle, fill=black, scale=0.35] {};

\node [draw=none] at (2,7) {$\takerRV<\real<\makerRV$};
\node [draw=none] at (6,8) {$\real<\takerRV<\makerRV$};
\node [draw=none] at (8,6) {$\real<\makerRV<\takerRV$};
\node [draw=none] at (7,2) {$\makerRV<\real<\takerRV$};
\node [draw=none] at (2.5,1) {$\makerRV<\takerRV<\real$};
\node [draw=none] at (1.5,3) {$\makerRV<\takerRV<\real$};

\node [draw=none] at (4,-0.3) {$\real$};
\node [draw=none,anchor=east] at (0,4) {$\real$};

\node [draw=none,anchor=east] at (0,5) {\textbf{$\makerRV$}};
\node [draw=none,anchor=north] at (5,0) {\textbf{$\takerRV$}};

\node [draw=none] at (-0.2,-0.5) {$(\real,0,0)$};

% \draw [thick,->] (4.5,4.5) edge node[] {$\EX(profit)$} (4.5,2.5);
% \draw [thick,->] (3.5,3.5) edge node[] {$\EX(profit)$} (3.5,5.5);

% }
\end{tikzpicture} };
\draw[rotate=0, blue, dashed] (-1.,-1) ellipse (1.2cm and 1.65cm);
\draw[rotate=0, blue, dashed] (-1.,-1) ellipse (0.65cm and 1cm);
\end{tikzpicture}
}
\resizebox{0.32\textwidth}{!}{
\begin{tikzpicture}
\node[scale=1] (grid) at (0,0){\begin{tikzpicture}
% \tikzset{
%     mygauss/.pic={

\node [dotted, draw=gray, shape=rectangle, minimum width=10cm, minimum height=10cm, anchor=south west] at (-1,1) {};
\node [dotted, draw=gray, shape=rectangle, minimum width=10cm, minimum height=10cm, anchor=south west] at (-0.5,0.5) {};
\node [draw, shape=rectangle, minimum width=10cm, minimum height=10cm, anchor=south west] at (0,0) {};

\draw[help lines, color=gray!30, dashed] (0,0) grid (10,10);

\foreach \x in {0.5, ..., 10} {
    \foreach \y in {0.5, ..., 9.5} {
        \pgfmathsetmacro{\vy}{0.08*(\y-4)}
            
        \ifthenelse{\lengthtest{\y pt = \x pt}}{}{
        \ifthenelse{\lengthtest{\y pt > \x pt}}{
        
        \ifthenelse{\lengthtest{\y pt < 4 pt}}{
        \node at (\x, \y)[circle, fill=Red, scale=0.15] {};
        \draw[->,color=Red!90] (\x,\y) -- (\x, \y+\vy);
        }{
        \node at (\x, \y)[circle, fill=Green, scale=0.15] {};
        \draw[->,color=Green!90] (\x,\y) -- (\x, \y+\vy);
        }

        }{
        
        \ifthenelse{\lengthtest{\y pt > 4 pt}}{
        \node at (\x, \y)[circle, fill=Red, scale=0.15] {};
        \draw[->,color=Red!90\x] (\x,\y) -- (\x, \y+\vy);
        }{
        \node at (\x, \y)[circle, fill=Green, scale=0.15] {};
        \draw[->,color=Green!90\x] (\x,\y) -- (\x, \y+\vy);
        }
        };
        }
    }
}

\draw [dotted] (0,0) edge node[dotted,near end,sloped,below] {$\makerRV=\takerRV$} (10,10);

\draw [dotted] (4,0) edge node[dotted,sloped,near end,below] {$\takerRV=\real$} (4,10);
\draw [dotted] (0,4) edge node[dotted,sloped,near end,below] {$\makerRV=\real$} (10,4);

\draw [dotted,<->] (-1,1) -- node[dotted,near start,above] {\textbf{$\realRV$}} (1,-1);
\draw [dotted,<->] (1,9) -- node[dotted,near end,above] {$\realRV$} (-1,11);
\draw [dotted,<->] (11,9) -- node[dotted,near end,above] {$\realRV$} (9,11);
\draw [dotted,->] (10,0) -- node[dotted,near end,above] {$\realRV$} (11,-1);

\node [draw=none,anchor=west,color=black] at (4.1,3.8) {$(\real,\real,\real)$};
\node at (4,4)[circle, fill=black, scale=0.35] {};
\node at (0,0)[circle, fill=black, scale=0.35] {};

\node [draw=none] at (2,7) {$\takerRV<\real<\makerRV$};
\node [draw=none] at (6,8) {$\real<\takerRV<\makerRV$};
\node [draw=none] at (8,6) {$\real<\makerRV<\takerRV$};
\node [draw=none] at (7,2) {$\makerRV<\real<\takerRV$};
\node [draw=none] at (2.5,1) {$\makerRV<\takerRV<\real$};
\node [draw=none] at (1.5,3) {$\makerRV<\takerRV<\real$};

\node [draw=none] at (4,-0.3) {$\real$};
\node [draw=none,anchor=east] at (0,4) {$\real$};

\node [draw=none,anchor=east] at (0,5) {\textbf{$\makerRV$}};
\node [draw=none,anchor=north] at (5,0) {\textbf{$\takerRV$}};

\node [draw=none] at (-0.2,-0.5) {$(\real,0,0)$};

% \draw [thick,->] (4.5,4.5) edge node[] {$\EX(profit)$} (4.5,2.5);
% \draw [thick,->] (3.5,3.5) edge node[] {$\EX(profit)$} (3.5,5.5);

% }
\end{tikzpicture} };
\draw[rotate=30, Green, dashed] (-1.35,-0.4) ellipse (0.5cm and 2cm) ;
\draw[rotate=30, Green, dashed] (-1.35,-0.4) ellipse (0.25cm and 1cm) ;
\end{tikzpicture}
}
\caption{A sketch of the distribution of estimates density from Figure~\ref{fig:projection} for the case of unbiased estimators $\makerRV$ and $\takerRV$ w.r.t $\realRV=\real$, where $\var{\takerRV|\realRV} < \var{\makerRV|\realRV}$, corresponding to a superior estimation model $\takerFunc$ of the trader. For a superior model in terms of the conditional variance, the correlation with the market does not pose such a problem as the model is already profitable, and decreasing it might actually hurt the performance, nevertheless, a model with the lowest covariance still generally provides the highest profitability.}
\label{fig:switched_variances}
\end{figure}

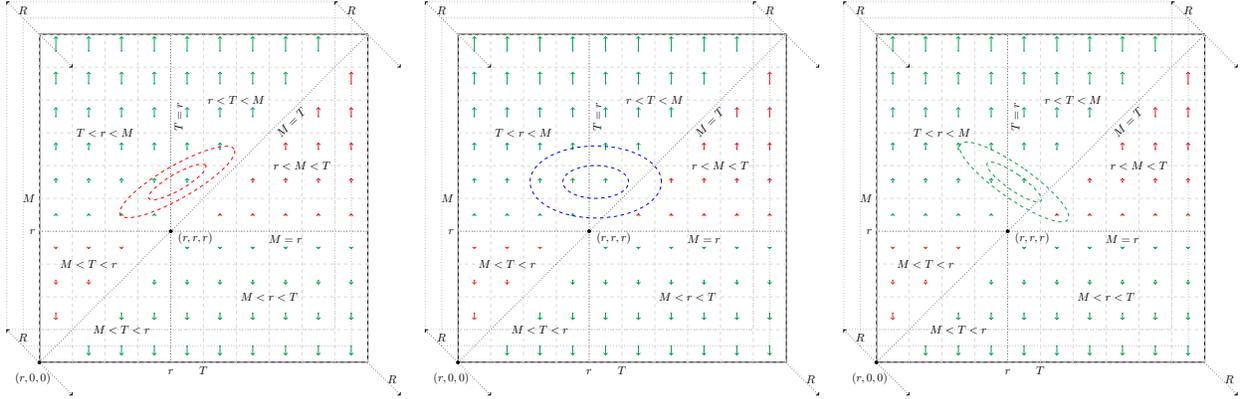
\begin{figure}

\centering
\resizebox{0.32\textwidth}{!}{
\begin{tikzpicture}
\node[scale=1] (grid) at (0,0){\begin{tikzpicture}
% \tikzset{
%     mygauss/.pic={

\node [dotted, draw=gray, shape=rectangle, minimum width=10cm, minimum height=10cm, anchor=south west] at (-1,1) {};
\node [dotted, draw=gray, shape=rectangle, minimum width=10cm, minimum height=10cm, anchor=south west] at (-0.5,0.5) {};
\node [draw, shape=rectangle, minimum width=10cm, minimum height=10cm, anchor=south west] at (0,0) {};

\draw[help lines, color=gray!30, dashed] (0,0) grid (10,10);

\foreach \x in {0.5, ..., 10} {
    \foreach \y in {0.5, ..., 9.5} {
        \pgfmathsetmacro{\vy}{0.08*(\y-4)}
            
        \ifthenelse{\lengthtest{\y pt = \x pt}}{}{
        \ifthenelse{\lengthtest{\y pt > \x pt}}{
        
        \ifthenelse{\lengthtest{\y pt < 4 pt}}{
        \node at (\x, \y)[circle, fill=Red, scale=0.15] {};
        \draw[->,color=Red!90] (\x,\y) -- (\x, \y+\vy);
        }{
        \node at (\x, \y)[circle, fill=Green, scale=0.15] {};
        \draw[->,color=Green!90] (\x,\y) -- (\x, \y+\vy);
        }

        }{
        
        \ifthenelse{\lengthtest{\y pt > 4 pt}}{
        \node at (\x, \y)[circle, fill=Red, scale=0.15] {};
        \draw[->,color=Red!90\x] (\x,\y) -- (\x, \y+\vy);
        }{
        \node at (\x, \y)[circle, fill=Green, scale=0.15] {};
        \draw[->,color=Green!90\x] (\x,\y) -- (\x, \y+\vy);
        }
        };
        }
    }
}

\draw [dotted] (0,0) edge node[dotted,near end,sloped,below] {$\makerRV=\takerRV$} (10,10);

\draw [dotted] (4,0) edge node[dotted,sloped,near end,below] {$\takerRV=\real$} (4,10);
\draw [dotted] (0,4) edge node[dotted,sloped,near end,below] {$\makerRV=\real$} (10,4);

\draw [dotted,<->] (-1,1) -- node[dotted,near start,above] {\textbf{$\realRV$}} (1,-1);
\draw [dotted,<->] (1,9) -- node[dotted,near end,above] {$\realRV$} (-1,11);
\draw [dotted,<->] (11,9) -- node[dotted,near end,above] {$\realRV$} (9,11);
\draw [dotted,->] (10,0) -- node[dotted,near end,above] {$\realRV$} (11,-1);

\node [draw=none,anchor=west,color=black] at (4.1,3.8) {$(\real,\real,\real)$};
\node at (4,4)[circle, fill=black, scale=0.35] {};
\node at (0,0)[circle, fill=black, scale=0.35] {};

\node [draw=none] at (2,7) {$\takerRV<\real<\makerRV$};
\node [draw=none] at (6,8) {$\real<\takerRV<\makerRV$};
\node [draw=none] at (8,6) {$\real<\makerRV<\takerRV$};
\node [draw=none] at (7,2) {$\makerRV<\real<\takerRV$};
\node [draw=none] at (2.5,1) {$\makerRV<\takerRV<\real$};
\node [draw=none] at (1.5,3) {$\makerRV<\takerRV<\real$};

\node [draw=none] at (4,-0.3) {$\real$};
\node [draw=none,anchor=east] at (0,4) {$\real$};

\node [draw=none,anchor=east] at (0,5) {\textbf{$\makerRV$}};
\node [draw=none,anchor=north] at (5,0) {\textbf{$\takerRV$}};

\node [draw=none] at (-0.2,-0.5) {$(\real,0,0)$};

% \draw [thick,->] (4.5,4.5) edge node[] {$\EX(profit)$} (4.5,2.5);
% \draw [thick,->] (3.5,3.5) edge node[] {$\EX(profit)$} (3.5,5.5);

% }
\end{tikzpicture} };
\draw[rotate around={30:(-.8,0.5)}, red, dashed] (-.8,0.5) ellipse (2cm and 0.5cm);
\draw[rotate around={30:(-.8,0.5)}, red, dashed] (-.8,0.5) ellipse (1cm and 0.25cm);
\end{tikzpicture}
}
\resizebox{0.32\textwidth}{!}{
\begin{tikzpicture}
\node[scale=1] (grid) at (0,0){\begin{tikzpicture}
% \tikzset{
%     mygauss/.pic={

\node [dotted, draw=gray, shape=rectangle, minimum width=10cm, minimum height=10cm, anchor=south west] at (-1,1) {};
\node [dotted, draw=gray, shape=rectangle, minimum width=10cm, minimum height=10cm, anchor=south west] at (-0.5,0.5) {};
\node [draw, shape=rectangle, minimum width=10cm, minimum height=10cm, anchor=south west] at (0,0) {};

\draw[help lines, color=gray!30, dashed] (0,0) grid (10,10);

\foreach \x in {0.5, ..., 10} {
    \foreach \y in {0.5, ..., 9.5} {
        \pgfmathsetmacro{\vy}{0.08*(\y-4)}
            
        \ifthenelse{\lengthtest{\y pt = \x pt}}{}{
        \ifthenelse{\lengthtest{\y pt > \x pt}}{
        
        \ifthenelse{\lengthtest{\y pt < 4 pt}}{
        \node at (\x, \y)[circle, fill=Red, scale=0.15] {};
        \draw[->,color=Red!90] (\x,\y) -- (\x, \y+\vy);
        }{
        \node at (\x, \y)[circle, fill=Green, scale=0.15] {};
        \draw[->,color=Green!90] (\x,\y) -- (\x, \y+\vy);
        }

        }{
        
        \ifthenelse{\lengthtest{\y pt > 4 pt}}{
        \node at (\x, \y)[circle, fill=Red, scale=0.15] {};
        \draw[->,color=Red!90\x] (\x,\y) -- (\x, \y+\vy);
        }{
        \node at (\x, \y)[circle, fill=Green, scale=0.15] {};
        \draw[->,color=Green!90\x] (\x,\y) -- (\x, \y+\vy);
        }
        };
        }
    }
}

\draw [dotted] (0,0) edge node[dotted,near end,sloped,below] {$\makerRV=\takerRV$} (10,10);

\draw [dotted] (4,0) edge node[dotted,sloped,near end,below] {$\takerRV=\real$} (4,10);
\draw [dotted] (0,4) edge node[dotted,sloped,near end,below] {$\makerRV=\real$} (10,4);

\draw [dotted,<->] (-1,1) -- node[dotted,near start,above] {\textbf{$\realRV$}} (1,-1);
\draw [dotted,<->] (1,9) -- node[dotted,near end,above] {$\realRV$} (-1,11);
\draw [dotted,<->] (11,9) -- node[dotted,near end,above] {$\realRV$} (9,11);
\draw [dotted,->] (10,0) -- node[dotted,near end,above] {$\realRV$} (11,-1);

\node [draw=none,anchor=west,color=black] at (4.1,3.8) {$(\real,\real,\real)$};
\node at (4,4)[circle, fill=black, scale=0.35] {};
\node at (0,0)[circle, fill=black, scale=0.35] {};

\node [draw=none] at (2,7) {$\takerRV<\real<\makerRV$};
\node [draw=none] at (6,8) {$\real<\takerRV<\makerRV$};
\node [draw=none] at (8,6) {$\real<\makerRV<\takerRV$};
\node [draw=none] at (7,2) {$\makerRV<\real<\takerRV$};
\node [draw=none] at (2.5,1) {$\makerRV<\takerRV<\real$};
\node [draw=none] at (1.5,3) {$\makerRV<\takerRV<\real$};

\node [draw=none] at (4,-0.3) {$\real$};
\node [draw=none,anchor=east] at (0,4) {$\real$};

\node [draw=none,anchor=east] at (0,5) {\textbf{$\makerRV$}};
\node [draw=none,anchor=north] at (5,0) {\textbf{$\takerRV$}};

\node [draw=none] at (-0.2,-0.5) {$(\real,0,0)$};

% \draw [thick,->] (4.5,4.5) edge node[] {$\EX(profit)$} (4.5,2.5);
% \draw [thick,->] (3.5,3.5) edge node[] {$\EX(profit)$} (3.5,5.5);

% }
\end{tikzpicture} };
\draw[rotate=0, blue, dashed] (-.8,0.5) ellipse (2cm and 1.1cm);
\draw[rotate=0, blue, dashed] (-.8,0.5) ellipse (1cm and .5cm);
\end{tikzpicture}
}
\resizebox{0.32\textwidth}{!}{
\begin{tikzpicture}
\node[scale=1] (grid) at (0,0){\begin{tikzpicture}
% \tikzset{
%     mygauss/.pic={

\node [dotted, draw=gray, shape=rectangle, minimum width=10cm, minimum height=10cm, anchor=south west] at (-1,1) {};
\node [dotted, draw=gray, shape=rectangle, minimum width=10cm, minimum height=10cm, anchor=south west] at (-0.5,0.5) {};
\node [draw, shape=rectangle, minimum width=10cm, minimum height=10cm, anchor=south west] at (0,0) {};

\draw[help lines, color=gray!30, dashed] (0,0) grid (10,10);

\foreach \x in {0.5, ..., 10} {
    \foreach \y in {0.5, ..., 9.5} {
        \pgfmathsetmacro{\vy}{0.08*(\y-4)}
            
        \ifthenelse{\lengthtest{\y pt = \x pt}}{}{
        \ifthenelse{\lengthtest{\y pt > \x pt}}{
        
        \ifthenelse{\lengthtest{\y pt < 4 pt}}{
        \node at (\x, \y)[circle, fill=Red, scale=0.15] {};
        \draw[->,color=Red!90] (\x,\y) -- (\x, \y+\vy);
        }{
        \node at (\x, \y)[circle, fill=Green, scale=0.15] {};
        \draw[->,color=Green!90] (\x,\y) -- (\x, \y+\vy);
        }

        }{
        
        \ifthenelse{\lengthtest{\y pt > 4 pt}}{
        \node at (\x, \y)[circle, fill=Red, scale=0.15] {};
        \draw[->,color=Red!90\x] (\x,\y) -- (\x, \y+\vy);
        }{
        \node at (\x, \y)[circle, fill=Green, scale=0.15] {};
        \draw[->,color=Green!90\x] (\x,\y) -- (\x, \y+\vy);
        }
        };
        }
    }
}

\draw [dotted] (0,0) edge node[dotted,near end,sloped,below] {$\makerRV=\takerRV$} (10,10);

\draw [dotted] (4,0) edge node[dotted,sloped,near end,below] {$\takerRV=\real$} (4,10);
\draw [dotted] (0,4) edge node[dotted,sloped,near end,below] {$\makerRV=\real$} (10,4);

\draw [dotted,<->] (-1,1) -- node[dotted,near start,above] {\textbf{$\realRV$}} (1,-1);
\draw [dotted,<->] (1,9) -- node[dotted,near end,above] {$\realRV$} (-1,11);
\draw [dotted,<->] (11,9) -- node[dotted,near end,above] {$\realRV$} (9,11);
\draw [dotted,->] (10,0) -- node[dotted,near end,above] {$\realRV$} (11,-1);

\node [draw=none,anchor=west,color=black] at (4.1,3.8) {$(\real,\real,\real)$};
\node at (4,4)[circle, fill=black, scale=0.35] {};
\node at (0,0)[circle, fill=black, scale=0.35] {};

\node [draw=none] at (2,7) {$\takerRV<\real<\makerRV$};
\node [draw=none] at (6,8) {$\real<\takerRV<\makerRV$};
\node [draw=none] at (8,6) {$\real<\makerRV<\takerRV$};
\node [draw=none] at (7,2) {$\makerRV<\real<\takerRV$};
\node [draw=none] at (2.5,1) {$\makerRV<\takerRV<\real$};
\node [draw=none] at (1.5,3) {$\makerRV<\takerRV<\real$};

\node [draw=none] at (4,-0.3) {$\real$};
\node [draw=none,anchor=east] at (0,4) {$\real$};

\node [draw=none,anchor=east] at (0,5) {\textbf{$\makerRV$}};
\node [draw=none,anchor=north] at (5,0) {\textbf{$\takerRV$}};

\node [draw=none] at (-0.2,-0.5) {$(\real,0,0)$};

% \draw [thick,->] (4.5,4.5) edge node[] {$\EX(profit)$} (4.5,2.5);
% \draw [thick,->] (3.5,3.5) edge node[] {$\EX(profit)$} (3.5,5.5);

% }
\end{tikzpicture} };
\draw[rotate around={55:(-.8,0.5)}, Green, dashed] (-.8,0.5) ellipse (0.5cm and 2cm);
\draw[rotate around={55:(-.8,0.5)}, Green, dashed] (-.8,0.5) ellipse (0.25cm and 1cm);
\end{tikzpicture}
}
\caption{A sketch of the distribution of estimates density from Figure~\ref{fig:projection} for the case where the market maker $\makerRV$ is biased more than the trader $\takerRV$  w.r.t $\realRV=\real$. While having such a superior model, being correlated with the market poses no problem to profit, and decreasing the correlation will commonly hurt the performance.}
\label{fig:superior_model}
\end{figure}

\paragraph{Superior Bias}

We have argued for the practical necessity of a market maker not to be systematically biased in Section~\ref{sec:est_dist}, which leaves a little space for the trader to beat the market in terms of $\bias{\takerRV}<\bias{\makerRV}$. Nevertheless it should be acknowledged that in the unlikely case that the market maker indeed is more biased than the trader, the concept of decorrelating the estimates for increased profits breaks down severely. The situation is depicted in Figure~\ref{fig:superior_model}. We can see that in this setting, decreasing $\corr{\takerRV,\makerRV|\realRV}$ can easily work in a directly counterproductive fashion by consistently decreasing the profits.

Importantly, however, it should be noted that with a model that is superior by either means, i.e. where $err(\takerFunc) < err(\makerFunc)$, there is no need in trying to decrease $\corr{\takerRV,\makerRV|\realRV}$ to make profits, since such a model $\takerFunc$ needs no help with that to begin with. This can be conveniently detected in advance by measuring $err(\takerFunc)$ and trading the model with standard investment strategies (Section~\ref{sec:strategies}).

% \todo{dokreslit priklady s elipsami -
% 1) both unbiased + vesti/mensi var
% 2) bias(b)>bias(m) + vesti/mensi var
% 2) bias(b)<bias(m) + vesti/mensi var
% }

% \todo{pridat cases kdy mame vetsi/mensi varianci}

% \todo{ze to plati pro jakekoliv distribuce}

\subsection{The Problem with Kelly}
\label{sec:kelly_problem}

% \todo{ze je to kvuli overbetting kvuli optimalite}

The scenarios we have demonstrated so far operated with the simple uniform investment strategy (Section~\ref{sec:strategies}), which allowed us to generate profits through decorrelation (Definition~\ref{def:decorrelation}), even with models of inferior accuracy w.r.t. market. As indicated in Section~\ref{sec:acc_profit}, let us now explain why this cannot be done with the, more sophisticated, Kelly investment strategy. 

We have shown that the growth of wealth $W_{\sf{G}}$ with the Kelly strategy is directly equal to difference between the KL-divergence of the market maker from the true distribution $D_{KL}(\makerRV||\realRV)$ and the trader from the true distribution $D_{KL}(\takerRV||\realRV)$, respectively, in Equation~\ref{eq:KL_diff}. It follows that a Kelly trader does not care at all about the particular relationships between $\realRV,\makerRV,\takerRV$ which are essential to profitability (Section~\ref{sec:essence}), since the only thing that matters is how close are our estimates to the true distribution as compared to the bookmaker. The two principally different views of the market distribution $\mathrm{P}_\Omega$ properties are depicted in Figure~\ref{fig:triangle}.

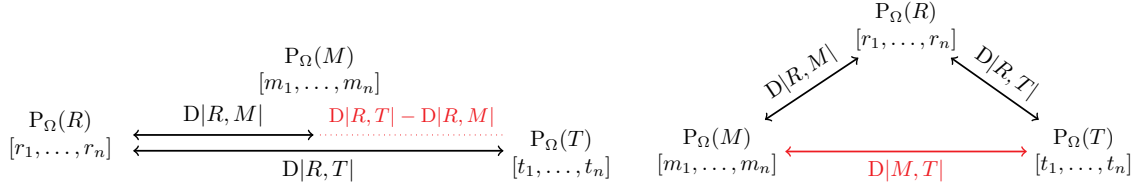
\begin{figure}

\centering
\resizebox{0.5\textwidth}{!}{
\begin{tikzpicture}
\begin{scope}[yshift=4cm]

\node[color=black, align=center] (Pr) [] {$\mathrm{P}_{\Omega}(\realRV)$ \\ $[{\real_1}, \dots, {\real_n}]$};
\node[color=black, align=center] (Pb) [above right= 0.51cm and 3cm] {$\mathrm{P}_{\Omega}(\makerRV)$ \\ $[{\maker_1}, \dots, {\maker_n}]$};
\node[color=black, align=center] (Pp) [below right= -0.21cm and 7cm] {$\mathrm{P}_\Omega(\takerRV)$ \\ $[{\taker_1}, \dots, {\taker_n}]$};

\draw [thick,<->] (1.15,0) -- node[above,sloped] { $\mathrm{D}|\realRV,\makerRV|$} (4,0);
\draw [dotted,Red] (4.1,0) -- node[above,sloped] { \small{$\mathrm{D}|\realRV,\takerRV| - \mathrm{D}|\realRV,\makerRV|$}} (7,0);

\draw [thick,<->] (1.15,-0.25) -- node[below,sloped] { $\mathrm{D}|\realRV,\takerRV|$} (7,-0.25);
% \draw [thick,<->] (Pb) edge node[below,sloped] { $\sigma|\mathrm{P}_{\Omega}(\makerRV),\mathrm{P}_\Omega(M)|$} (Pp);

\end{scope}
\end{tikzpicture}
}
\resizebox{0.4\textwidth}{!}{
\begin{tikzpicture}[scale=0.7]
\node[color=black, align=center] (Pr) [] {$\mathrm{P}_{\Omega}(\realRV)$ \\ $[{\real_1}, \dots, {\real_n}]$};
\node[color=black, align=center] (Pb) [below left= 1cm and 1cm of Pr] {$\mathrm{P}_{\Omega}(\makerRV)$ \\ $[{\maker_1}, \dots, {\maker_n}]$};
\node[color=black, align=center] (Pp) [below right= 1cm and 1cm of Pr] {$\mathrm{P}_{\Omega}(\takerRV)$ \\ $[{\taker_1}, \dots, {\taker_n}]$};

\draw [thick,<->] (Pr) edge node[above,sloped] { $\mathrm{D}|\realRV,\makerRV|$} (Pb);
\draw [thick,<->] (Pr) edge node[above,sloped] { $\mathrm{D}|\realRV,\takerRV|$} (Pp);
\draw [thick,<->, Red] (Pb) edge node[below,sloped] { $\mathrm{D}|\makerRV,\takerRV|$} (Pp);
\end{tikzpicture}
}
\caption{The difference between a typical statistical treatment of quality of the estimators, where the relationship between $\makerRV$ and $\takerRV$ is considered merely in the terms of their distances to $\realRV$ (left), and the proposed scenario, where we explicitly consider their mutual statistical relationship (right).}
\label{fig:triangle}
\end{figure}

Consequently, it thus does not matter whether the opportunity seems to have a positive or negative expected profit, since an optimal Kelly trader will bet an amount derived merely from the expected growth corresponding to the information advantage w.r.t. $\realRV$. Given the knowledge of the true value distribution $\mathrm{P}_{\Omega}(R)$, this provably leads to more wealth than any other strategy. It is tempting to believe that it generally provides an upper bound to the amount of profit that can be made in any scenario. Nevertheless, such property cannot be extrapolated to the settings where we lack knowledge of the true probabilities, such as in the real world trading, which we demonstrate on the following example.
% \begin{equation}
%     W_{\delta} = \sum_{i=1}^n \realVal \log\frac{\makerVal}{\realVal} + \sum_{i=1}^{n} \realVal \log\frac{\realVal}{\makerVal}
% \end{equation}

\begin{example}
\label{ex:frac_kelly}
Consider a simple scenario of Kelly betting on a match (asset) $a_i$ with two equally probable exclusive outcomes $\{home, away\}$, and the two corresponding opportunities $\omega_i^\alpha,\omega_i^\beta$ priced by the bookmaker $\maker$ and the trader $\taker$ equally as follows

% \todo{zavest nekde outcomes jako $\mathcal{O}$}

\begin{equation}
\label{eq:neutral_opp}
    \realFunc(\omega_i) = \begin{cases}
            0.5  \text{~~on $\alpha$=home}\\
            0.5 \text{~~on $\beta$=away}
    \end{cases}
    ~~~~~~~~~\text{}
    \makerFunc(\omega_i) = \begin{cases}
            0.3  \text{~~on $\alpha$=home}\\
            0.7 \text{~~on $\beta$=away}
    \end{cases}
    ~~~~~~~~~\text{}
    \takerFunc(\omega_i) = \begin{cases}
            0.3  \text{~~on $\alpha$=home}\\
            0.7 \text{~~on $\beta$=away}
    \end{cases}
\end{equation}
In this example, we again omit the bookmaker's margin for simplicity and so, following the derivation from Section~\ref{sec:acc_profit}, the optimal vector of fractions $\bm{f}$ to bet on the outcomes would be

\begin{equation}
    f= \begin{cases}
            0.3  \text{~~on $\alpha$=home}\\
            0.7  \text{~~on $\beta$=away}
    \end{cases}
\end{equation}

Since the two estimates of $\maker$ and $\taker$ coincide, there is clearly no information advantage of the trader and consequently zero profit to be made with Kelly. And the situation is the same for any other strategy, too, since the expected profitability of the opportunities, following the expected profit definition from Equation~\ref{eq:profit_bet}, from the perspective of the trader is simply zero

\begin{align}
    \EX_\takerFunc [\return_i^\alpha] = \frac{0.3}{0.3}-1 = 0 &~~~~~~~& \EX_\takerFunc [\return_i^\beta] = \frac{0.7}{0.7}-1 = 0
\end{align}

Now consider altering the scenario by decorrelating the estimates of the trader $\taker$ as follows
\begin{equation}
\label{eq:profitable_opp}
    \realFunc(\omega_i) = \begin{cases}
            0.5  \text{~~on $\alpha$=home}\\
            0.5 \text{~~on $\beta$=away}
    \end{cases}
    ~~~~~~~~~\text{}
    \makerFunc(\omega_i) = \begin{cases}
            0.3  \text{~~on $\alpha$=home}\\
            0.7 \text{~~on $\beta$=away}
    \end{cases}
    ~~~~~~~~~\text{}
    \takerFunc(\omega_i) = \begin{cases}
            0.7  \text{~~on $\alpha$=home}\\
            0.3 \text{~~on $\beta$=away}
    \end{cases}
\end{equation}
Despite switching the estimates, we can clearly see that both the bookmaker $\maker$ and the trader $\taker$ are still equally distanced from the true value $\real$ (by the means of $D_{KL}$ as well as any other possible metric), leading again to no information advantage and, as expected, to a zero expected growth of wealth (Section~\ref{sec:back_Kelly}):

\begin{equation}
    W_{\sf{G}} = \frac{1}{t}\cdot log(\frac{W_t}{W_0}) = \sum_i \realVal \cdot log (\frac{f_i}{\makerVal}) = 0.5 \cdot log (\frac{0.7}{0.3}) + 0.5 \cdot log (0.5 + \frac{3}{0.7}) = 0
\end{equation}
Nevertheless, the essential profitability of the opportunities from the perspective of the trader is now

\begin{align}
    \EX_\takerFunc [\return_i^\alpha] = \frac{0.7}{0.3}-1 = 1.33 &~~~~~~~& \EX_\takerFunc [\return_i^\beta] = \frac{1-\EX[{m}(o_2)]}{1-b(o_2)} - 1 = \frac{0.3}{0.7}-1 = -0.57
\end{align}
and betting uniformly (Section~\ref{sec:strat:unit}) some unit on the first opportunity $\alpha=home$, the trader would make a consistent profit of

\begin{align}
    \EX_\realFunc [\return_i^\alpha] = \frac{\realVal}{\makerFunc(\omega_i^\alpha)} - 1 = \frac{0.5}{0.3}-1 = 0.66
\end{align}
despite it being different from her estimated $\return_i^\alpha=1.33$.
\end{example}

The inability of Kelly to make profit in such profitable scenarios follows from its growth-based view of optimal investments (Section~\ref{sec:back_Kelly}). While maximizing the growth $W_{\sf{G}}$, less than optimal investments are just as harmful as over-investment, despite the latter leading to definite ruin while the former only leads to sub-optimal growth. To distinguish between the two arguably different types of risk, various modifications of the Kelly criterion have been proposed to reflect the natural preference for avoiding the ruin at the cost of sub-optimal growth.

\subsubsection{Fractional Kelly}
\label{sec:kelly_problem_frac}

Perhaps the most common remedy to mitigate the risk stemming from the erroneous estimates is fractional Kelly (Section~\ref{sec:back_Kelly}). Let us demonstrate the effect of this risk management practice on the introduced setting from Example~\ref{ex:frac_kelly} as follows.

\begin{example}
Being aware of the uncertainty in her estimates, the Kelly trader now decreases the optimal fraction by one-half, popularly referred to as ``half-Kelly'' betting. Considering the first scenario of coincidental estimates from Equation~\ref{eq:neutral_opp}, the invested fractions are now thus decreased by half as

\begin{equation}
    f= \begin{cases}
            0.15  \text{~~on $\alpha$=home}\\
            0.35  \text{~~on $\beta$=away}
    \end{cases}
\end{equation}
However, the growth of wealth, accounting for the half of it being held separately, stays inert at zero since

\begin{equation}
    W_{\sf{G}} = \frac{1}{t}\cdot log(\frac{W_t}{W_0}) = \sum_i \realVal \cdot log (0.5 + \frac{f_i}{\makerVal}) = 0.5 \cdot log (0.5 + \frac{0.15}{0.3}) + 0.5 \cdot log (0.5 + \frac{0.35}{0.7}) = 0
\end{equation}
In words, the trader is still under-betting the the first opportunity $\alpha=home$, which is profitable, while putting a larger amount on the second opportunity $\beta=away$, which is loss-making.
Decreasing the fractions then cannot change the simple fact that the opportunity is not recognized correctly by the estimator $\takerFunc$ w.r.t. essential profitability (Definition~\ref{def:essence}).

Consider now the latter scenario of the decorrelated estimates laid out in Equation~\ref{eq:profitable_opp}. The invested fractions are again cut by half as

\begin{equation}
    f= \begin{cases}
            0.35  \text{~~on $\alpha$=home}\\
            0.15  \text{~~on $\beta$=away}
    \end{cases}
\end{equation}
While the information advantage and all the properties of the estimators stay the same as in Equation~\ref{eq:profitable_opp}, the growth of wealth, accounting for the half of it being held separately, is now positive, particularly
\begin{equation}
    W_{\sf{G}} = \frac{1}{t}\cdot log(\frac{W_t}{W_0}) = \sum_i \realVal \cdot log (0.5 + \frac{f_i}{\makerVal}) = 0.5 \cdot log (0.5 + \frac{0.35}{0.3}) + 0.5 \cdot log (0.5 + \frac{0.15}{0.7}) = 0.038
\end{equation}
In words, decreasing the amount invested into the first opportunity $\alpha=home$ from $0.7$ down to $0.3$, which is now below the optimal $f_\alpha^*=\real_i^\alpha=0.5$ for the full Kelly fraction, removes the overbetting problem, while keeping the second fraction on the loss-making opportunity $\omega_i^\beta$ comparably low.

% \todo{0.35 a 0.15 jsou celkove optimalni fractions zde pro half Kelly!}
\end{example}

While the full Kelly forms an interesting corner case, oblivious to the correlation of the estimator w.r.t. the market, we have demonstrated that the decorrelation concept has a positive impact on the commonly used fractional Kelly modification, where decreasing the correlation $\corr{\makerRV,\takerRV|\realRV}$ consistently increases profits given that other properties of the distribution stay the same.
% In essence, while decreasing the investments, we exchanged optimality for security, moving towards the unit strategy, increasing the importance of the estimator
Similarly, it also improves profitability of other common portfolio optimization strategies, such as the MPT, which we demonstrate practically in experiments~(Section~\ref{sec:experiments}).

%  to make profit out of essentially profitable opportunities
% \todo{ze jsme tim rozbili growth/risk, ale ze risk ve strategiich je stejne blbe, ze neuvazuji opportunity distrobution atp.}

% It is far from the optimal strategy in terms of absolute profit, nevertheless it is very robust and has some other desirable properties (Section~\ref{}). Now we discuss more sophisticated investment optimization strategies and their properties. Given the link between the common portfolio optimization strategies (Section~\ref{}), we restrict ourselves to the famous Kelly investment optimization.

\subsection{Machine Learning}
\label{sec:machine_learning}
% \todo{Ideally, one would be penalizing the conditional covariance $\corr{\makerRV,\takerRV|\realRV}$ directly? - with known R}

So far, we have analysed profitability of the price estimators $\takerFunc$ w.r.t. various market distributions $\mathrm{P}_{\Omega}$ and investment strategies $\sf{s}$ to derive the desired decorrelation property. Nevertheless, we have not yet discussed how to actually create such profitable estimators $\takerFunc$. As outlined in Section~\ref{sec:estimators}, the common way to create a fundamental price estimator $\sf{e}$ is by optimizing its fit to historical data $\mathcal{D}$ via minimization of some error measure $err(\sf{e})$. The standard desideratum is then to predict the real values $R$ within $\mathrm{P}_{\Omega}$ as closely as possible\footnote{while keeping the model complexity reasonably low for good generalization.}.

However, we now know that being accurate is not the only possible desideratum of the estimator, and that for the subsequent trading it might be even more important to minimize the conditional correlation with the market $\corr{\makerRV,\takerRV|\realRV}$. Note that an optimal trade-off between these properties of an estimator will naturally depend on the subsequent investment strategy being used (Section~\ref{sec:strategies}). For instance, we have shown that in the corner case of the full Kelly investor, minimizing the correlation has no impact on the profits at all (Section~\ref{sec:kelly_problem}) and to maximize the Kelly growth (Section~\ref{sec:back_Kelly}) it is sufficient to resort to the plain cross-entropy error (Section~\ref{sec:estimators}), minimizing the KL-distance from the true value distribution (Section~\ref{sec:acc_profit}). Nevertheless for the uniform investment strategy (Section~\ref{sec:proof}) as well as other practical strategies, minimizing the covariance will generally improve profitability (Section~\ref{sec:kelly_problem_frac}).

Without discussing the optimal trade-off, we can roughly conclude that it is generally advisable to strive for an unbiased estimator with low variance and low covariance with the market. One can then alter the standard machine learning objectives to reflect these new desiderata, and validate the optimal setting experimentally. The scope of the corresponding error measure would thus include not only the standard estimates $\takerRV$ and ground truth values $\realRV$, but also the market estimates $\makerRV$, and could then look like

\begin{equation}
    err^*(\realRV,\makerRV,\takerRV) = Bias[\takerRV|\realRV] + \var{\takerRV|\realRV} + \cov{\takerRV,\makerRV|\realRV}
\end{equation}

Recall the purpose of the $MSE$ measure (Equation~\ref{eq:mse}), which is to minimize the squared distance from the real value as

\begin{equation*}
    MSE_{{\Omega}}(R,\takerRV) = \EX [(\takerRV - \realRV)^2] = \frac{1}{|{\Omega}|} \sum_{\omega_i \in {\Omega}} (\takerVal - \realVal)^2
\end{equation*}
which can be thought of as trying to jointly minimize the bias and variance of the estimator (Section~\ref{sec:estimators}). A straightforward approach to reflect the outlined desiderata is to modify the existing $MSE$ measure with an extra term to also penalize the covariance between $\makerRV$ and $\takerRV$ as

\begin{equation}
\label{eq:loss_fcn}
    {MSE}^*_{{\Omega}}(\realRV,\makerRV,\takerRV) = \frac{1}{|{\Omega}|} \sum_{\omega_i \in {\Omega}} (\takerVal - \realVal)^2 + \gamma \cdot (\takerVal - \realVal) (\makerVal - \realVal)
\end{equation}
where the extra ``decorrelation term'' is weighted by a tunable hyperparameter $\gamma > 0$. The purpose of $\gamma$ is then to adjust the trade-off between the decorrelation term and the standard $MSE$, since the two normally represent opposing objectives, as the market maker tends to be very close to the real value. 

We note that altering the $MSE$ might seem counter-intuitive from the perspective of a perfect estimator, corresponding to the minimal $MSE$, where the additional term will only hurt its performance by pushing it away, not only from the market price, but from the true value, too, since $argmin[MSE] \neq argmin[MSE^*]$. However, recall that the decorrelation only makes sense when one is not able to train such a superior model (Section~\ref{sec:superior_model}). In those common cases, the model generally occupies some wider area of the error landscape around the $MSE$ minima, where the additional term is meant to navigate away from the correlated regions (Figure~\ref{fig:projection}), which may be equally distanced from the $MSE$ minimum, but will yield inferior profits. 
% In the cases where one is able to train a superior estimator to begin with, there is generally no need to alter the existing training scheme as it is already profitable (Section~\ref{sec:superior_model}).

While we do not attempt to argue about optimality of the proposed $MSE^*$ metric, and we acknowledge that there are likely more appropriate measures to maximize the model profitability, we will demonstrate that this simple $MSE$-extension already works well enough in practice to proof viability of the decorrelation concept.

% 

% vysvetlit learning a Error Functions

% rozebrat MSE se skutecnym R, pak zobecnit na nezname R, pak pridat novy term

% XENT - perfect for KElly

% for unfirom - musime vymyslet

\section{Application in Sports Betting}
\label{sec:experiments}

% While the approach discussed in this paper is generally applicable to a market taker in almost any common market setting, the underlying properties of the prediction markets make them slightly more favorable for the method analysis. We will thus further proceed with particular instantiation of the approach on a sports betting market to demonstrate the methods in a more concrete setting.

To demonstrate viability of the concept in a concrete, practical setting, we chose the domain of sports betting. Sports betting closely follows the outlined setup of the market taker $\taker$ vs. market maker $\maker$ stochastic pricing game over opportunities $\Omega$ based in some tradeable assets $\mathcal{A}$. The commonalities\footnote{We note that for the correspondence with the stock market (Section~\ref{sec:background}), we restricted the setting to a two-way betting market.} were introduced in Table~\ref{tab:notation} and throughout the paper.

% While the principles remain the same, the commonly used terminology differs, for which we provide a quick overview in Table~\ref{tab:correspondence}.

% \begin{table}[]
% \label{tab:correspondence}
% \centering
% \caption{A correspondence between the sports betting market and the stock market.}
% \begin{tabular}{@{}cc@{}}
% \toprule
% \textbf{betting market} & \textbf{stock market} \\ \midrule
% bookmaker               & market maker          \\
% punter                  & market taker          \\ 
% odds                    & rate of return         \\
% margin                  & spread                \\
% team win                & bid             \\
% team loss               & ask            \\
% % wager                   & investment            \\
% \bottomrule
% \end{tabular}
% \end{table}

Nevertheless, there are some interesting technical differences to note. Firstly, we do not need to model the future price distribution, as the true value is directly determined by the probability of the traded (future) outcome to occur. On the other hand, the true probability is not directly observable, and has to be recovered from realizations of the corresponding stochastic outcomes, similarly to the observations of the future prices.
Secondly, the returns are determined by the odds, reflecting the inverse of the probability estimate, in a multiplicative fashion, as opposed to the stock market setting where one profits additively on the difference from the true price (Section~\ref{sec:exp_return}). Consequently, the errors in the estimates also reflect themselves differently into the estimated expected returns. Last but not least, there are lots of publicly available data from the sports betting market, which make it generally interesting as a benchmark for testing the market efficiency and the associated hypotheses on profitability of models and trading strategies. We note that much of the material discussed in this Section is adapted from our previous work~\cite{hubavcek2019exploiting}.

% \begin{enumerate}
%     \item the true value, or ground truth, can be clearly defined as the probability of a traded event to occur
%     \item binary realizations of the events can be retrospectively observed in a direct manner
%     % \item the true odds of the event are based in external reality and are not influenced by the public opinion or market price evolution
%     \item the market mispricing, i.e. error in odds, can be directly transformed into expected profit, regardless of further market price evolution
% \end{enumerate}

% Moreover in the prediction markets, the true value, reflecting the odds of some event to occur, is substantiated in external reality and is completely independent of the pricing and situation on the market itself\footnote{We exclude illegal result manipulations such as sport match fixing from this analysis.}. This is in contrast with the classic markets where the current pricing evolution may affect the public opinion, ultimately forming the true value. The true price of the asset here is thus rather speculative and cannot be decomposed from the market price so easily.

% We further setup a more detailed environment in terms of sport betting, however the main principles remain valid for all types of markets. The analogy between a common market setting, such as the stock market, and the sport prediction market is laid out in Table~\ref{tab:correspondence}.

\subsection{Principles of Sports Betting}
% \subsection{aligning the principles}
\label{sec:betting}

% \todo{vyrazne zkratit}
% \todo{define outcomes}
% \todo{zavest profit W}
% \todo{opportunities = exclusive outcomes, i.e. we bet only on one of them}

% \paragraph{Odds}
% \todo{zavest odds (na zacatku) real s a bez marginu}

Following the established trading setting (Section~\ref{sec:background}), sports betting can be understood as a simple 2-player stochastic game where the bettor $\taker$ repeatedly allocates wagers $\bm{f}$ from her current bankroll $\sf{W}_\tau \in \mathbb{R}_+$ over available opportunities $\omega_i^\zeta \in \Omega$, corresponding to the stochastic outcomes $\zeta \in \mathrm{Z}$ of sport matches $a \in \mathcal{A}$. In case of the established two-way betting markets (Section~\ref{sec:background}) there are two possible match outcomes $\mathrm{Z} = \{\alpha,\beta\}$. The outcomes $\zeta$ can then be understood as coming from a Bernoulli distribution parameterized by the corresponding true probability value $\realVal$. Each of the possible opportunities $\omega_i^\zeta$ is then also associated with some probability $\makerVal \in [0,1]$ by the bookmaker $\makerRV: \omega_i \mapsto \makerVal$, which is presented indirectly through the offered market \textit{odds} $o_i\approx\frac{1}{\makerVal\pm\epsilon}$, which include some margin $\epsilon$ (Section~\ref{sec:maker_adv}). Should the outcome $\zeta$ associated with a particular opportunity $\omega_i^\zeta$ be realized, a payoff ${f_i^\zeta}\cdot{o_i^\zeta}$ from the associated odds and the waged fraction is to be received by the bettor $\taker$. In the opposite case, the bettor loses the allocated part $f_i^\zeta$ of her bankroll $\sf{W}_\tau$ to the bookmaker.

We note that in the proposed application setting, we further assume betting only on opportunities with positive expected returns (Section~\ref{sec:essence}). Consequently, only one of the exclusive binary outcomes of each game can be waged at.
A payoff from each of the particular {betting} opportunities $\omega_i^\zeta$ is {thus} binary
% \footnote{Note that irrespective of $|\mathrm{Z}|$, each $\omega_i^\zeta$ has a binary realization, resulting exclusively in either win or loss of the wager associated with it.} 
in nature, and the potential net profit $w_i$ from allocation $f_i$ on the opportunity $\omega_i^\zeta$ is thus
\begin{equation}
\label{eq:bet_exp_profit}
    w_i =
\left\{
	\begin{array}{lll}
		{f_i^\zeta}\cdot{o_i^\zeta} - 1 ~~& \mbox{with prob. $\realVal$}  &\mbox{(if ${Z}=\zeta$ is realized)} \\
        -{f_i^\zeta}   ~~& \mbox{with prob. $1-\realVal$}  &\mbox{(if ${Z}\neq\zeta$)}
	\end{array}
\right.
\end{equation}

\subsection{Data}
\label{sec:data}

For the experiments we chose the game of basketball as one of the world's most popular sports. Conveniently, there is no draw in basketball matches, making the corresponding match outcomes ($|\mathrm{Z}|=2)$ betting market directly aligned with the two-way (home-away) market setting introduced so far. Particularly, we chose the National Basketball Association (NBA) league, which has one of the largest and most fluid betting markets, available through virtually any bookmaker. Consequently, the mainstream NBA market can be considered highly efficient and hard to beat~\cite{paul2004efficient,hubavcek2019exploiting}.

% \paragraph{Data}

There are many data sources with a much detailed granularity, such as play-by-play transcripts or video records. Theoretically, these could enable to potentially extract information superior to that of the bookmaker (Section~\ref{sec:market_modelling}). For a proper demonstration of the decorrelation concept (Definition~\ref{def:decorrelation}), we resort to clearly inferior data sources. For that, we suffice with basic player statistics, such as numbers of shots, passes, steals, etc. Conveniently, there are large historical datasets with comprehensive sets of such statistics publicly available for each NBA match.
Particularly, we retrieved the official box score NBA data from the seasons 2000 to 2014. 
% Games with incomplete statistics were removed, and thus the number of games differs slightly between seasons. On average, 985 games per season were included. 
% There are 30 teams in the NBA, so one round consists of $n=15$ games.

\paragraph{Features}
% \todo{toto neni dulezite, zkratit}
The information we use for predicting the outcome of a match combines historical data relating to the players of the home and away teams, respectively. For each of the two, we aggregate the aforementioned basic quantitative measures of the players' performances from all of the preceding matches since the beginning of the season in which each respective prediction takes place. Additionally, we also include seasonal aggregates from the preceding seasons.

% The inputs $d \in D$ to the predictive model ${\takerRV}$ are tuples of real-valued features constructed out of the said season-aggregated data. Some of the variables in the latter pertain to individual players and others relate to the whole team. Consequently, we distinguish two levels of feature-based description. In the fine-grained \emph{player level}, we collect all player-related variables as individual features, whereas the \emph{team-level} description involves only the team-level variables as features.

Besides the historical track data considered above, the bookmaker's odds $o_i$ assigned to a match represent another piece of information, obviously highly relevant to the prediction of its outcome $\zeta$. While the odds $o_i$ clearly present a very informative feature, their incorporation in a model naturally increases the undesirable correlation with the bookmaker (Section~\ref{sec:proof}). Whether to omit or include the odds as a feature thus remains an interesting trade-off question, and so we further consider both the options for experimental evaluation.

\subsection{Odds}

\begin{figure}[t]
\centering
\includegraphics[width=.9\textwidth]{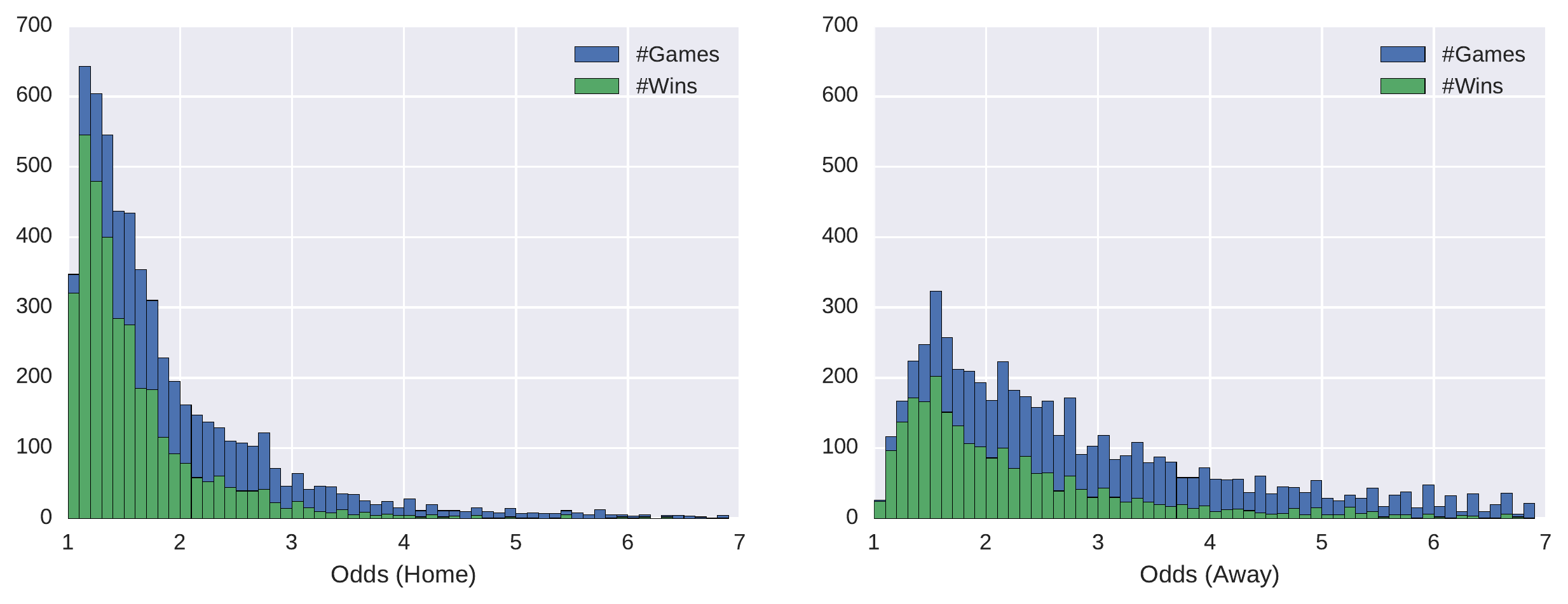}
\caption{Raw odds histograms, displaying the progressive decrease in sample size with the increasing odds.}
\label{fig:odds}
\end{figure}

For the betting odds, we used the \textit{Pinnacle}\footnote{\url{https://www.pinnacle.com/}} closing odds for seasons $2010-2014$. No that, as opposed to opening odds, the used closing odds are already adjusted by the market, reflecting also new information that came from the traders, making it more difficult to beat~\cite{}, as it tends to be more robust against systematic errors. For the earlier seasons ($2000-2010$), we unfortunately had to collect odds data from multiple different bookmakers. Note that this also makes decorrelating against the market more difficult, as we do not aim to exploit errors of a particular market maker but an aggregate of many. Moreover, these bookmakers have generally higher margins of app. $\epsilon=4.5\%$ as compared to the Pinnacle's $\epsilon=2.5\%$, making it more difficult to achieve returns that are positive in total.

Figure~\ref{fig:odds} shows histograms of the odds distribution for the home and away teams and their winnings, respectively. Since the odds reflect the inverse of the estimated probabilities, both histograms exhibit the corresponding long-tail distributions. We can casually see that the proportion of the winning games progressively decreases with the increasing odds, and that the base winning probability of the home team is generally higher\footnote{commonly referred to as the home team advantage.}.

% The histograms reveal that in most matches the home team is the favorite in bookmaker's eyes. This comes as no surprise due to the home court advantage (home teams win in about 60 \% of games).

% \paragraph{Margin}
% \todo{analysis...}
% \todo{kde bereme odds}

% \todo{...mozna zkusit i fotbal?}

\subsubsection{Searching for the Bias}
\label{sec:search_bias}

% Before the creation of an actual model $\takerRV$, we can firstly analyze the market distribution w.r.t. $\realRV$ and $\makerRV$. As usual, $\realRV$ is not directly observable, and one can thus merely estimate it in aggregate of outcomes of historical matches. Nevertheless $\makerRV$ already provides such estimate, enabling us to assess its statistical qualities.

For the decorrelation concept (Definition~\ref{def:decorrelation}) and betting in general, we were particularly interested in a potential statistical bias of the bookmaker. As argued in Section~\ref{sec:est_dist}, should there be a statistical bias in the bookmaker's odds (Section~\ref{sec:estimators}), it would allow to easily exploit the market in certain ranges where the odds $o$ are systematically overestimated. For that we can simply take the histograms from Figure~\ref{fig:odds}, and align the probability distribution induced by the inverted odds with the relative frequencies of the actual outcomes, the result of which is displayed in Figure~~\ref{fig:odss2prob}. We can observe that for the lower odds near $1.0$ the bookmaker seems to be very accurate, while the accuracy starts to fluctuate as the odds increase. While this might incite bettors to interpret as bias in the odds, as speculated in some previous works, we argue that this is simply an artefact due to the decreasing sample size with the increasing odds (Figure~\ref{fig:odds}). Consequently, the odds would still be centered properly should more data be available.

To investigate further, we smoothed out the empirical data by fitting a corresponding Beta distribution, which removes the artefacts and reveals that the bookmakers probability distribution follows the true distribution very tightly\footnote{While there seems to be slightly more value on the home side of the market, which can be possibly attributed to a phenomenon known as the long-shot bias, there is no bias in the odds significant enough to be traded for profit.}.
An alternative way to reveal a systematic bias in the bookmaker's odds is to train a statistical model to predict the outcomes based on the odds. Should there be a systematic error in the odds, such a model should be able to exploit it directly by providing more accurate predictions than the bookmaker, given enough data. However, our previous experiments~\cite{hubacek2017thesis} verified that, given a large enough out-of-sample measurement, obtaining such a model does not seem in the realm of possibility. From all our measurements, we can thus conclude that the bookmaker is \textit{on average} precise, i.e. $\forall r \in [0,1] : \EX_{\mathrm{P}(\makerRV|r)}(\makerRV)=r$,
% \begin{equation}
%     \EX(\realRV|\takerRV=\real) = \real
% \end{equation}
confirming the assumption of unbiasedness of the market maker (Section~\ref{sec:est_dist}), and supporting the use of the decorrelation technique (Section~\ref{sec:proof}), aiming to exploit the remaining (hopefully non-random) variance in the odds.

\begin{figure}

\centering
\includegraphics[width=.9\textwidth]{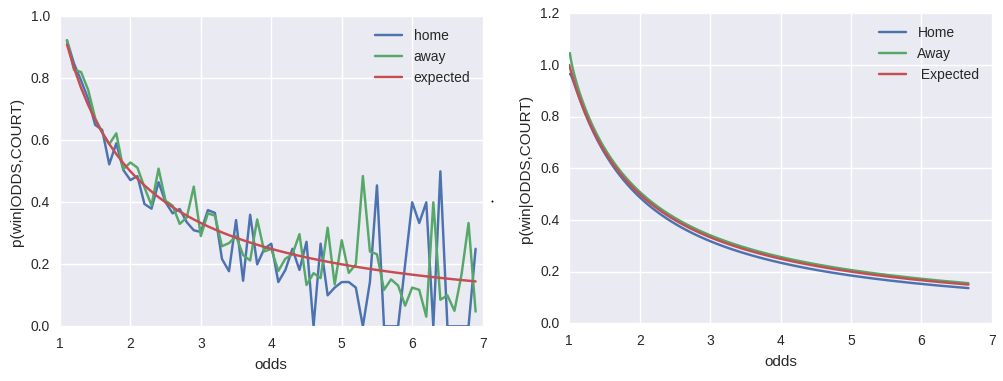}
\caption{Raw projections of the data to the probability space may display artifacts with the increasing odds and the respective decrease in sample size (left). When smoothed out with a Beta distribution prior, the bookmakers' odds distribution follows the true distribution very closely (right).}
\label{fig:odss2prob}
\end{figure}

% \todo{obrazek histogram z diplomky}

% \todo{nejak ospravedlnit proc mame $\EX(\takerRV|\realRV)$ vs $\EX(\realRV|\takerRV)$, napr. graf s diagonal lines do apendixu?}

% \todo{rict ze zbyva jen co/variance a ze ta je tim padem partial}

\subsection{Experimental Setup}
\label{sec:experiments_setup}
% \todo{popsat kratce nasi neuronku}
Here we describe the particular learning and investment strategy used, and the experimental protocol followed.

\paragraph{Learning Model}

For the predictive modelling, we used a variant of a convolutional network~\cite{fukushima1982neocognitron}, described in~\cite{hubacek2017thesis}. The input to the network are two matrices (for the home and away teams, respectively), with players in rows and all the player features in the columns. The rows are sorted by the time-in-play of the corresponding players, and only the top 10 players w.r.t. this criterion are included. The convolutional layer is then defined by a vector of 10 tunable real weights. The output of the layer is a vector where each component is the dot product of the former vector with one of the matrix columns. The vector may alternatively be viewed as a filter sliding horizontally on the the first input matrix, and then on the second.
Intuitively, the convolutional network may be perceived as a learnable aggregation function extracting all the player-related variables into some latent ``team-level'' feature space, which the subsequent fully-connected layers transform into the final home-away binary prediction.

\paragraph{Error Function}
In sports betting, one cannot observe the true probability of an opportunity, even retrospectively. Instead, the label $\realVal\in\{\alpha,\beta\}$ provided with each learning sample $\omega_i$ reflects merely the binary outcome realization $\zeta$ endowed by the underlying probability. This does not pose a problem to standard classification error measures (even $MSE$), as a properly regularized model will naturally regress the underlying probability while trying to predict the discrete outcome labels (encoded e.g. as $\{0,1\}$) over a large enough data sample. However, it does pose a problem to the extra decorrelation term (Equation~\ref{eq:loss_fcn}), which would degenerate into a mere multiplication of the estimates (and their complements). To accommodate the proposed $MSE^*$ loss into this particular setting, lacking measurement of the true probability, we thus alter it slightly to
\begin{equation*}
\label{eq:new_loss}
    {MSE}^*_{{\Omega}}(\realRV,\makerRV,\takerRV) = \EX [(\takerRV - \realRV)^2 - \gamma \cdot (\takerRV - \makerRV)^2] = \frac{1}{|{\Omega}|} \sum_{\omega_i \in {\Omega}} (\takerVal - \realVal)^2 - \gamma \cdot (\takerVal - \makerVal)^2
\end{equation*}
where the $\makerVal$ are the bookmaker's probability estimates induced from the published odds $o_i$. We thus simply try to minimize the partial covariance of the model by penalizing estimates that are too close to the market price. We note that this is different from the direct penalization of the covariance of the residuals in Equation~\ref{eq:loss_fcn}. Nevertheless, the motivation is that a similar effect should be achieved through the integration with the remaining $MSE$, pushing the estimates $\takerVal$ to be unbiased w.r.t. $\realVal$.

\paragraph{Betting Strategy}
% \label{sec:def:strategy}
% \todo{tady vysvetlit uz konkretne nase MPT}

% ...using the concepts of \emph{portfolio theory} proposed in the field of economics by \cite{markowitz1952}. As motivated earlier, a strategy should weigh in both the expectation and the variance of the profit. The portfolio optimization strategy  seeks the Pareto front of $\bm{b}$'s with respect to $\EX[P]$ and $\var[P]$, i.e., the set of all bet distributions not dominated in terms of both of the factors.
Firstly, we evaluated the models by betting with the basic uniform investment strategy (Section~\ref{sec:strat:unit}), for which we derived the theoretical reasoning in Section~\ref{sec:proof}. We further refer to this simple strategy as~\un.
Additionally, we used the classic modern portfolio theory approach to investigate compatibility of the decorrelation concept with standard investment practices. Here, we needed co calculate the expected returns and their covariances (Section~\ref{sec:MPT}). Following the probabilistic setting detailed in~Equation~\ref{eq:bet_exp_profit}, the expected profit can be defined as
\begin{equation}
    \EX_{R}[w_i] = \realVal \cdot (\frac{f_i}{\makerVal} - 1) + (1-\realVal) \cdot (-f_i) = (\frac{\realVal}{\makerVal}-1)\cdot f_i
\end{equation}
Assuming further that there is no correlation between the assets, corresponding to the underlying sport matches, it is sufficient to consider only the profit variances of the individual independent opportunities\footnote{Recall we can only bet on one of the exclusive game outcomes.}, instead of the whole covariance matrix $\Sigma$. Following from the underlying Bernoulli distribution, the profit variance of each opportunity can then be defined as
\begin{equation}\label{eq:varPi}
	\Var_{R}[w_i] = \EX[w_i^2]-\EX[w_i]^2 = (1-\realVal) \realVal f_i^2 \frac{1}{\makerVal^2}
\end{equation} 

% \todo{tu je omezeni na positive exp return? pokud ano tak napsat a vyvodit exclusive opps}

Naturally, we used the model estimates $\takerVal$ instead of the true (unknown) values $\realVal$ in the actual calculations (Section~\ref{sec:estimators}). We then chose the particular portfolio $\bm{f}$ following the Sharpe ratio criterion (Section~\ref{sec:MPT}), and used the algorithm of sequential quadratic programming \cite{nocedal2006sequential} to identify its unique maximizer. We further refer to this strategy as \opt.

\paragraph{Evaluation Protocol}

Training and evaluation of the models and betting strategies followed the natural chronological order of the data w.r.t individual seasons, i.e. only past seasons were ever used for training a model evaluated on the upcoming season. To ensure sufficient training data, the first season to be evaluated was $2006$, with a training window made up of seasons $2000$--$2005$, progressively expanding all the way to evaluation on $2014$, trained on the whole preceding range of $2000$--$2013$.
% The  number  of  games  with  complete  statistics  available  then varies  slightly,  with  each  individual season providing around $1000–1050$ matches.  
There are in total 9093 games in the evaluated seasons $2006$–-$2014$ which count towards the reported statistics.

\subsection{Results}
\label{sec:results}

The main purpose of the experiments was to test the proposed decorrelation concept (Section~\ref{sec:decorrelating}), as materialized through the augmented loss function (Section~\ref{sec:machine_learning}) with machine learning from real data (Section~\ref{sec:data}). This is to analyse the relationships between the model accuracy, correlation and the resulting profits. Particularly, we are interested in the effects of the hyperparameter $\gamma$ in the loss function (Equation~\ref{eq:loss_fcn}), controlling the trade-off between the accuracy and decorrelation. A similar trade-off effect is also to be evaluated w.r.t. inclusion of the market odds as an input feature to the predictive model (Section~\ref{sec:data}). Lastly, we are interested in the interconnection of the decorrelated models with the investment strategies, and their relative performances. The results are displayed in Table~\ref{tab:results}.

% \todo{would be nice to have x-entropies and KL advantages instead, and pearson for each level of decorrelation}

Firstly, as expected, we can observe from the results that models utilizing the highly informative odds feature achieve consistently higher accuracies. The accuracy (\% of correctly predicted outcomes) of the bookmakers, calculated by predicting the team with the smaller odds to win, leveled over the seasons at $69\pm2.5$ and was generally superior to the neural models.
Similarly, the models that included the bookmakers' odds were anticipated to be more correlated with the bookmaker. This is convincingly confirmed by measurements of Pearson coefficients which stay at $0.87$ for the models trained without odds as a feature, and $0.95$ for the models which included them.

Finally, the results provide important insights into the relationship between decorrelation and profitability. In the analytical part of this paper, we have provided a theoretical support for lowering the correlation in order to improve model profitability (Section~\ref{sec:proof}) and proposed a simple idea how to train such models from data (Section~\ref{sec:machine_learning}). However, it was not yet clear that such model properties can be actually achieved in practice against real market makers. From the results in Table~\ref{tab:results}, it is now clear that the proposed decorrelation loss (Equation~\ref{eq:new_loss}), providing a flexible trade-off between accuracy and decorrelation, can be used as an effective proxy to achieve the desired decorrelation effect against a (highly-efficient) sports betting market. We can clearly see that augmenting the standard $MSE$ with the decorrelation term generally helps both the basic \un~as well as the standard \opt~strategies to generate more profits, while there is a sweet spot in tuning the trade-off parameter $\gamma$, generating the maximal returns w.r.t. each of the strategies.

Moreover, the benefits of the standard investment strategies, such as the MPT (Section~\ref{sec:strategies}), remain effective with the technique, as argued in Section~\ref{sec:kelly_problem_frac}, while the \opt~strategy consistently dominates performance of the simpler~\un. Last but not least, with the \opt~strategy, we were consistently able to generate statistically significant positive returns, despite the clearly inferior accuracy of the predictive model.

\begin{table}
\caption{Averages and standard errors of profits (from 10 runs over seasons $2006$--$2014$) for the two strategies (\opt, \un) with accuracies of the prediction models (Section \ref{sec:experiments_setup}) across different levels of decorrelation~(Section \ref{sec:machine_learning}).}
\centering
\begin{tabular}{@{}r|cccccc|@{}}
\cmidrule(l){2-7}
\multicolumn{1}{l|}{}     & \multicolumn{3}{c|}{without odds}                      & \multicolumn{3}{c|}{with odds}                         \\ \midrule
\multicolumn{1}{|r|}{$\gamma$} &
  \multicolumn{1}{c|}{$  \sf{W}_{\opt}$} &
  \multicolumn{1}{c|}{$ \sf{W}_{\un}$} &
  \multicolumn{1}{c|}{Accuracy} &
  \multicolumn{1}{c|}{$  \sf{W}_{\opt}$} &
  \multicolumn{1}{c|}{$ \sf{W}_{\un}$} &
  Accuracy \\ \midrule
\multicolumn{1}{|r|}{0.0} & 0.38 $\pm$ 0.10  & -5.12 $\pm$ 0.11 & 67.62 $\pm$ 0.03 & -0.12 $\pm$ 0.24 & -3.83 $\pm$ 0.22 & 68.80 $\pm$ 0.06 \\ \cmidrule(r){1-1}
\multicolumn{1}{|r|}{0.2} & 1.05 $\pm$ 0.12  & -3.31 $\pm$ 0.13 & 67.47 $\pm$ 0.03 & 0.72 $\pm$ 0.13  & -2.50 $\pm$ 0.14 & 68.37 $\pm$ 0.04 \\ \cmidrule(r){1-1}
\multicolumn{1}{|r|}{0.4} & 1.74 $\pm$ 0.14  & -1.73 $\pm$ 0.18 & 67.15 $\pm$ 0.10 & 1.49 $\pm$ 0.10  & -1.30 $\pm$ 0.12 & 67.48 $\pm$ 0.10 \\ \cmidrule(r){1-1}
\multicolumn{1}{|r|}{0.6} & 1.32 $\pm$ 0.14  & -0.61 $\pm$ 0.28 & 66.19 $\pm$ 0.09 & 1.02 $\pm$ 0.20  & -1.15 $\pm$ 0.22 & 66.55 $\pm$ 0.10 \\ \cmidrule(r){1-1}
\multicolumn{1}{|r|}{0.8} & 1.10 $\pm$ 0.29  & -0.39 $\pm$ 0.22 & 64.93 $\pm$ 0.35 & 1.00 $\pm$ 0.35  & -0.45 $\pm$ 0.28 & 65.19 $\pm$ 0.27 \\ \cmidrule(r){1-1}
\multicolumn{1}{|r|}{1}   & -1.92 $\pm$ 0.81 & -2.59 $\pm$ 0.57 & 61.30 $\pm$ 0.48 & -1.22 $\pm$ 0.51 & -2.25 $\pm$ 0.30 & 61.77 $\pm$ 0.44 \\ \bottomrule
\end{tabular}

\label{tab:results}
\end{table}

% such a trade-off is effectively possible for a wide range of $0.4 \leq C \leq 0.8$ resulting into positive returns over all the models utilizing the \opt~strategy.

% \subsection{Discussion}

% \todo{Sem hazet zbytky co se nevejdou jinam...}

% % \subsubsection{Real Data}

% vysvetlit omezeni na two-way markets

% ospravedlnit spread ze neni

% pridat ze to je neco jako reinforcement learning a ne supervised learning - ze penalizujeme high/low, ale nerikame kolik to ma byt

% \todo{ze dekorelace funguje}

% Finally, the efficiency of the method in exploiting market inefficiencies based on statistical modelling from historical data is naturally increasing with stationarity of the market makers. This means that the best exploitation results can be generally expected against a single market maker whose behavior does not change much through time. This is a typical setting with a sports bookmaker operating a fixed odds market.

\section{Related Work}
\label{sec:related}
\paragraph{Portfolio optimization}
The approach of splitting the trader's workflow into the two steps of predictive modeling and investment optimization has a long tradition, and has been exploited in absolute majority of works~\cite{michaud2008efficient,jorion1992portfolio,perold1984large,noon2013kelly,thorp2008kelly,fitt2009markowitz}, with some notable exceptions~\cite{granger2000economic,leitch1991economic}.
Extracting the parameter estimation out of the portfolio optimization problem then enabled the respective economic research to thrive in an isolated mathematical environment, giving rise to the frameworks of Markowitz~\cite{markowitz1952portfolio} and Kelly~\cite{kelly1956new}, and their many successors~\cite{breiman1961optimal,whitrow2007algorithms,thorp2008kelly,latane2011criteria,noon2013kelly}. While widely adopted, the optimality of the resulting portfolios is based on rather unrealistic assumptions, which has been progressively criticized by many~\cite{holton2009markowitz,samuelson1971fallacy,michaud2008efficient,peters2016evaluating,samuelson2011we,maclean2010good}. From the perspective studied in this paper, the main underlying issue is the separation from the problem of estimation of the return (price) parameters, which are simply assumed at input. The resulting issues with uncertainty in the portfolios are then typically mitigated with additional practical methods~\cite{maclean2011medium,noon2013kelly,kan2007optimal}. There are also some principled approaches to tackle the input parameter uncertainty, such as considering the portfolio optimization problem within the framework of Bayesian decision making~\cite{browne1996portfolio,balka2017kelly,chu2018modified} or distributionally robust setting~\cite{sun2018distributional,blanchet2018distributionally}. However, to our best knowledge, all of these methods are aimed at mitigating the additional (structural) risk, stemming from the uncertainty in the input parameters, rather than increasing the profits.

\paragraph{From accuracy to profit}
Since the return parameters are assumed at input, the resulting portfolio optimality is inherently based on their quality. This has been previously nicely demonstrated within the mathematically elegant framework of Kelly~\cite{kelly1956new}, resulting into the idea of information advantage~\cite{cover2012elements}. Given the mathematical superiority of Kelly investments to any other strategy, this has further incited practitioners to focus merely on striving for a superior predictive model, while considering the investment optimization a solved problem.
This approach has been long established in the literature, where the Kelly strategy is often used to benchmark a model against a particular market maker~\cite{thorp2008kelly}. Subsequently, the problem of beating the market has been commonly perceived as even more difficult than it actually is~\cite{Song2007}.
Nevertheless, without the information advantage required to profit with Kelly, practitioners often fell for the fractional modification~\cite{boshnakov2017bivariate,baker2013forecasting}, yet without any further analysis of the implications on the relative performance of their models w.r.t. the market maker.
Given the rising interest in the Kelly strategy among traders~\cite{kim2017comparison}, the idea seems to gain further momentum, and it is not uncommon to see even researchers confuse profitability with accuracy~\cite{koopman2015dynamic,lessmann2010alternative,mchale2011bradley}.

\paragraph{Beating the betting market}
The problem of profiting on betting markets using statistical modelling has continuously attracted many researchers~\cite{trippi1992trading, Levitt2004a}. As the trading was historically dominated by humans relying on their expert knowledge, this domain became a ground for another human versus machine battle. The idea that experts bring some additional value was questioned early by comparing magazine experts against a statistical model~\cite{Simmons2000}. However, later it has been pointed out that these experts were often pressured to diversify their predictions, and showed that bookmakers under financial pressure can compete with the models~\cite{Forrest2005}.
As the betting industry grew, the focus shifted to comparing the models against the market makers. The market makers demonstrated the capability to outperform both experts and models~\cite{Song2007}. In a study across multiple sports, betting lines proved to consistently outperform the statistical models~\cite{Stekler2010}. Later, prediction markets, where multiple market makers compete for their positions, proved to be even more accurate than the bookmakers~\cite{Spann2009, franck2010prediction}. One of the common caveats when evaluating the statistical models is the usage of datasets not large enough to draw statistically significant conclusions~\cite{Haghighat2013}. While there appear to be inefficiencies in the markets~\cite{Paul2010, angelini2018efficiency}, works showing positive returns on larger datasets remain scarce~\cite{hubavcek2019exploiting}. Selecting a particular bookmaker can then have a large effect on the profitability, as their forecasting capabilities vary considerably~\cite{Strumbelj2014, angelini2018efficiency}. 

\paragraph{Profiting with inferior models}
There have been a number of works experimentally demonstrating positive profits with a, presumably or verifiably, inferior predictive model~\cite{peeters2018testing,baker2013forecasting,constantinou2012pi,koopman2015dynamic}. Nevertheless, as the main focus was again on the predictive modelling part, the phenomenon itself was not further analyzed in either of these works. Consequently, it was not clarified whether the idea is sound and the profits are not simply stemming from inappropriate measurements, such as the common problem of a small sample size~\cite{forrest2008sentiment,mchale2011bradley}.
To our best knowledge, the explicit idea of exploiting a betting market with a less accurate model was first openly tackled in~\cite{hubacek2017thesis}, the content of which was later presented in~\cite{hubavcek2019exploiting}. This is also the work upon which we directly follow in this paper with a more in-depth analysis of the decorrelation concept.
A very closely related is a recent work of~\cite{wunderlich2020betting}, where the authors also conclude that the relationship between model accuracy and betting returns has not yet been sufficiently studied, and clearly demonstrate that it is possible to generate positive returns in absence of superior predictive accuracy. They further analyse the underlying reasons through illustrative examples and a simulation, similarly to~\cite{hubavcek2019exploiting}\footnote{As the authors themselves note, the simulation in~\cite{wunderlich2020betting} differs from~\cite{hubavcek2019exploiting} in that they directly assume correlation between prediction \textit{errors}, as opposed to the predictions themselves. While this reflects the underlying principle more directly, it is also considerably unrealistic, since the error is not observable in practice, as well as the true probability required for the reported MSE calculation. Consequently, ad-hoc assumptions about the market distribution had to be made in~\cite{wunderlich2020betting}, as opposed to the beta distributions (i.e. conjugate priors to the observed Bernoulli trials) fitted directly to the market data in~\cite{hubavcek2019exploiting}, and the associated practical machine learning objective. We also note that, despite decorrelating the errors and decorrelating the estimates are different in principle, both approaches lead to the same statistical effect on profitability given the equal means ($0.5$) of all the estimators~\cite{hubavcek2019exploiting}.}. While their approach is not directly transferable into a practical setting, they do provide a more detailed study of the bookmaker's margin, n-way markets, and the problem of a small sample size~\cite{Haghighat2013}.
W.r.t.~\cite{wunderlich2020betting} and~\cite{hubavcek2019exploiting}, in this paper we aimed to provide a more thorough analysis of the relationships between the underlying model properties, market distributions, and investment strategies\footnote{including the Kelly strategy the study of which in this context was proposed to be done in~\cite{wunderlich2020betting}.}.

% + small sample size
% + margin

% my mame Kellyho

% In this paper
% Nevertheless we note that the analysis in~\cite{wunderlich2020betting} generally follows a very similar idea.

% nase papery - diplomka, IJF

% Nemci - acc vs. profit

% obecne papery na sazeni

% \todo{doplni Ondra snad}

\section{Conclusion}
\label{sec:conclusion}

% \todo{doladim az uplne nakonec, stejne jako abstract/intro}
We have introduced a statistical setting in which an inferior predictive model can be used to generate profits in a (partially) inefficient market. Importantly, we did not restrict the properties of the market (maker) any further, and demonstrated the concepts across two different domains of stock and betting markets. After detailing an extensive background on the topic, we provided some important insights into the problem of profitability w.r.t. the properties of the predictive models to show that the relationship goes well beyond the basic measures of predictive accuracy. We thoroughly analysed the relationship from the perspective of the joint distribution of the price estimators, and proved that decreasing the partial correlation of the market taker with the market maker directly benefits the trader's profitability for most common market distributions and investment strategies.

Additionally, we translated the analysis into a practical machine learning setting, and demonstrated viability of the ``decorrelation'' concept in practice. Particularly, we have shown on real world data from a highly efficient betting market that incorporating the decorrelation objective consistently improves the profits across diverse settings, and were even able to achieve positive total returns in combination with modern portfolio theory.

\subsection{Future Work}

There are lots of avenues to improve the presented work. For instance, the chosen approach of statistical explanation of the relationships between the estimators through the standard measure of correlation (covariance) follows mostly from convenience of description rather than appropriateness for the particular problem in scope. While the decorrelation concept coincides nicely with profitability in some particular settings assumed (Section~\ref{sec:decorrelating}), it would be generally more appropriate to base the analysis directly on an integral over (the discrete regions of) the vector field of profitability (Figure~\ref{fig:projection}). While correlation is a convenient intermediate link for explanation of the phenomenon, it is certainly not suited to convey the problem insights in full.

Similarly, it is clear that the proposed $MSE$-based loss function to be optimized is rooted merely in an intuition rather than formal mathematical derivation. Consequently, it is very likely that a more appropriate loss function can be derived. Particularly, the loss should take the context of the subsequent investment strategy and the corresponding utility function into account (such as the $XENT$ for Kelly). As a general direction, instead of regression of the true asset value, it would seem more appropriate to consider the proposed scenario as a robust classification problem w.r.t. $\makerRV$ and $\realRV$ from the start. This would render measures such as Hinge loss as a more viable alternative for adaptation (modification) into the proposed setting. Ultimately, it would be naturally desirable to integrate the whole workflow into a single end-to-end profit (utility) optimization problem, removing the need for the intermediate analysis and the surrogate loss function completely.

% we want a robust classifier, not an estimator...

% better error functions (hinge)...

% je jasne ze MSE je blbe, ale i samotna korelace je blba/nevhodna metrika, pouzivame ji jen protoze je to klasika ve statistice. Nejlepsi bude definovat ten relationship skrze ten adaptive Hinge...

% zakoncit s vyhledem na end2end learning of optimal portfolios...

\subsection*{Acknowledgments}

We are very grateful to Matej Uhrin for the many interesting discussions on the Kelly criterion, sharpening our understanding in defence of the decorrelation concept.

\subsection*{Notice}
If you find content of this paper useful and/or would like to collaborate, please let us know!

\newpage

\appendix

\section{Appendix}

\subsection{Simulated Data}
To further demonstrate the proposed decorrelation concept (Definition~\ref{def:decorrelation}), we present a simulation adopted from~\cite{hubavcek2019exploiting}. Here we simulated the ground truth $\realRV$ as well as both of the estimates $\takerRV,\makerRV$ with various levels of their correlation, and measured the profits made by the \opt\ and \un\ strategies for these different levels.

More precisely, we sampled triples $\realRV,\takerRV,\makerRV$ from a multi-variate Beta distribution. The distribution is parameterized with the marginal means and variances of the three variables and their pair-wise correlations. The mean of each of the three variables was set to $1/2$, reflecting the mean probability of the complementary binary outcomes. The variance of $\makerRV$ was determined as 0.054 from real bookmaker's data (Section \ref{sec:data}), and $\takerRV$'s variance copies this value. The variance of $\realRV$ was set to $0.08$ reflecting the \emph{glass ceiling} thesis\footnote{Articulated by \cite{Zimmermann2013} and expressing that sports games are predictable with a maximum of 75\% accuracy at best. When $\realRV$ is sampled with mean $1/2$ and variance $0.08$, then with 0.75 probability the event $(\realRV>1/2)$ predicts correctly the outcome of a Bernoulli trial parameterized with $\realRV$.}.

We let the correlations $Corr(\takerRV,\realRV)$, $Corr(\takerRV,\makerRV)$ and $Corr(\realRV,\makerRV)$ range over the values $\{0.85,0.90,0.95\}$. The former two represent the independent variables of the analysis, acting as factors in Table \ref{tab:correlations}, while the presented numbers average over the 3 values of $Corr(\realRV,\makerRV)$. 
For each setting of $Corr(\takerRV,\realRV)$ and $Corr(\takerRV,\makerRV)$, we drew $\realVal,\takerVal,\makerVal$ ($i=1,2,\ldots, n = 30$) samples, to simulate one round of betting. Then we set the odds $o_i = 1/\makerVal$ (the bookmaker's margin being immaterial here) for $1 \leq i \leq n$, and determined the bets $f_1,f_2,\ldots,f_{n}$ from $o_1,o_2,\ldots,o_{n}$ and $\taker_1,\taker_2,\ldots,\taker_{n}$ using the \opt\ and \un\ strategies. Finally, the match outcomes were established by a Bernoulli trial for each of the $\real_1, \real_2, \ldots, \real_n$. With these inputs, we calculated the cumulative profit $W=w_1+w_2+\ldots+w_{n}$ of one round. This procedure was repeated 10 000 times (rounds), averaging the $\mathrm{W}_{\opt}$ and $\mathrm{W}_{\un}$\footnote{Note that this is not the same (for $\mathrm{W}_{\opt}$) as setting $n=30\cdot 10 000$ without repeating the procedure, as the full unit budget is supposed to be spent in each round.}.

\begin{table}[t]
	\centering
	\begin{tabular}{llrrrrrrr}
		\toprule
		$\corr{\takerRV,\realRV}$ & $\corr{\takerRV,\makerRV}$ & $ \varnothing \mathrm{W}_{\opt}$ & $ \varnothing \mathrm{W}_{\un}$ & Accuracy  & Consensus  & Upset  & Missed  & Spotted  \\ 
		\midrule
        0.85 & 0.85 &      11.15 &    3.14 &     70.11 &      61.99 &   20.37 &    9.53 &     8.12 \\
             & 0.90 &       6.14 &    0.52 &     70.05 &      63.60 &   22.04 &    7.91 &     6.45 \\
             & 0.95 &      -1.73 &   -5.46 &     70.08 &      65.74 &   24.12 &    5.80 &     4.34 \\
        0.90 & 0.85 &      18.14 &    8.56 &     71.48 &      62.66 &   19.70 &    8.82 &     8.83 \\
             & 0.90 &      14.05 &    5.74 &     71.48 &      64.36 &   21.35 &    7.17 &     7.12 \\
             & 0.95 &       9.60 &    3.38 &     71.45 &      66.39 &   23.50 &    5.05 &     5.06 \\
        0.95 & 0.85 &      25.30 &   13.42 &     72.91 &      63.34 &   18.98 &    8.11 &     9.57 \\
             & 0.90 &      22.95 &   12.69 &     72.93 &      65.02 &   20.62 &    6.46 &     7.91 \\
             & 0.95 &      20.79 &   11.87 &     72.92 &      67.21 &   22.74 &    4.33 &     5.71 \\       
        \bottomrule
	\end{tabular}
	\caption{
	Average profits (in \%) $W_\opt$ of the	\opt\ and 	$W_\un$ of the 	\un\ strategies w.r.t. to the correlations of the	(estimated)	probabilities. Accuracy denotes the \% of correct outcome predictions by the bettor (predict win if $\takerVal>1/2$). The four last columns break down the proportions (in \%) of different combinations of predictions by $\taker$ (bettor) and $\maker$ (bookmaker):	\emph{Consensus} (both right), \emph{Upset} (both wrong), \emph{Missed} (bettor wrong, bookmaker right), \emph{Spotted} (bettor right,	bookmaker wrong).
	}
	\label{tab:correlations}
\end{table}

Table \ref{tab:correlations} then shows the average profits as well as the accuracy of the bettor's outcome predictions (call win if $\takerRV>1/2$), and the percentual breakdown of 4 possible combinations of bettor's and bookmaker's predictions. The accuracies, as well as the four latter proportions, are also averaged over all bets in all simulated rounds.

Besides the unsurprising observation that bettor's prediction accuracy grows with $Corr(\takerRV,\realRV)$, the results show that profits indeed decay systematically as the bettor's and bookmaker's predictions become more correlated (increasing $Corr(\takerRV,\makerRV)$ decreases profit). An instructive observation is that the proportion of \emph{spotted} opportunities is in all cases higher when the bookmaker's and bettor's predictions are less correlated. Moreover, we can see that the betting strategy is another independent factor strongly influencing the profit, with the \opt\ strategy being superior to the \un\ strategy.

Following the projection of the space of the $(\realRV,\makerRV,\takerRV)$ estimates onto profitability displayed Figure~\ref{fig:projection}, and analysed in Section~\ref{sec:taker_adv}, we also provide a direct 2D projection of a similar simulated betting scenario in Figure~\ref{fig:betting}. We can again observe that decreasing the correlation is directly reflected in increase of profits.

\begin{figure}

\centering
\resizebox{0.99\textwidth}{0.9\textheight}{
\includegraphics{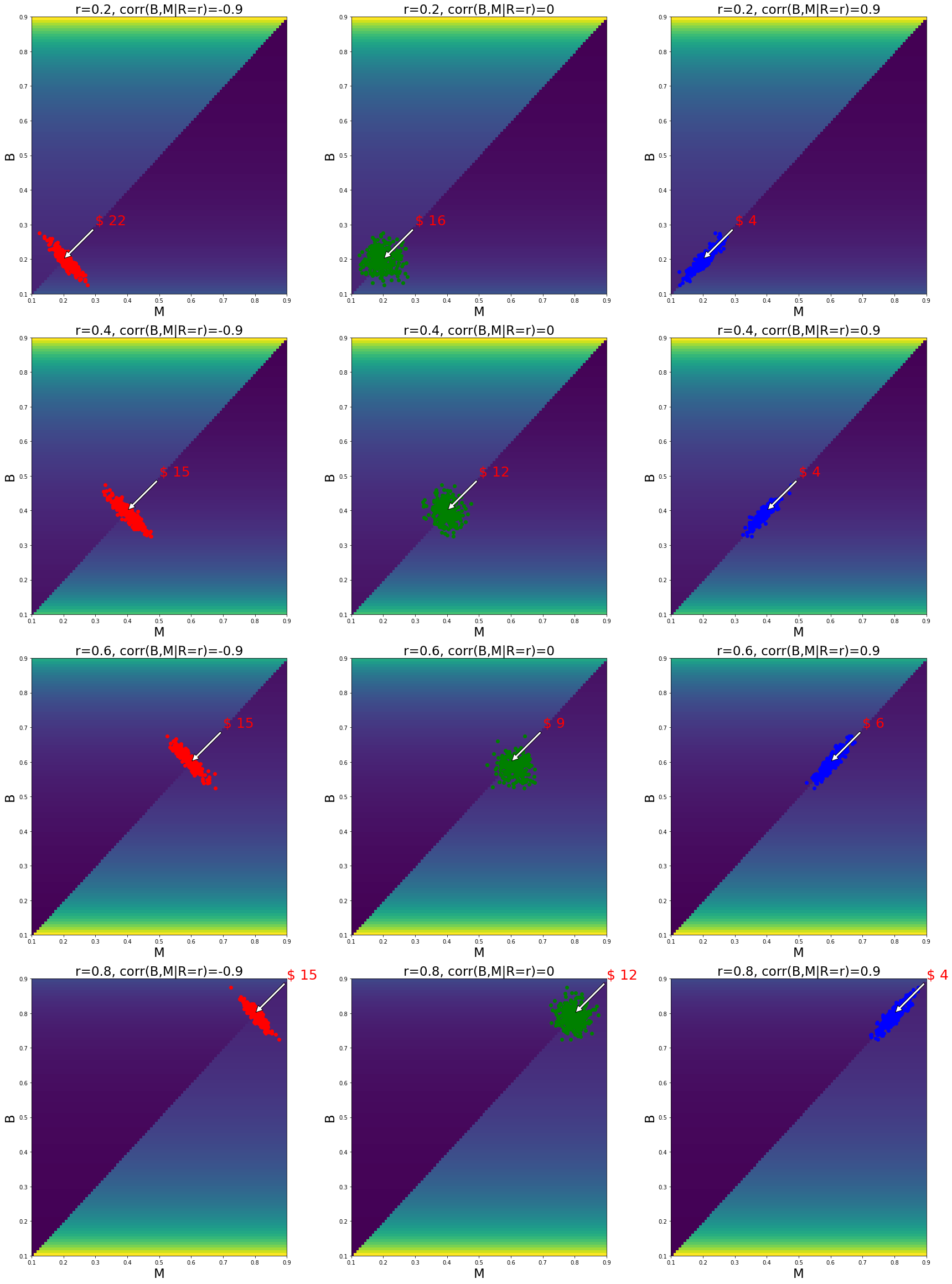}
}
\caption{Particular examples of partially correlated, independent and negatively correlated betting w.r.t. varying levels of the true probability $\real$. Measured on a random sample from a multivariate distribution with means in the respective $(\real,\real,\real)$ and equal variances of both the estimators $\var{\takerRV|\realRV}=\var{\makerRV|\realRV}$.}
\label{fig:betting}
\end{figure}

\bibliography{references}
\bibliographystyle{plain}

\end{document}